\newcommand\resumesections{
  \setcounter{section}{0}
\def\thesection{S\arabic{section}}
}
\theoremstyle{plain}
\newtheorem{thm}{Theorem}
\newtheorem{lem}[thm]{Lemma}
\newtheorem{prop}[thm]{Proposition}
\theoremstyle{definition}
\theoremstyle{remark}
\newcommand{\ind}{\mathbbm{1}}
\newcommand{\R}{\mathbb{R}}
\newcommand{\E}{\mathbb{E}}
\newcommand{\ceil}[1]{\left\lceil #1 \right\rceil}
\newcommand{\iid}{\stackrel{\mathrm{i.i.d.}}{\sim}}
\DeclareMathOperator*{\argmin}{arg\,min}
\DeclareMathOperator*{\sargmin}{sarg\,min}
\newcommand{\mb}{\mathbf}
\newcommand{\mbb}{\boldsymbol}
\title{Modelling High-Dimensional Categorical Data Using Nonconvex Fusion Penalties} 
\author[1]{Benjamin G.\ Stokell\thanks{Supported by the Cantab Capital Institute for the Mathematics of Information.}}
\author[1]{Rajen D.\ Shah\thanks{Supported by an EPSRC Programme Grant.}}
\author[2]{Ryan J.\ Tibshirani}
\affil[1]{University of Cambridge}
\affil[2]{Carnegie Mellon University}
\begin{document}

\maketitle

\begin{abstract}
\noindent

We propose a method for estimation in high-dimensional linear models with nominal categorical data. Our estimator, called SCOPE, fuses levels together by making their corresponding coefficients exactly equal. This is achieved using the minimax concave penalty on differences between the order statistics of the coefficients for a categorical variable, thereby clustering the coefficients. 
We provide an algorithm for exact and efficient computation of the global minimum of the resulting nonconvex objective in the case with a single variable with potentially many levels, and use this within a block coordinate descent procedure in the multivariate case. We show that an oracle least squares solution that exploits the unknown level fusions is a limit point of the coordinate descent with high probability, provided the true levels have a certain minimum separation; these conditions are known to be minimal in the univariate case. We demonstrate the favourable performance of SCOPE across a range of real and simulated datasets. An R package \texttt{CatReg} implementing SCOPE for linear models and also a version for logistic regression is available on CRAN.

\end{abstract}

\section{Introduction} \label{sec:intro}

Categorical data arise in a number of application areas. For example, electronic health data typically contain records of diagnoses received by patients coded within controlled vocabularies and also prescriptions, both of which give rise to categorical variables with large numbers of levels \citep{jensen2012mining}. Vehicle insurance claim data also contain a large number of categorical variables detailing properties of the vehicles and parties involved \citep{Hu2018}. When performing regression with such data as covariates, it is often helpful, both for improved predictive performance and interpretation of the fit, to fuse the levels of several categories together in the sense that the estimated coefficients corresponding to these levels have exactly the same value.

To fix ideas, consider the following ANOVA model relating response vector $Y=(Y_1,\ldots,Y_n)^T \in \R^n$ to categorical predictors $X_{ij} \in \{1, \ldots K_j\}$, $j=1,\ldots,p$: 
\begin{equation} \label{eq:ANOVA_mod}
Y_i = \mu^0 + \sum_{j=1}^p \sum_{k=1}^{K_j} \theta_{jk}^0\ind_{\{X_{ij}=k\}} + \varepsilon_i.
\end{equation}
Here the $\varepsilon_i$ are independent zero mean random errors, $\mu^0$ is a global intercept and $\theta_{jk}^0$
is the contribution
to the response of the $k$th level of the $j$th predictor; we will later place restrictions on the parameters to ensure they are identifiable.
We are interested in the setting where the coefficients corresponding to any given predictor are clustered, so defining
\begin{equation} \label{eq:s_j_def}
s_j := |\{\theta^0_{j1},\ldots,\theta^0_{jK_j}\}|,
\end{equation}
we have $s_j \ll K_j$, at least when $K_j$ is large.
Note that our setup can include high-dimensional settings where $p$ is large and many of the predictors do not contribute at all to the response: when $s_j=1$, the contribution of the $j$th predictor is effectively null as it may be absorbed by the intercept term.

\subsection{Background and motivation} \label{sec:related}
Early work on collapsing levels together in low-dimensional models of the form \eqref{eq:ANOVA_mod} focused on performing a variety of significance tests for whether certain sets of parameters were equal \citep{tukey1949comparing, scott1974cluster, calinski1985clustering}.
A more modern and algorithmic method based on these ideas is
Delete or merge regressors \citep{maj2015delete}, which involves agglomerative clustering based on $t$-statistics for differences between levels.

The CART algorithm \citep{breiman1984classification} for building decision trees effectively starts with all levels of the variables fused together and greedily selects which levels to split. One potential drawback of these greedy approaches is that in high-dimensional settings where the search space is very large, they may fail to find good groupings of the levels. 
The popular random forest procedure \citep{breiman2001random} uses randomisation to alleviate the issues with the greedy nature of the splits, but sacrifices interpretability of the fitted model.

An alternative to greedy approaches in high-dimensional settings is using penalty-based methods such as the Lasso \citep{tibshirani1996regression}. This can be applied to continuous or binary data and involves optimising an objective for which global minimisation is computationally tractable, thereby avoiding some of the pitfalls of greedy optimisation. In contrast to random forest, the fitted models are sparse and interpretable. 
Inspired by the success of the Lasso and related methods for high-dimensional regression, a variety of approaches have proposed estimating $\mbb\theta^0 = (\theta^0_{jk})_{j=1,\ldots,p,\: k=1,\ldots,K_j}$ and $\mu_0$ via optimising over $(\mu, \mbb\theta)$ a sum of a least squares criterion
\begin{equation} \label{eq:least_squares}
\ell(\mu, \mbb\theta) := \frac{1}{2n}\sum_{i=1}^n \bigg(Y_i - \mu - \sum_{j=1}^p \sum_{k=1}^{K_j} \theta_{jk}\ind_{\{X_{ij}=k\}}\bigg)^2
\end{equation}
and a penalty of the form
\begin{equation} \label{eq:all_pairs}
\sum_{j=1}^p \sum_{k=2}^{K_j} \sum_{l=1}^{k-1} w_{j,kl} |\theta_{jk} - \theta_{jl}| \,.
\end{equation}
This is the CAS-ANOVA penalty of \citet{bondell2009simultaneous}.
The weights $w_{j,kl}$ can be chosen to balance the effects of having certain levels of categories more prevalent than others in the data. The penalty is an `all-pairs' version of the fused Lasso 
and closely related to so-called convex clustering \citep{hocking2011clusterpath, chiquet2017fast}. 
We note that there are several other approaches besides using penalty functions. For instance, \citet{pauger2019bayesian} proposes a Bayesian modelling procedure using sparsity-inducing prior distributions to encourage fusion of levels. See also \citet{tutz2016regularized} and references therein for a review of other methods including those based on mixture models and kernels.

The fact that the optimisation problem resulting from \eqref{eq:all_pairs} is convex makes the procedure attractive. However, a drawback is that it may not give a desirable form of shrinkage. Indeed, consider the case where $p=1$, and dropping subscripts for simplicity, all $w_{kl}=1$. This would typically be the case if all levels were equally prevalent. Further suppose for simplicity that the number of levels $K$ is even.
Then if the coefficients are clustered into two groups where one contains only a single isolated coefficient, the number of non-zero summands in \eqref{eq:all_pairs} is only $K-1$. This almost doubles to $2(K-2)$ when one of the two groups is of size $2$. The extreme case where the two groups are of equal size yields 
 $(K/2)^2$ non-zero summands. This particular property of all-pairs penalties, which results in them favouring groups of unequal sizes, is illustrated schematically in Figure~\ref{fig:diag}.
\begin{figure}[h]
	\centering
	\includegraphics[width = 0.8\textwidth]{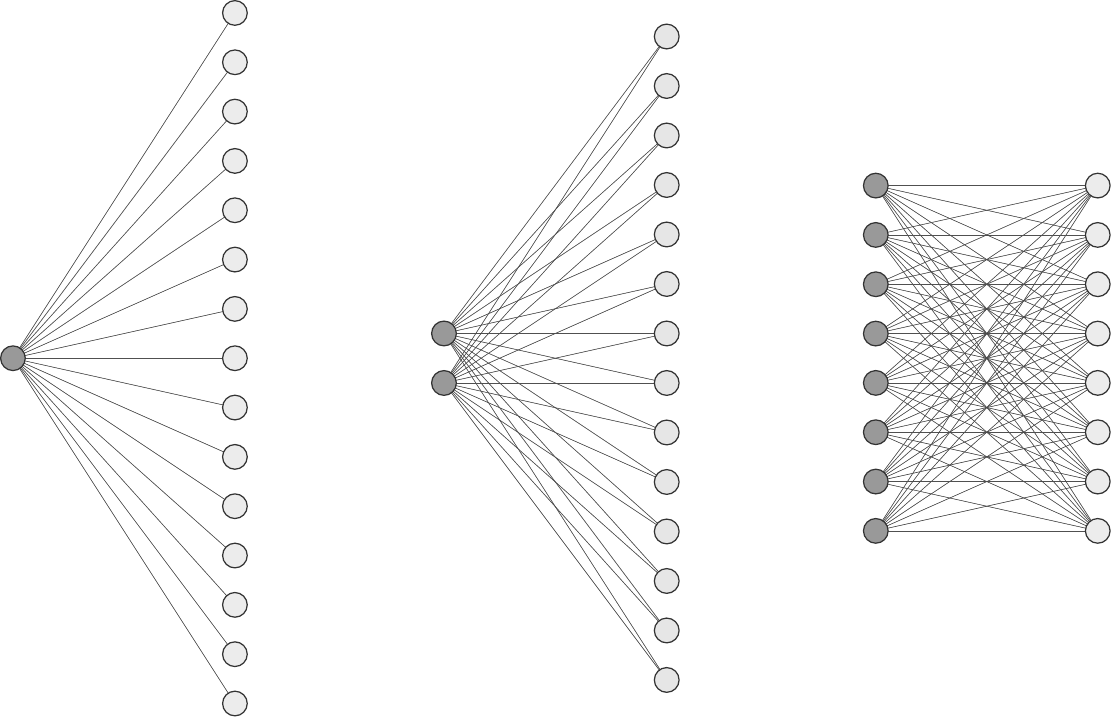}	
	\caption{Illustration of the number of non-zero summands in \eqref{eq:all_pairs} when $p=1$, $K=16$ and coefficients are clustered into two groups of equal size (right), and where one contains a single coefficient (left) and two coefficients (middle).} \label{fig:diag}
\end{figure}
We can see the impact of this in the following concrete example.
Suppose $K=20$ levels are clustered into four groups with
\begin{gather*}
	\theta^0_{1}=\cdots=\theta^0_{4}=-6, \qquad \theta^0_{5}= \cdots =\theta^0_{10}=-2.5 \\
	\theta^0_{11}= \cdots =\theta^0_{16}=2.5, \qquad \theta^0_{17}=\cdots=\theta^0_{20}=6.
\end{gather*}
If the coefficient estimates satisfy $\hat{\theta}_1 = \cdots=\hat{\theta}_4 < \hat{\theta}_5 = \cdots = \hat{\theta}_{10} \leq \hat{\theta}_k$ for all $k \geq 11$, so the first two groups have distinct coefficients, then moving any coefficient from the first group towards the second, and so increasing the number of estimated groups, actually \emph{decreases} the penalty contribution in \eqref{eq:all_pairs}. Specifically, if the $k$th coefficient for some $k \in \{1,\ldots,4\}$ moves to $\hat{\theta}_k + t$ for $t \in [0, \hat{\theta}_5 - \hat{\theta}_4]$ with all other coefficients kept fixed, the penalty contribution decreases by $13 t$.
In this case then, CAS-ANOVA will struggle to keep the groups intact, especially smaller ones. We see this in Figure~\ref{fig:earlydiagrams}, which shows the result of applying CAS-ANOVA to data generated according to \eqref{eq:ANOVA_mod} with $p=1$, $\mbb\theta^0$ as above, $n=20$ (so we have a single observation corresponding to each level), and $\varepsilon_i \iid \mathcal{N}(0, 1)$. There is no value of the tuning parameter $\lambda$  where the true groups are recovered.

\begin{figure}[h]
  \centering
  \includegraphics[width = \textwidth]{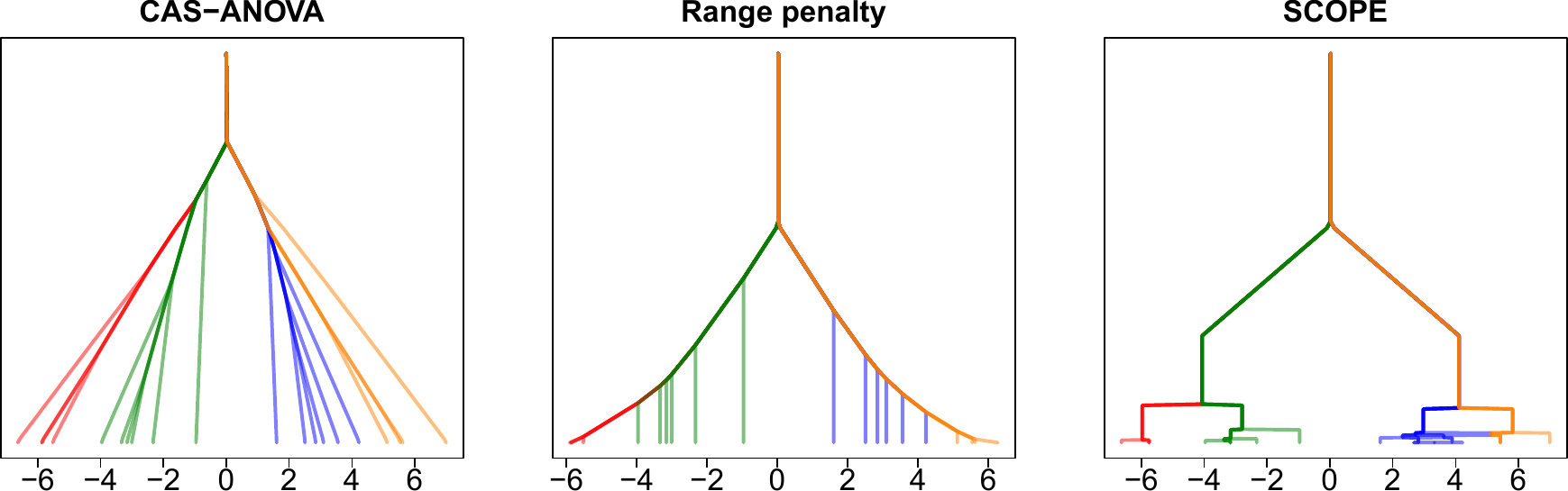}	
  \caption{Solution paths as the tuning parameter varies in a univariate example where there are four true groups. From left to right: CAS-ANOVA, the range penalty and SCOPE with $\gamma = 8$. The setup is as described in the main text of Section~\ref{sec:related}, with the different colours corresponding to the different true groups. The tuning parameter varies along the $y$ axis. In this example, only SCOPE identifies the 4 correct groups at any point along its solution path.} \label{fig:earlydiagrams}
  \end{figure}
As in the standard regression setting, the bias introduced by all-pairs $\ell_1$-type penalties may be reduced by choosing data-adaptive weights analogously to the adaptive Lasso \citep{zou2006adaptive}, or replacing the absolute value $|\theta_{jk} - \theta_{jl}|$ by $\rho(|\theta_{jk} - \theta_{jl}|)$ where $\rho$ is a concave and non-decreasing penalty function \citep{oelker2015selection, ma2017concave}. However, this does not address the basic issue of a preference for  groups of unequal sizes. 
Additionally, optimising an objective involving a penalty with $O\left(\sum_{j=1}^p K_j^2\right)$ summands can be computationally challenging, particularly in the case where $\rho$ is not convex, both in terms of runtime and memory.

To help motivate the new approach we are proposing in this paper, let us consider the setting where the predictors are ordinal rather than nominal, so there is an obvious ordering among the levels. In these settings, it is natural to consider a fused Lasso \citep{tibshirani2005sparsity}  penalty of the form
\begin{equation} \label{eq:perm_pen}
\sum_{j=1}^p \sum_{k=1}^{K_j - 1} |\theta_{j\pi_j(k + 1)} - \theta_{j\pi_j(k)}|,
\end{equation}
where $\pi_j$ is a permutation of $\{1,\ldots,K_j\}$ specifying the given order; this is done in \citet{gertheiss2010sparse} who advocate using it conjunction with the all-pairs-type CAS-ANOVA penalty for nominal categories.

If however we treat the nominal variable setting as analogous to having ordinal variables with unknown orderings $\pi_j$,  
one might initially think of choosing $\pi_j$ corresponding to the order of the estimates $\mbb\theta_j:=(\theta_{jk})_{k=1}^{K_j}$, such that $\theta_{j \pi_j(k)} = \theta_{j(k)}$, where $\theta_{j(k)}$ is the $k$th smallest entry in $\mbb\theta_j$. 
This however leads to what we refer to as the `range' penalty:
\begin{align}
 \sum_{k=1}^{K_j - 1} |\theta_{j(k + 1)} - \theta_{j(k)}| = \max_k \theta_{jk} - \min_k \theta_{jk}. \label{eq:rangepenalty}
\end{align}
Whilst this shrinks the largest and smallest of the estimated coefficients together, the remaining coefficients lying in the open interval between these are unpenalised and so no grouping of the estimates is encouraged, as we observe in Figure~\ref{fig:earlydiagrams}; see also \citet{oelker2015selection} for a discussion of this issue in the context of ordinal variables. 

\subsection{Our contributions and organisation of the paper} \label{sec:contents}
Given how all-pairs penalties have an intrinsic and undesirable preference for unequal group sizes, and how the fused Lasso applied to ordered coefficients \eqref{eq:rangepenalty} does not result in grouping of the coefficients, we propose the following solution.
Our approach is to use the penalty
\begin{align*}
	\sum_{j = 1}^p \sum_{k = 1}^{K_j - 1} \rho_j( \theta_{j(k+1)} - \theta_{j(k)}), 
\end{align*}
for concave (and nonconvex) non-decreasing penalty functions $\rho_j$,
which, for computational reasons we discuss in Section~\ref{sec:computation}, we base on the minimax concave penalty (MCP) \citep{zhang2010nearly}.
In Section~\ref{sec:scopeintro} we formally introduce our method, which we call SCOPE, standing for \textbf Sparse \textbf Concave \textbf Ordering \& \textbf Penalisation \textbf Estimator.

 Note that whereas in conventional high-dimensional regression, the use of nonconvex penalties has been primarily motivated by a need to reduce bias in the estimation of large coefficients \citep{fan2001variable}, here the purpose is very different: in our setting a nonconvex penalty is in fact even necessary for shrinkage to sparse solutions to occur (see Proposition~\ref{prop:convexpenalty}). Because of these fundamental differences, the rich algorithmic and statistical theory concerning high-dimensional regression with nonconvex penalties (see for example \citet{loh2012highdimensional,loh2015regularized, wang2014optimal, fan2018ilamm, zhao2018pathwise} and references therein) is not directly applicable to our setting.

In Section~\ref{sec:computation}, we therefore introduce a new dynamic programming approach that recovers the global minimum of the resulting objective function exactly in the univariate case, i.e.\ when $p = 1$.
We then build this into a blockwise coordinate descent approach to tackle the multivariate setting. 

In Section~\ref{sec:theory} we study the theoretical properties of SCOPE and give sufficient conditions for the estimator to coincide with the least squares solution with oracular knowledge of the level fusions in the univariate case. These conditions involve a minimal separation between unequal coefficients that is, up to constant factors, minimax optimal. Our result contrasts sharply with Theorem~2 of \citet{ma2017concave} for an all-pairs nonconvex penalty. The latter instead shows the existence of a local optimum that coincides with the oracle least squares solution. Whilst in conventional high-dimensional regression settings, it is known that under certain conditions, all local optima have favourable properties \citep{loh2015regularized}, we note that the separation requirements in \citet{ma2017concave} are substantially weaker than those indicated by the minimax lower bound, and so cannot be extended to a particular local optimum determined by the data; see the discussion following Theorem~\ref{thm:univarglobal}.

We use our univariate result to show that the oracle least squares solution is a fixed point of our blockwise coordinate descent algorithm in the multivariate case. In Section~\ref{sec:extensions} we outline some extensions of our methodology including a scheme for handling settings when there is a hierarchy among the categorical variables.  Section~\ref{sec:numerical} contains numerical experiments that demonstrate the favourable performance of our method compared to a range of competitors on both simulated and real data.
We conclude with a discussion in Section~\ref{sec:discuss}.
Further details of our algorithm can be found in the Appendix. The supplementary material contains additional information on the runtime of our algorithm, and an approximate version suitable for very large-scale settings, all the proofs, and additional information on the experiments in Section~\ref{sec:numerical}.

\section{SCOPE methodology} \label{sec:scopeintro}
Recall that our goal is to estimate parameters $(\mu^0, \mbb\theta^0)$ in model \eqref{eq:ANOVA_mod}.
Let us first consolidate some notation.
For any $\mbb\theta \in \R^{K_1} \times \cdots \times \R^{K_p}$, we define $\mbb\theta_j :=  (\theta_{jk})_{k=1}^{K_j} \in \R^{K_j}$.
We will study the univariate setting where $p=1$ separately, and so it will be helpful to introduce some simplified notation for this case, dropping any extraneous subscripts. We thus write $K \equiv K_1$, $X_i \equiv X_{i1}$ and $\rho \equiv \rho_1$.
Additionally, we let $\bar{Y}_k$ denote the average of the $Y_i$ with $X_i=k$:
\begin{equation} \label{eq:subaverage}
\bar{Y}_k = \frac{1}{n_k} \sum_{i=1}^n Y_i \ind_{\{X_{i} = k\}},
\end{equation}
where $n_k = \sum_{i=1}^n \ind_{\{X_{i} = k\}}$.

In order to avoid an arbitrary choice of corner point constraint, we instead impose the following to ensure that $\mbb\theta^0$ is identifiable: for all $j=1, \ldots, p$ we have
\begin{align}
g_j(\mbb\theta^0_j) &= 0, \text{ where } g_j(\mbb\theta_j) = \sum_{k=1}^{K_j} n_{jk}\theta_{jk} \text{ and } n_{jk} = \sum_{i=1}^n \ind_{\lbrace X_{ij}=k\rbrace }. \label{eq:identifiability}
\end{align}
Let $\Theta_j = \{\mbb\theta_j \in \R^{K_j} : g_j(\mbb\theta_j)=0\}$, and let $\Theta = \Theta_1 \times \cdots \times \Theta_p$.
We will construct estimators by minimising over $\mu \in \R$ and $\mbb\theta \in \Theta$ an objective function of the form
\begin{align*}
\tilde{Q}(\mu, \mbb\theta) = \ell(\mu, \mbb\theta) + \sum_{j=1}^p  \sum_{k=1}^{K_j-1} \rho_j(\theta_{j(k+1)} - \theta_{j(k)}),\label{eq:objective1}
\end{align*}
where $\ell$ is the least squares loss function \eqref{eq:least_squares} and $\theta_{j(1)} \leq \cdots \leq \theta_{j(K_j)}$ are the order statistics of $\mbb\theta_j$. We allow for different penalty functions $\rho_j$ for each predictor in order to help balance the effects of varying numbers of levels $K_j$.
The identifiability constraint that $\mbb\theta \in \Theta$ ensures that the estimated intercept $\hat{\mu} := \argmin_{\mu} \tilde{Q}(\mu, \mbb\theta)$ satisfies $\hat{\mu} = \sum_{i=1}^n Y_i/n$.

We note that whilst the form of the identifiability constraint would not have a bearing on the fitted values of unregularised least squares regression, this is not necessarily the case when regularisation is imposed. For example, consider the simple univariate setting with $p=1$ and the corner point constraint $\theta_{1}=0$. Then the fitted value for an observation with level $1$ would simply be the average $\bar{Y}_1$, coinciding with that of unpenalised least squares. However the fitted values with observations with other level $k \geq 2$ would be subject to regularisation and in general be different to $\bar{Y}_k$. This inequitable treatment of the levels is clearly undesirable as they may have been labelled in an arbitrary way. Our identifiability constraint treats the levels more symmetrically, but also takes into account the prevalence of levels, so the fitted values corresponding to more prevalent levels effectively undergo less regularisation.
 
As the estimated intercept $\hat{\mu}$ does not depend on the tuning parameters, we define
\begin{equation}
Q(\mbb\theta) = \frac{1}{2n}\sum_{i=1}^n \bigg(Y_i - \hat{\mu} - \sum_{j=1}^p \sum_{k=1}^{K_j} \theta_{jk}\ind_{\{X_{ij}=k\}}\bigg)^2 + \sum_{j=1}^p  \sum_{k=1}^{K_j-1} \rho_j(\theta_{j(k+1)} - \theta_{j(k)}).\label{eq:objective1}
\end{equation}
We will take the regularisers $\rho_j: [0, \infty) \to [ 0, \infty)$ in \eqref{eq:objective1} to be concave (and nonconvex); as discussed in the introduction and formalised in Proposition~\ref{prop:convexpenalty} below, a nonconvex penalty is necessary for fusion to occur.
\begin{prop}\label{prop:convexpenalty}
Consider the univariate case with $p=1$. Suppose the subaverages $(\bar{Y}_k)_{k=1}^K$ \eqref{eq:subaverage} are all distinct, and that $\rho_1 \equiv \rho$ is convex.
Then any minimiser $\hat{\mbb\theta}$ of $Q$ has $\hat{\theta}_{k} \neq \hat{\theta}_{l}$ for all $k \neq l$  such that $\hat{\theta}_{(1)}<\bar{Y}_{k} - \hat{\mu}<\hat{\theta}_{(K)}$ or $\hat{\theta}_{(1)}<\bar{Y}_{l} - \hat{\mu}<\hat{\theta}_{(K)}$.
\end{prop}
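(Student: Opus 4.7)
The plan is to argue by contradiction via a one-parameter perturbation at a suspected tie. Assume $\hat{\mbb\theta}$ minimises $Q$ and that $\hat\theta_k = \hat\theta_l =: v$ for some $k \ne l$ satisfying the hypothesis. I would consider the identifiability-preserving curve $\mbb\theta(t)$ agreeing with $\hat{\mbb\theta}$ off $\{k, l\}$ and setting $\theta_k(t) = \hat\theta_k + t$, $\theta_l(t) = \hat\theta_l - (n_k/n_l) t$. Writing $F(t) := Q(\mbb\theta(t)) = g(t) + h(t)$ for the loss and penalty parts, a direct expansion of the quadratic loss gives
$$g'(0) = \frac{n_k(\bar Y_l - \bar Y_k)}{n},$$
which is nonzero by the distinctness of the subaverages.

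The bulk of the work is in analysing $h$, which fails to be differentiable at $0$ because the positions of $k$ and $l$ within the order statistics of $\mbb\theta(t)$ interchange as $t$ crosses zero. I would first treat the \emph{interior} case where $v$ is strictly between $\hat\theta_{(1)}$ and $\hat\theta_{(K)}$ with no other coordinate equal to $v$; let $a, b$ be the adjacent distinct values immediately below and above $v$. For $|t|$ small only three consecutive gaps in the penalty sum are affected, and expanding each via one-sided derivatives of $\rho$ at the gap widths $D_1 := v - a$, $D_2 := b - v$, and at $0$, with $c := \rho'_+(0)$ and $q_i := \rho'_-(D_i)$, yields
$$h'_+(0) = (c - q_2) + \tfrac{n_k}{n_l}(c - q_1), \qquad h'_-(0) = (q_1 - c) + \tfrac{n_k}{n_l}(q_2 - c).$$
Convexity of $\rho$ forces $\rho'$ to be non-decreasing, so $c \le q_1, q_2$ and hence $h'_+(0) \le 0 \le h'_-(0)$. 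Adding to $g'(0)$ gives $F'_+(0) \le g'(0) \le F'_-(0)$; minimality at $t = 0$ demands $F'_-(0) \le 0 \le F'_+(0)$, forcing $g'(0) = 0$ and contradicting the previous display. Ties of multiplicity three or more reduce to this case because an additional tied coordinate plays the role of $a = v$ or $b = v$, and the inequalities $c \le q_i$ are unchanged.

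The principal obstacle is the boundary case $v \in \{\hat\theta_{(1)}, \hat\theta_{(K)}\}$ attained only by $k, l$: the perturbation above then shifts an extreme order statistic and contributes a term of order $c(1 + n_k/n_l)|t|$ to the penalty that cannot be cancelled by the interior terms. This is precisely where the hypothesis $\hat\theta_{(1)} < \bar Y_k < \hat\theta_{(K)}$ (or the analogue in $l$) is needed. When $v = \hat\theta_{(K)}$, the inequality $\bar Y_k < \hat\theta_{(K)}$ gives a strictly loss-decreasing direction when $\theta_k$ is reduced, and by compensating identifiability through a suitable interior coordinate $m \notin \{k, l\}$ rather than through $\theta_l$ (so that the extreme order statistic is preserved and only a single interior gap shifts), one shows the objective strictly decreases, again contradicting minimality. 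The case $v = \hat\theta_{(1)}$ is symmetric, and only one of $\bar Y_k, \bar Y_l$ needs to lie in the interior for the argument to close.
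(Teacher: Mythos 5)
Your interior case is correct, and it proceeds by a genuinely different route from the paper, which disposes of all ties at once by a single monotone truncation of the coefficient vector at the level $\bar{Y}_k$ (convexity only being used to show no consecutive gap's penalty increases), rather than by a local one-sided-derivative computation. The genuine gap is in the boundary case $v\in\{\hat{\theta}_{(1)},\hat{\theta}_{(K)}\}$, which is precisely where the interiority hypothesis is needed and where your argument is only an assertion. The step ``by compensating identifiability through a suitable interior coordinate $m$ \ldots one shows the objective strictly decreases'' cannot be made to work as stated: the loss is separable across coordinates, and for small $t$ the consecutive gaps perturbed by lowering $\theta_k$ and by raising $\theta_m$ are distinct (and even when $\hat{\theta}_m$ equals the value adjacent to $v$ the two first-order penalty changes simply add), so the one-sided derivative of $Q$ along your combined direction is exactly $D_{-e_k}Q(\hat{\mbb\theta})+(n_k/n_m)\,D_{+e_m}Q(\hat{\mbb\theta})$, where $D_{\pm e_r}$ denotes the one-sided directional derivative in the $\pm r$th coordinate direction. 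Since the penalty is invariant to adding a constant to all coordinates and every point of $\Theta$ is already optimally shifted, $\hat{\mbb\theta}$ minimises $Q$ over all of $\R^K$, so both of these directional derivatives are nonnegative: no choice of a single compensating coordinate $m$ can make the objective decrease, even weakly, to first order. The terms contributed by $m$, which your sketch does not analyse (its loss term, and the fact that moving an interior coordinate generically shifts two gaps, not one), exactly offset whatever is gained at $k$.

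The boundary case can be closed, but with a collective compensation. Lower $\theta_k$ by $t$ and raise every other coordinate by the common amount $n_kt/(n-n_k)$: this stays in $\Theta$, leaves all gaps among the other coordinates unchanged (so the direction is not a sum of single-coordinate directions and the obstruction above disappears), and, writing $a$ for the largest fitted value below $v$, the only affected gaps give penalty derivative $\bigl(1+\tfrac{n_k}{n-n_k}\bigr)\bigl(\rho'_+(0)-\rho'_-(v-a)\bigr)\le 0$, while the loss derivative is $\bigl(1+\tfrac{n_k}{n-n_k}\bigr)\tfrac{n_k}{n}(\bar{Y}_k-v)<0$ in the same centred-subaverage representation you already use for $g'(0)$ (valid here because the direction is tangent to $\Theta$). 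This yields the contradiction, with $v=\hat{\theta}_{(1)}$ symmetric; alternatively the paper's truncation $\tilde{\theta}_r=\hat{\theta}_r\wedge\bar{Y}_k$ treats interior and extreme ties simultaneously and avoids the case split altogether. A minor further quibble: your reduction of multiplicities of three or more by letting a tied coordinate ``play the role of $a=v$ or $b=v$'' is not literally meaningful ($\rho'_-(0)$ is undefined), although redoing the gap bookkeeping does show your one-sided formulas for $h$ are unchanged, so that part of the conclusion stands.
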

We base the penalties $\rho_j:[0, \infty) \to [0, \infty)$ on the MCP (Minimax Concave Penalty) \citep{zhang2010nearly}:
\begin{equation*}
\rho(x) = \rho_{\gamma, \lambda}(x) = \int_0^x \lambda \left( 1 - \frac{t}{\gamma \lambda }\right)_+ dt,
\end{equation*}
where $(u)_+ = u \ind_{\{u \geq 0\}}$. This is a piecewise quadratic function with gradient $\lambda$ at $0$ and flat beyond $\gamma \lambda$.
For computational reasons which we discuss in Section~\ref{sec:computation}, the simple piecewise quadratic form of this is particularly helpful.
In the multivariate case we take $\rho_j = \rho_{\gamma, \lambda_j}$ with $\lambda_j = \lambda \sqrt{K_j}$. This choice of scaling is motivated by requiring that when $\mbb\theta^0=0$ we also have $\hat{\mbb\theta}=0$ with high probability; see Lemma~\ref{thm:nullconsistency} in the Supplementary material.
We discuss the choice of the tuning parameters $\lambda$ and $\gamma$ in Section~\ref{sec:tuning}, but first turn to the problem of optimising \eqref{eq:objective1}. 

\section{Computation} \label{sec:computation}
In this section we include details of how SCOPE is computed. Section~\ref{sec:univariate} motivates and describes the dynamic programming algorithm we use to compute global minimiser of the SCOPE objective, which is highly non-convex.
Section~\ref{sec:bcd} contains details of how this is used to solve the multivariate objective by embedding it within a blockwise coordinate descent routine.
 Discussion of practical considerations is contained in Section~\ref{sec:tuning}.

\subsection{Univariate model} \label{sec:univariate}
\subsubsection{Preliminaries} \label{sec:preliminaries}
We now consider the univariate case ($p = 1$) and explain how the solutions are computed.
In this case, we may rewrite the least squares loss contribution to the objective function in the following way.
\begin{align}
		\frac{1}{2n}\sum_{i=1}^{n}\bigg(Y_i - \hat{\mu} -  \sum_{k=1}^K \theta_k  \ind_{\{ X_i = k\}} \bigg)^2 &= \frac{1}{2n} \sum_{k=1}^K \sum_{i = 1}^n\ind_{\{ X_i = k \} } (Y_i - \hat{\mu} - \theta_k)^2 \notag \\
	&= \frac{1}{2} \sum_{k=1}^K w_k ( \bar{Y}_k - \hat{\mu} - \theta_k )^2 + \frac{1}{2n} \sum_{i=1}^n \sum_{k=1}^K \ind_{\{ X_j = k \}} ( Y_i - \bar{Y}_k)^2  \label{eq:alt_objective}
\end{align}
where $w_k$ = $n_k / n$. Thus the optimisation problem \eqref{eq:objective1} can be written equivalently as
\begin{align} \label{eq:univar}
	\hat{\mbb\theta} \in \argmin_{\mbb\theta \in \Theta} \frac{1}{2}\sum_{k=1}^K w_k\left( \bar{Y}_k - \hat{\mu} - \theta_k \right)^2 +  \sum_{k = 1}^{K-1} \rho \left( \theta_{(k+1)} - \theta_{(k)} \right), 
\end{align}
suppressing the dependence of the MCP $\rho$ on tuning parameters $\gamma$ and $\lambda$.
In fact, it is straightforward to see that the constraint that the solution lies in $\Theta$ will be automatically satisfied, so we may replace $\Theta$ with $\R^K$.
Two challenging aspects of the optimisation problem above are the presence of the nonconvex $\rho$ and the order statistics. The latter however are easily dealt with using the result below, which holds more generally whenever $\rho$ is a concave function.
\begin{prop} \label{thm:orderprop}
Consider the univariate optimisation \eqref{eq:univar} with $\rho$ any concave function such that a minimiser $\hat{\mbb\theta}$ exists. If for $k, l$ we have $\bar{Y}_{k} > \bar{Y}_{l}$, then $\hat{\theta}_{k} \geq \hat{\theta}_{l}$.	
\end{prop}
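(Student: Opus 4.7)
The approach is a proof by contradiction via an exchange (swap) argument. Suppose there exist indices $k, l$ with $\bar Y_k > \bar Y_l$ but $\hat\theta_k < \hat\theta_l$; set $a = \hat\theta_k$, $b = \hat\theta_l$. As a preliminary reduction, I would argue that one may take $k, l$ so that $a$ and $b$ are adjacent in the sorted order of $\hat{\mbb\theta}$: if some $\hat\theta_m$ lies in $(a, b)$, then comparing $\bar Y_m$ with $\bar Y_k$ and $\bar Y_l$ shows that at least one of $(k,m)$, $(m,l)$ also violates the monotonicity but with strictly fewer coordinates strictly between, so iterating produces an adjacent violating pair.

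The key structural fact is that the penalty $\sum_{k=1}^{K-1} \rho(\theta_{(k+1)} - \theta_{(k)})$ depends on $\mbb\theta$ only through its multiset of entries. Define $\tilde{\mbb\theta}$ to agree with $\hat{\mbb\theta}$ everywhere except $\tilde\theta_k = b$ and $\tilde\theta_l = a$. The two multisets coincide, so the penalty terms agree at $\tilde{\mbb\theta}$ and $\hat{\mbb\theta}$, and optimality of $\hat{\mbb\theta}$ reduces to an inequality on the weighted quadratic loss alone. A direct expansion of the four squares gives
\[
Q(\tilde{\mbb\theta}) - Q(\hat{\mbb\theta}) = \tfrac{1}{2}(b-a)\bigl[\,(w_k - w_l)(a+b) - 2(w_k \bar Y_k - w_l \bar Y_l)\,\bigr].
\]
In the balanced case $w_k = w_l$ this collapses to $-w_k(b-a)(\bar Y_k - \bar Y_l) < 0$, immediately contradicting optimality of $\hat{\mbb\theta}$.

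The main obstacle is the unequal-weight case, where the bare swap need not strictly decrease the loss. To handle it I would analyse the interpolation path $\theta(s) = (1-s)\hat{\mbb\theta} + s \tilde{\mbb\theta}$ for $s \in [0,1]$. Multiset symmetry gives equality of the penalty terms at $\theta(s)$ and $\theta(1-s)$, while the loss is a strictly convex quadratic in $s$ with positive curvature $(b-a)^2 (w_k + w_l)$. The adjacency from the reduction step allows me to write the right-derivative of the penalty along this direction explicitly as a linear combination of $\rho'$ values at the three neighbouring gaps $a - c$, $b - a$, $d - b$, where $c, d$ are the sorted neighbours of $a, b$. Combining (i) the first-order stationarity of $\hat{\mbb\theta}$ along this direction, which pins down those $\rho'$ values via the coordinate-wise KKT conditions for $k$ and $l$, with (ii) the inequality $Q(\hat{\mbb\theta}) \le Q(\tilde{\mbb\theta})$ and (iii) monotonicity of $\rho'$ coming from concavity of $\rho$, should yield an inequality that forces $\bar Y_k \le \bar Y_l$, contradicting the hypothesis. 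In essence the swap leaves the penalty invariant but moves mass in a direction that, once the weights are reweighted by the convexity of the loss and concavity of $\rho$, necessarily reduces the total objective whenever the input ordering is wrong.
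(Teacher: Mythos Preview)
Your full-swap argument is the right instinct, and it does dispose of the equal-weight case cleanly. But the unequal-weight case is not proved: the interpolation/KKT sketch is not carried out, and I do not see how it closes. You would need to combine one-sided directional derivatives of a nondifferentiable penalty (kinks from $\rho$ and from the order statistics) with the swap inequality $L(0)\le L(1)$ and monotonicity of $\rho'$, and there is no evident chain of inequalities that forces $\bar Y_k\le\bar Y_l$. The adjacency reduction is fine but ultimately a detour.

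The paper's proof sidesteps the weight issue by moving only \emph{one} coordinate rather than swapping both. With $\bar Y_k>\bar Y_l$ and $a=\hat\theta_k<\hat\theta_l=b$, at least one of
\[
|a-\bar Y_k|>|b-\bar Y_k|\quad\text{or}\quad|b-\bar Y_l|>|a-\bar Y_l|
\]
must hold (else $\bar Y_k\le(a+b)/2\le\bar Y_l$). If the first holds set $\tilde\theta_k=b$; if the second, set $\tilde\theta_l=a$. Either way exactly one squared-error term strictly decreases and the rest are untouched, so the loss strictly drops \emph{regardless of $w_k,w_l$}. The cost is that the multiset changes---one value is deleted and a duplicate is created---so the penalty is not automatically preserved. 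This is where concavity enters: deleting an interior order statistic merges two adjacent gaps $g,g'$ into $g+g'$, while the new duplicate introduces a zero gap, and concavity gives $\rho(g)+\rho(g')\ge\rho(g+g')+\rho(0)$. Hence the penalty does not increase, and optimality is contradicted.

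So the missing idea is the one-sided ``collapse'' instead of the full swap: it makes the loss comparison weight-free at the price of a penalty comparison that concavity handles in one line, and it removes any need for KKT analysis or the adjacency reduction.
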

This observation substantially simplifies the optimisation: after re-indexing such that $\bar{Y}_1 \leq \bar{Y}_2 \leq \cdots \leq \bar{Y}_K$, we may re-express \eqref{eq:univar} as,
\begin{align} \label{eq:univarobjordered}
	\hat{\mbb\theta} \in \argmin_{\mbb\theta : \theta_1 \leq \cdots \leq \theta_K} \left\{ \frac{1}{2}\sum_{k=1}^K w_k\left( \bar{Y}_k- \hat{\mu} - \theta_k \right)^2 +  \sum_{k = 1}^{K-1} \rho \left( \theta_{k+1} - \theta_{k} \right)\right\}.
\end{align}

We use the following intermediate functions to structure the algorithm:
\begin{align}
f_1(\theta_1) &= \frac{1}{2}w_1( \bar{Y}_1 - \hat{\mu} - \theta_1 )^2, \notag\\
f_k(\theta_k) &= \min_{\theta_{k-1} : \theta_{k-1} \leq \theta_k} \{f_{k-1}(\theta_{k-1}) + \rho(\theta_{k} - \theta_{k-1})\} + \frac{1}{2}w_k(\bar{Y}_k - \hat{\mu} - \theta_k)^2,  \label{eq:f_k_def}\\ 
b_k(\theta_{k}) &= \sargmin_{\theta_{k-1} : \theta_{k-1} \leq \theta_{k}} \{f_{k-1}(\theta_{k-1}) + \rho(\theta_{k} - \theta_{k-1})\}, \notag
\end{align}
for $k=2,\ldots,K$; here $\sargmin$ refers to the smallest minimiser in the case that it is not unique. Invariably however this will be unique, as the following result indicates.
\begin{prop} \label{prop:uniquewhp}
	The set of $(\bar{Y}_k)_{k=1}^K$ that yields distinct solutions to \eqref{eq:univar} has Lebesgue measure zero as a subset of $\R^K$.
\end{prop}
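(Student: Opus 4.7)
The plan is to stratify the minimization by the combinatorial ``configuration'' of the minimizer: within each configuration the problem reduces to a strictly convex quadratic whose minimizer is an affine function of $\bar{Y}$, and then non-uniqueness corresponds to a polynomial tie between two configurations, which is a set of Lebesgue measure zero.

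First I would exclude the (clearly measure-zero) sets where either $\bar{Y}_k = \bar{Y}_l$ for some $k \neq l$ or some optimal gap equals $\gamma\lambda$, so that WLOG $\bar{Y}_1 < \cdots < \bar{Y}_K$. By Proposition~\ref{thm:orderprop} any minimizer then satisfies $\hat{\theta}_1 \leq \cdots \leq \hat{\theta}_K$, and its equal-value classes form consecutive blocks $G_1, \ldots, G_m$. Define a \emph{configuration} $C$ to be such a partition together with, for each $i=1,\ldots,m-1$, a label $\ell_i \in \{Q, F\}$ specifying whether the gap $\alpha_{i+1} - \alpha_i$ lies in $(0, \gamma\lambda)$ (quadratic part of the MCP) or $(\gamma\lambda, \infty)$ (flat part). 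The number of configurations is finite.

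For a fixed configuration $C$, the objective restricted to $\mbb\theta$ with that pattern, expressed in the block values $\alpha \in \R^m$, is, up to an additive constant in $\bar{Y}$,
\begin{align*}
F_C(\alpha;\bar{Y}) = \tfrac{1}{2}\sum_{i=1}^m W_i(\bar{Y}_{G_i} - \alpha_i)^2 + \sum_{i:\ell_i = Q}\!\left[\lambda(\alpha_{i+1} - \alpha_i) - \tfrac{(\alpha_{i+1}-\alpha_i)^2}{2\gamma}\right],
\end{align*}
with $W_i = \sum_{k\in G_i} w_k$ and $\bar{Y}_{G_i} = W_i^{-1}\sum_{k \in G_i} w_k\bar{Y}_k$. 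The Hessian in $\alpha$ is $\mathrm{diag}(W_1,\ldots,W_m) - \gamma^{-1} L_Q$, where $L_Q$ is the graph Laplacian formed from the Q-edges. Configurations whose Hessian fails to be positive definite cannot yield interior minimizers (the restricted quadratic would have no minimum in the open feasible region) and may be discarded. Otherwise the critical point $\phi_C(\bar{Y})$ is the unique solution of a linear system, hence affine in $\bar{Y}$, and $F_C^\star(\bar{Y}) := F_C(\phi_C(\bar{Y}); \bar{Y})$ is a quadratic polynomial in $\bar{Y}$.

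The crux is the envelope identity: since $\phi_C$ satisfies the first-order conditions of the restricted problem, a standard computation gives
\begin{align*}
\nabla_{\bar{Y}} F_C^\star(\bar{Y}) = W\bigl(\bar{Y} - \tilde\phi_C(\bar{Y})\bigr),
\end{align*}
where $W = \mathrm{diag}(w_1, \ldots, w_K)$ and $\tilde\phi_C(\bar{Y}) \in \R^K$ extends $\phi_C(\bar{Y})$ by placing $\alpha_i$ on every coordinate in $G_i$. Suppose now $\bar{Y}$ admits two distinct minimizers with configurations $C$ and $C'$ (necessarily distinct, since within a configuration with PD Hessian the minimizer is unique), so $\tilde\phi_C(\bar{Y}) \neq \tilde\phi_{C'}(\bar{Y})$ and $F_C^\star(\bar{Y}) = F_{C'}^\star(\bar{Y})$. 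If $F_C^\star - F_{C'}^\star$ were identically zero on $\R^K$, the envelope identity and invertibility of $W$ would force $\tilde\phi_C \equiv \tilde\phi_{C'}$ --- a contradiction. Hence $F_C^\star - F_{C'}^\star$ is a non-trivial polynomial on $\R^K$, so its zero set has Lebesgue measure zero; taking the union over the finitely many ordered pairs $(C, C')$ finishes the proof.

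The main technical obstacle I anticipate is clean handling of configurations with indefinite or singular restricted Hessian (ruled out above from being active minimizers, but it is worth verifying rigorously, perhaps by a small strongly convex perturbation) and the careful bookkeeping of the extension $\phi_C \mapsto \tilde\phi_C$, which is what ultimately enables the envelope-theorem deduction $\tilde\phi_C \equiv \tilde\phi_{C'}$ from $F_C^\star \equiv F_{C'}^\star$.
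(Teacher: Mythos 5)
Your route is genuinely different from the paper's and its core is viable: stratifying by the tie/label ``configuration'', observing that for a positive-definite restricted Hessian the minimiser is affine in $\bar{Y}$ and the optimal value is a quadratic polynomial, and using the envelope identity to show that a tie between two distinct configurations lives on the zero set of a \emph{non-trivial} polynomial. However, there is one concrete gap: your dismissal of configurations whose Hessian $\mathrm{diag}(W_1,\ldots,W_m) - \gamma^{-1}L_Q$ is not positive definite. The claim that such configurations ``cannot yield interior minimisers'' is correct when the Hessian has a negative eigenvalue (no critical point of a quadratic with an indefinite Hessian is a local minimum), but it is false in the positive-semidefinite singular case: there the restricted quadratic can attain its minimum on a whole affine set inside the open region, producing non-unique minimisers within a single configuration and breaking the definition of $\phi_C$. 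This case cannot be waved away, because the Hessian depends only on the fixed weights $w_k$ and on $\gamma$ (not on $\bar{Y}$), so for particular $(\gamma, w)$ it genuinely occurs (e.g.\ two blocks joined by a Q-edge with $W_1=W_2=2/\gamma$); a ``strongly convex perturbation'' changes the problem and does not obviously help. The fix is short but must be supplied: an interior local minimum of a quadratic requires a critical point, which for singular $H_C$ exists only if the linear term $b_C(\bar{Y})$ is orthogonal to $\ker H_C$; since $\bar{Y}\mapsto b_C(\bar{Y})$ is affine with block-sums $\sum_{k\in G_i}w_k\bar{Y}_k$ (all $w_k>0$) in its linear part, this is a non-trivial affine condition, so such $\bar{Y}$ lie in finitely many hyperplanes and can be absorbed into the exceptional set.

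Two smaller points. First, the term you drop when passing to block variables, $\tfrac12\sum_k w_k\bar{Y}_k^2 - \tfrac12\sum_i W_i\bar{Y}_{G_i}^2$, depends on the partition $C$, so the tie condition and the envelope identity must be stated for the full optimal value $F_C^\star(\bar{Y})$ plus this $C$-dependent term; your displayed gradient formula $W(\bar{Y}-\tilde\phi_C(\bar{Y}))$ is exactly the identity for that full value, so the argument goes through, but as written the bookkeeping is off by this term. Second, excluding at the outset the $\bar{Y}$ for which ``some optimal gap equals $\gamma\lambda$'' is not justified (it is a statement about minimisers, not an evidently null set); fortunately it is also unnecessary, since the MCP is $C^1$ at $\gamma\lambda$, so a minimiser with such a gap is still a critical point of the adjoining quadratic piece and can be assigned to either label. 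For comparison, the paper's proof works coordinate-wise rather than globally: it takes two distinct minimisers, isolates the largest index $k^*$ at which they differ, shows via Proposition~\ref{thm:orderprop} and Lemma~\ref{lem:optim_prop} that both values minimise the same piecewise quadratic whose pieces do not involve $\bar{Y}_{k^*}$, and equates the two piecewise-minimal values to get a quadratic equation in the single variable $\bar{Y}_{k^*}$, finishing with a Fubini-type union-of-graphs argument; that slicing never requires examining Hessians of block-restricted problems, which is precisely where your version needs the extra care above.
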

We will thus tacitly assume uniqueness in some of the discussion that follows, though this is not required for our algorithm to return a global minimiser.
Observe now that $\hat{\theta}_K$ is the minimiser of the univariate objective function $f_K$: indeed for $k \geq 2$,
\begin{align}
f_k(\theta_k) = \min_{(\theta_1,\ldots,\theta_{k-1}) : \theta_1 \leq \cdots \leq \theta_{k-1}\leq \theta_k } \bigg\{ \frac{1}{2} \sum_{l=1}^{k} w_l(\bar{Y}_l -\hat{\mu} - \theta_l)^2 + \sum_{l=1}^{k-1}\rho(\theta_{l+1} - \theta_l) \bigg\}. \label{eq:fkfull}
\end{align}
Furthermore, we have $\hat{\theta}_{K-1} = b_{K}(\hat{\theta}_K)$, and more generally $\hat{\theta}_k = b_{k+1}(\hat{\theta}_{k+1})$ for $k=K-1,\ldots,1$. Thus provided $f_K$ can be minimised efficiently (which we shall see is indeed the case), given this and the functions $b_2,\ldots,b_K$ we can iteratively compute $\hat{\theta}_K, \hat{\theta}_{K-1},\ldots,\hat{\theta}_1$. In order to make use of these properties, we must be able to compute $f_K$ and the $b_k$ efficiently; we explain how to do this in the following subsection.

\subsubsection{Computation of $f_K$ and $b_2,\ldots,b_{K}$} \label{sec:comp_details}
The simple piecewise quadratic form of the MCP-based penalty is crucial to our approach for computing the $f_K$ and the $b_k$. Some important consequences of this piecewise quadratic property are summarised in the following lemma.

\begin{lem} \label{lem:optim_prop}
For each $k$,
\begin{enumerate}[(i)]
\item $f_k$ is continuous, coercive and piecewise quadratic with finitely many pieces;
\item $b_k$ is piecewise linear with finitely many pieces;
\item for each $\theta_{k+1} \in \R$, if a minimiser $\tilde{\theta}_k = \tilde{\theta}_k(\theta_{k+1})$ of $\theta_k \mapsto f_k(\theta_k) + \rho(\theta_{k+1} - \theta_k)$ over $(-\infty, \theta_{k+1}]$ satisfies 
$\tilde{\theta}_k < \theta_{k+1}$,
then $f_k$ must be differentiable at $\tilde{\theta}_k$.
\end{enumerate}
\end{lem}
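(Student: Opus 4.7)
The plan is to proceed by strong induction on $k$. The base case $k=1$ is immediate: $f_1(\theta_1) = \tfrac{1}{2}w_1(\bar Y_1-\theta_1)^2$ is a single convex quadratic, so (i) holds trivially; (ii) is vacuous; and (iii) is vacuous since $f_1$ is globally smooth. For the inductive step, fix $k\geq 2$ and assume (i)–(iii) for $f_{k-1}$. The central object is
\[
\phi(\theta_{k-1},\theta_k) \;:=\; f_{k-1}(\theta_{k-1}) + \rho(\theta_k-\theta_{k-1}),
\]
defined on $\{\theta_{k-1}\leq \theta_k\}$. Since $\rho$ has a single breakpoint at $\gamma\lambda$ and $f_{k-1}$ is piecewise quadratic (by induction) with finitely many breakpoints, $\phi$ is piecewise quadratic on a finite polygonal partition of the half-plane cut by vertical lines (breakpoints of $f_{k-1}$) and the diagonal $\theta_k-\theta_{k-1}=\gamma\lambda$.

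Next I would extract (i) and (ii) simultaneously by parametric minimisation. For each fixed $\theta_k$, the minimiser $b_k(\theta_k)$ of $\phi(\cdot,\theta_k)$ over $(-\infty,\theta_k]$ lies at one of: (a) an interior stationary point on a quadratic piece, obtained from $\partial\phi/\partial\theta_{k-1}=0$ and hence linear in $\theta_k$; (b) a breakpoint of $f_{k-1}$, which is constant in $\theta_k$; or (c) the upper constraint $\theta_{k-1}=\theta_k$, which is linear in $\theta_k$. Since there are finitely many such candidate branches, $b_k$ is piecewise linear on finitely many intervals, yielding (ii). Substituting, $g(\theta_k):=\phi(b_k(\theta_k),\theta_k)$ is piecewise quadratic; it is continuous because the coercivity of $f_{k-1}$ together with $\rho\geq 0$ makes $\phi(\cdot,\theta_k)$ coercive and hence ensures continuous dependence of the parametric minimum. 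Thus $f_k(\theta_k) = g(\theta_k) + \tfrac{1}{2}w_k(\bar Y_k-\theta_k)^2$ is continuous and piecewise quadratic, and it is coercive since $g$ is bounded below (because $f_{k-1}$ is bounded below, using its coercivity) and the quadratic penalty term in $\theta_k$ is coercive. This gives (i).

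For (iii), the real content is the auxiliary claim that $f_k$ admits only \emph{concave} kinks (i.e.\ left derivative $\geq$ right derivative at every point of non-differentiability), which I expect to be the main technical obstacle. The key structural input is that $\partial^2\phi/\partial\theta_{k-1}\partial\theta_k = -\rho''(\theta_k-\theta_{k-1})\geq 0$ because $\rho$ is concave, so $\phi$ is supermodular; by a Topkis-type monotonicity argument the interior-regime branch of $b_k$ is non-increasing in $\theta_k$. On the interior regime the envelope identity gives $g'(\theta_k) = \rho'(\theta_k - b_k(\theta_k))$, so any kink of $g$ arises either from a downward jump of $b_k$, from a transition between the interior and binding regimes, or from a kink inherited from $f_{k-1}$ when the constraint is active. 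In each case one uses that $\rho'$ is non-increasing (concavity of $\rho$) together with the inductive concavity of $f_{k-1}$'s kinks to verify $g'(\theta_k^-) \geq g'(\theta_k^+)$; adding the smooth quadratic term preserves this, so all kinks of $f_k$ are concave.

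With the auxiliary claim in hand, (iii) follows directly. Since $\tilde\theta_k<\theta_{k+1}$ we have $\theta_{k+1}-\tilde\theta_k>0$, and $\rho$ is differentiable on $(0,\infty)$, so $\theta_k\mapsto \rho(\theta_{k+1}-\theta_k)$ is differentiable at $\tilde\theta_k$. First-order optimality of $\tilde\theta_k$ as an interior minimiser of $h(\theta_k):=f_k(\theta_k)+\rho(\theta_{k+1}-\theta_k)$ yields $f_k'(\tilde\theta_k^-) \leq \rho'(\theta_{k+1}-\tilde\theta_k) \leq f_k'(\tilde\theta_k^+)$, while the auxiliary claim gives $f_k'(\tilde\theta_k^-)\geq f_k'(\tilde\theta_k^+)$. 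All inequalities must collapse, so $f_k$ is differentiable at $\tilde\theta_k$, completing the induction.
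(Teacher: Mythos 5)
Your proposal follows essentially the same strategy as the paper's proof: a strong induction in which the statement is strengthened to the invariant that $f_k$ has only concave kinks, i.e.\ $f_k'(x_-)\geq f_k'(x_+)$ everywhere, after which (iii) follows, exactly as you argue, by sandwiching $\rho'(\theta_{k+1}-\tilde{\theta}_k)$ between the one-sided derivatives of $f_k$ at an interior minimiser. The differences are in execution. The paper derives (i)--(ii) as a consequence of (iii) (an interior minimiser sits at a differentiable point, hence is affine in $\theta_{k+1}$ on each regime), while you obtain them directly by enumerating finitely many candidate branches; both work. For propagating the concave-kink invariant, the paper uses a single perturbation bound -- extrapolate the locally linear left-hand minimiser path to $x+\varepsilon$ to upper-bound $f_{k+1}(x+\varepsilon)$ by a smooth function whose derivative at $x$ is $f_{k+1}'(x_-)$ -- which handles all kink types at once, whereas you go through the envelope identity $g'(x)=\rho'\bigl(x-b_k(x)\bigr)$ plus a case analysis at branch transitions. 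On that case analysis (which you correctly flag as the crux): the Topkis/monotonicity observation is not really what does the work, since at any point where two branches cross while both remain finite nearby, the pointwise minimum automatically has a concave kink (the branch taking over on the right necessarily has the smaller derivative at the crossing); the genuinely delicate points are where a branch's effective domain ends -- e.g.\ the binding regime $b_k(x)=x$ handing over to an interior stationary point, or the MCP passing from its strictly concave to its flat part -- and there you must check that the first-order conditions force the one-sided derivatives to match (a continuous binding-to-interior transition forces both to equal $\rho'(0_+)=\lambda$, and the concave-to-flat transition forces both to equal $0$), together with the inductive hypothesis covering kinks of $f_{k-1}$ inherited along the binding branch. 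With those verifications written out your argument closes, at a level of detail comparable to the paper's own rather terse treatment of this step.
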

Properties (i) and (ii) above permit exact representation of $f_k$ and $b_k$ with finitely many quantities.
The key task then is to form the collection of intervals and corresponding coefficients of quadratic functions for
\begin{align}
g_k(\theta_{k+1}) := \min_{\theta_k : \theta_k \leq \theta_{k+1}} \{f_{k}(\theta_k) + \rho(\theta_{k+1} - \theta_k)\} \label{eq:gfunction}
\end{align}
given a similar piecewise quadratic representation of $f_k$; and also the same for the linear functions composing $b_k$. A piecewise quadratic representation of $f_{k+1}$ would then be straightforward to compute, and we can iterate this process. To take advantage of property (iii) above, in computing $g_k(\theta_{k+1})$ we can separately search for minimisers at stationary points in $(-\infty, \theta_{k+1})$ and compare the corresponding function values with $f_k(\theta_{k+1})$; the fact that we need only consider potential minimisers at points of differentiability will simplify things as we shall see below.

Suppose $I_{k,1},\ldots,I_{k,m(k)}$ are intervals that partition $\R$ (closed on the left) and $q_{k,1},\ldots,q_{k,m(k)}$ are corresponding quadratic functions such that $f_k(\theta_k) = q_{k,r}(\theta_k)$ for $\theta_k \in I_{k,r}$. Let us write
\[
\tilde{q}_{k,r}(\theta_k) = \begin{cases} q_{k,r}(\theta_k) &\text{ if } \theta_k \in I_{k,r} \\
\infty &\text{ otherwise.}
\end{cases}
\]
We may then express $f_k$ as $f_k(\theta_k) = \min_r \tilde{q}_{k,r}(\theta_k)$. We can also express the penalty $\rho = \rho_{\gamma, \lambda}$ in a similar fashion. Let
\begin{align*}
\tilde{\rho}_1(x) &:= -\gamma\lambda^2\{1-x/(\gamma\lambda)\}^2/2 + \gamma\lambda^2/2 \;\text{ if }\; 0 \leq x < \gamma\lambda \;\text{ and }\; \infty \;\text{ otherwise},\\
\tilde{\rho}_2(x) &:= \gamma\lambda^2/2  \;\text{ if }\; x \geq \gamma\lambda \;\text{ and }\; \infty \;\text{ otherwise}.
\end{align*}
Then $\rho(x) = \min_t \tilde{\rho}_t(x)$ for $x \geq 0$. Let $D_k$ be the set of points at which $f_k$ is differentiable. We then have, using Lemma~\ref{lem:optim_prop} (iii) that
\begin{align}
g_k(\theta_{k+1}) &= \min_{\theta_k: \theta_k \leq \theta_{k+1}} \{ \min_r \tilde{q}_{k,r}(\theta_k) + \min_t \tilde{\rho}_t(\theta_{k+1} - \theta_k)\} \notag\\
&= \min [\, \tilde{\min_{\theta_k \in D_k : \theta_k < \theta_{k+1}}} \min_{r,t} \{ \tilde{q}_{k,r}(\theta_k) + \tilde{\rho}_t(\theta_{k+1} - \theta_k)\}, f_k(\theta_{k+1})] \notag\\
&= \min [ \, \min_{r,t} \tilde{\min_{\theta_k \in D_k : \theta_k < \theta_{k+1}}} \{ \tilde{q}_{k,r}(\theta_k) + \tilde{\rho}_t(\theta_{k+1} - \theta_k)\}, f_k(\theta_{k+1})], \label{eq:g_k_def}
\end{align}
where $\tilde{\min}$ denotes the minimum if it exists and $\infty$ otherwise. The fact that in the inner minimisation we are permitted to consider only points in $D_k$ simplifies the form of
\begin{equation} \label{eq:u_def}
u_{k, r, t}(\theta_{k+1}) := \tilde{\min_{\theta_k \in D_k : \theta_k < \theta_{k+1}}} \{ \tilde{q}_{k,r}(\theta_k) + \tilde{\rho}_t(\theta_{k+1} - \theta_k)\}.
\end{equation}
We show in Section~\ref{sec:analyticminima} of the Appendix that this is finite only on an interval and there takes the value of a quadratic function; coefficients for this function and the interval endpoints have closed form expressions that are elementary functions of the coefficients and intervals corresponding to $\tilde{q}_{k,r}$.
With this, we have an explicit representation of $g_k$ as the minimum of a collection of functions that are quadratic on intervals and $\infty$ everywhere else. Let us refer to these intervals (closed on the left) and corresponding quadratic functions as $J_{k,1},\ldots, J_{k,n(k)}$ and $p_{k,1},\ldots,p_{k,n(k)}$ respectively.

In order to produce a representation of $f_{k+1}$ for use in future iterations, we must express $g_k$ as a collection of quadratics defined on \emph{disjoint} intervals.
To this end, define for each $x \in \R$ the \emph{active set at} $x$, $A(x) = \{r : x \in J_{k,r}\}$. Note that the endpoints of the intervals $J_{k,r}$ are the points where the active set changes and it is thus straightforward to determine $A(x)$ at each $x$. Let $r(x)$ be the index such that $g_k(x)=p_{k,r(x)}(x)$.
For large negative values of $x$, $A(x)$ will contain a single index and for such $x$ this must be $r(x)$.
Consider also for each $r \in A(x) \setminus \{r(x)\}$, the horizontal coordinate $x'$ of the first intersection beyond $x$ (if it exists) between $p_{k,r}$ and $p_{k,r(x)}$. We refer to the collection of all such tuples $(x', r)$ as the \emph{intersection set at} $x$ and denote it by $N(x)$. Given $r(x)$, $N(x)$ can be computed easily. The intersection set $N(x)$ then in turn helps to determine the smallest $x'>x$ where $r(x') \neq r(x)$ changes, that is the next knot of $g_k$ beyond $x$, as we now explain. Suppose at a point $x_{\text{old}}$, we have computed $r_{\text{old}} = r(x_{\text{old}})$. We set $x_{\text{cur}} = x_{\text{old}}$ and perform the following.
\begin{enumerate}
\item Given $r(x_{\text{cur}})$, compute $N(x_{\text{cur}})$ and set $(x_{\text{int}}, r_{\text{int}}) = \argmin_{(x, r) \in N(x_{\text{cur}})} x$.
\item If there are no changes in the active set between $x_{\text{cur}}$ and $x_{\text{int}}$, we have found the next knot point at $x_{\text{int}}$ and $r_{\text{int}} = r(x_{\text{int}})$.
\item If instead the active set changes, move $x_{\text{cur}}$ to the leftmost change point. We have that $r(x)=r_{\text{old}}$ for $x \in [x_{\text{old}}, x_{\text{cur}})$. To determine if $r(x)$ changes at $x_{\text{cur}}$, we check if
\begin{enumerate}[(i)]
\item $r_{\text{old}}$ leaves the active set at $x_{\text{cur}}$, so $r_{\text{old}} \notin A(x_{\text{cur}})$, or
\item some $r_{\text{new}}$ enters the active set at $x_{\text{cur}}$ and `beats' $r_{\text{old}}$, so $r_{\text{new}} \in A(x_{\text{cur}}) \setminus A(x_{\text{old}})$ and $p_{k, r_{\text{new}}}(x_{\text{cur}} + \epsilon) < p_{k, r_{\text{old}}}(x_{\text{cur}} + \epsilon)$ for $\epsilon>0$ sufficiently small.
\end{enumerate}
If either hold $x_{\text{cur}}$ is a knot and $r(x_{\text{cur}})$ may be computed via $r(x_{\text{cur}}) = \argmin_{r \in A(x_{\text{cur}})} p_{k, r}(x_{\text{cur}})$. If neither hold, we conclude that $r(x_{\text{cur}}) = r_{\text{old}}$ and go to step 1 once more.
\end{enumerate}
Hence we can proceed from one knot of $g_k$ to the next by comparing the values and intersections of a small collection of quadratic functions, and thereby form a piecewise quadratic representation of $g_k$ in a finite number of steps. Figure~\ref{fig:algillustration1} illustrates the steps outlined above. The pieces of $b_k$ may be computed in a similar fashion.

We note there are several modifications that can speed up the algorithm: for example, for each $r$, $u_{k,r,2}$ \eqref{eq:u_def} is a constant function where it is finite (see $p_{k, 2}$ in the figure), and these can be dealt with more efficiently. For further details including pseudocode see Section~\ref{sec:algdetail} of the Appendix.
\begin{figure}[h!]
\centering
\includegraphics[width=\textwidth]{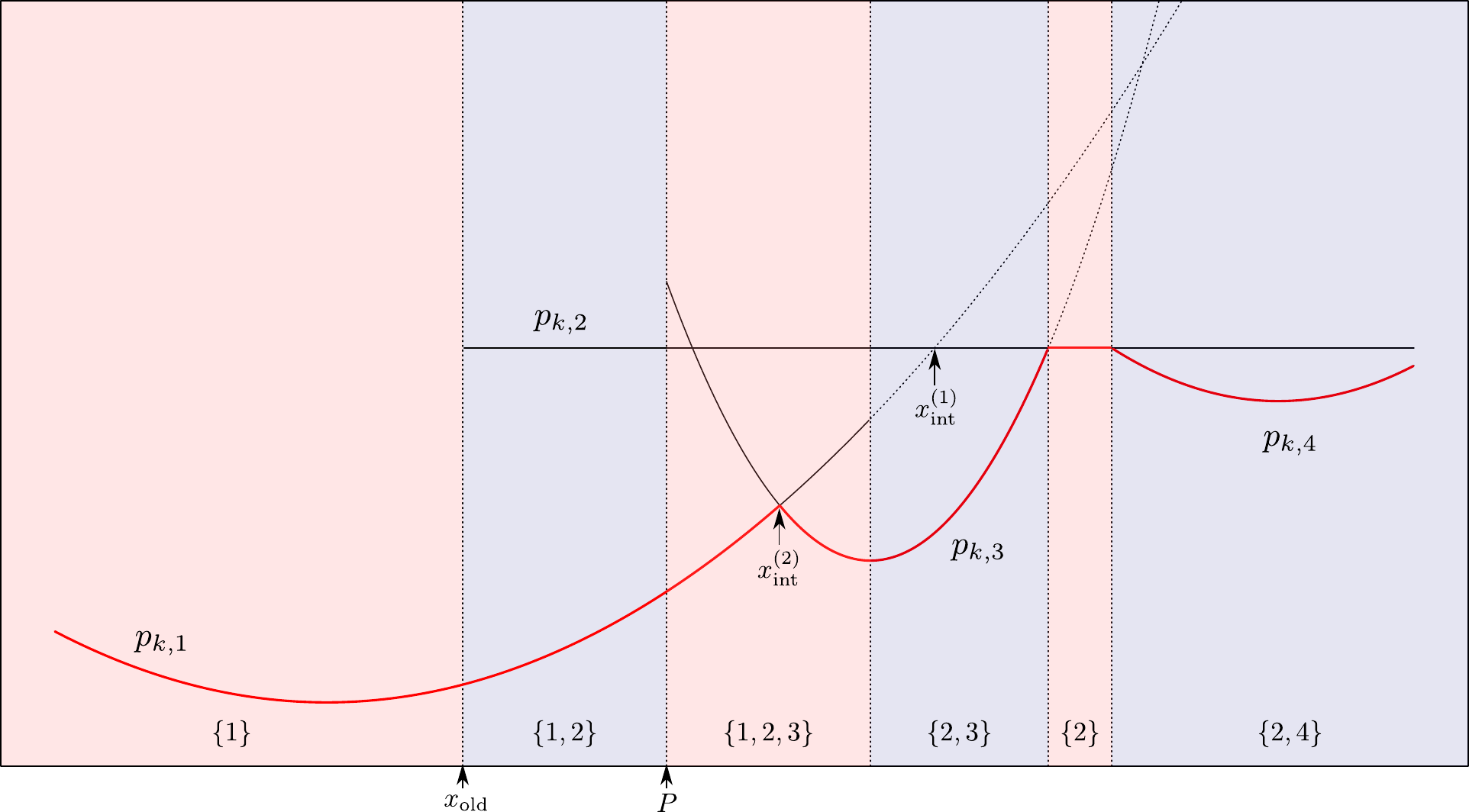}
\caption{Illustration of the optimisation problem and our algorithm, to be interpreted with reference to steps 1, 2, 3 in the main text. Shading indicates regions where the active set, displayed at the bottom of the plot, is invariant, and vertical dotted lines signify changes. Dotted curves correspond to parts of quadratic functions $p_{k,l}$ lying outside their associated intervals $J_{k,l}$.
At $x_{\text{old}}$, we have $r(x_{\text{old}})=1$, $A(x_{\text{old}})=\{1,2\}$ and $N(x_{\text{old}}) = \{(x_{\text{int}}^{(1)}, 2)\}$. Since the active set changes between $x_{\text{old}}$ and $x_{\text{int}}^{(1)}$, we move $x_{\text{cur}}$ to the first change point $P$ and  see neither (i) nor (ii) occur. We therefore return to step 1 and compute $N(x_{\text{cur}})$  which additionally contains $(x_{\text{int}}^{(2)}, 2)$. As the active set is unchanged between $x_{\text{cur}}$ and $x_{\text{int}}^{(2)}$, we have determined the next knot point $x_{\text{int}}^{(2)}$ and minimising quadratic $p_{k,3}$.
\label{fig:algillustration1}}	
\end{figure}

In summary, our algorithm produces a piecewise quadratic representation of $f_K$, which we can minimise efficiently to obtain $\hat{\theta}_K$. We also have piecewise linear representations of functions $b_2,\ldots,b_K$ through which we may iteratively obtain $\hat{\theta}_k = b_{k+1}(\hat{\theta}_{k+1})$ for $k=K-1,\ldots,1$.

It seems challenging to obtain meaningful bounds on the number of computations that must be performed at each stage of this process in terms of parameters of the data. However, to give an indication of the scalability of this algorithm, we ran a simple example with 3 true levels and found that with 50 categories the runtime was under $10^{-3}$ seconds; with 2000 categories it was still well under half a second. More details on computation time can be found in Sections~\ref{sec:comptimeexp} and \ref{sec:furtherexpdet} of the Supplementary material. In Section~\ref{sec:discretealg} of the Supplementary material, we describe an approximate version of the algorithm that can be used for fast computation in very large-scale settings.

\subsection{Multivariate model} \label{sec:bcd}
Using our dynamic programming algorithm for the univariate problem, we can attempt to minimise the objective \eqref{eq:objective1} for the multivariate problem using block coordinate descent.
This has been shown empirically to be a successful strategy for minimising objectives for high-dimensional regression with nonconvex penalties such as the MCP \citep{breheny2011coordinate, mazumder2011sparsenet, breheny2015group}, and we take this approach here. Considering the multivariate case, we iteratively minimise the objective $Q$ over $\mbb\theta_j := (\theta_{jk})_{k=1}^{K_j} \in \Theta_j$ keeping all other parameters fixed.
Then for a given $(\gamma, \lambda)$ and initial estimate $\hat{\mbb\theta}^{(0)} \in \Theta$, we repeat the following until a suitable convergence criterion is met:
\begin{enumerate}
\item Initialise $m = 1$,
and set for $i=1,\ldots,n$
\[
R_i = Y_i - \hat{\mu} - \sum_{l = 1}^p \sum_{k=1}^{K_l} \hat{\theta}_{lk}^{(m-1)}\ind_{\{X_{i l}=k\}}.
\]
\item For $j = 1, \ldots, p$, compute
\begin{align}
R^{(j)}_i &= R_i + \sum_{k = 1}^{K_j} \hat{\theta}_{jk}^{(m-1)} \ind_{\{X_{i j}=k\}} \qquad \text{for each } i,  \label{eq:firstpartialresiduals}\\
\hat{\mbb\theta}^{(m)}_j &= \argmin_{\mbb\theta_j \in \Theta_j} \bigg\{\frac{1}{2n}\sum_{i=1}^n \bigg(R^{(j)}_i - \sum_{k=1}^{K_j} \theta_{jk}\ind_{\{X_{ij}=k\}}\bigg)^2 +  \bigg(\sum_{k=1}^{K_j-1} \rho_j(\theta_{j(k+1)} - \theta_{j(k)})\bigg) \bigg\} \label{eq:iterupdate} \\
R_i &= R^{(j)}_i - \sum_{k = 1}^{K_j} \hat{\theta}_{jk}^{(m)} \ind_{\{X_{i j}=k\}} \qquad \text{for each } i.  \notag
\end{align}
\item Increment $m \to m+1$.
\end{enumerate}
We define a blockwise optimum of $Q$ to be any $\hat{\mbb\theta} \in \Theta$, such that for each $j = 1, \ldots, p$,
\begin{equation} \label{eq:block_opt}
\hat{\mbb\theta}_j \in \argmin_{\mbb\theta_j \in \Theta_j} Q(\hat{\mbb\theta}_1, \ldots, \hat{\mbb\theta}_{j-1}, \mbb\theta_j, \hat{\mbb\theta}_{j+1}, \ldots, \hat{\mbb\theta}_p).
\end{equation}
This is equivalent to $\hat{\mbb\theta}$ being a fixed point of the block coordinate descent algorithm above. Provided $\gamma > 0$, $Q$ is continuous in $\mbb\theta$. As a consequence of \citet{tseng2001convergence}, Theorem 4.1 (c), provided the minimisers $\hat{\mbb\theta}^{(m)}_j$ in \eqref{eq:iterupdate} are unique for all $j$ and $m$ (which will invariably be the case when the responses are realisations of continuous random variables; see Proposition~\ref{prop:uniquewhp}), then all limit points of the sequence $(\hat{\mbb\theta}^{(m)})_{m=0}^\infty$ are blockwise optima.

\subsection{Practicalities} \label{sec:tuning}
In practice the block coordinate descent procedure described above must be performed over a grid of $(\gamma, \lambda)$ values to facilitate tuning parameter selection by cross-validation. In line with analogous recommendations for other penalised regression optimisation procedures \citep{breheny2011coordinate,friedman2010regularization}, we propose, for each fixed $\gamma$, to iteratively obtain solutions for an exponentially decreasing sequence of $\lambda$ values, warm starting each application of block coordinate descent at the solution for the previous $\lambda$. It is our experience that this scheme speeds up convergence and helps to guide the resulting estimates to statistically favourable local optima, as has been shown theoretically for certain nonconvex settings \citep{wang2014optimal}.

The grid of $\gamma$ values can be chosen to be fairly coarse as the solutions appear to be less sensitive to this tuning parameter; in fact fixing $\gamma \in \{8, 32\}$ yields competitive performance across a range of settings (see Section~\ref{sec:numerical}). The choice $\gamma \downarrow 0$, which mimics the $\ell_0$ penalty, has good statistical properties (see Theorem~\ref{thm:univarglobal} and following discussion). However the global optimum typically has a smaller basin of attraction and can be prohibitively hard to locate, particularly in low signal to noise ratio settings where larger $\gamma$ tends to dominate.

\section{Theory} \label{sec:theory}
In this section, we study the theoretical properties of SCOPE. Recall our model
\begin{equation} \label{eq:ANOVA_mod2}
Y_i = \mu^0 + \sum_{j=1}^p \sum_{k=1}^{K_j} \theta_{jk}^0\ind_{\{X_{ij}=k\}} + \varepsilon_i
\end{equation}
for $i=1,\ldots,n$, where $\mbb\theta^0 \in \Theta$. We will assume the errors $(\varepsilon_i)_{i=1}^n$ have mean zero, are independent and sub-Gaussian with parameter $\sigma$.
Let
\[
\Theta_0 = \left\lbrace \mbb\theta \in \Theta  : \theta_{jk} = \theta_{jl} \text{ whenever } \theta^0_{jk} = \theta^0_{jl} \text{ for all } j \right\rbrace
\]
and define the \emph{oracle least squares estimate} 
\begin{align}
		\hat{\mbb\theta}^0 := \argmin_{\mbb\theta \in \Theta_{0}} \frac{1}{2n} \sum_{i=1}^n\left( Y_i - \hat{\mu} - \sum_{j=1}^p \sum_{k=1}^{K_j} \theta_{jk} \ind_{\lbrace X_{ij} = k \rbrace } \right)^2. \label{eq:oraclelse}
\end{align}
This is the least squares estimate of $\mbb\theta^0$ with oracular knowledge of which categorical levels are fused in $\mbb\theta^0$.

Note that in the case where the errors have equal variance $v^2$, the expected mean squared prediction error of $\hat{\mbb\theta}^0$ satisfies
\[
\E \left\{ \frac{1}{n} \sum_{i=1}^n \left(\hat{\mu} - \mu^0 +  \sum_{j=1}^p \sum_{k=1}^{K_j} (\hat{\theta}^0_{jk} - \theta^0_{jk})\ind_{\lbrace X_{ij} = k \rbrace } \right)^2 \right\} \leq \frac{v^2}{n} \left(1 + \sum_{j=1}^p (s_j-1)\right),
\]
with equality when $\hat{\mbb\theta}^0$ is unique.

Our results below establish conditions under which $\hat{\mbb\theta}^0$ is a blockwise optimum \eqref{eq:block_opt} of the SCOPE objective function $Q$ \eqref{eq:objective1}, or in the univariate case when this in fact coincides with SCOPE. The minimum differences between the signals defined for each $j$ by
\begin{align}
	\Delta(\mbb\theta^0_j) := \min_{k, l} \left\lbrace | \theta^0_{jk} - \theta^0_{jl} | : \theta^0_{jk} \neq \theta^0_{jl} \right\rbrace, \label{eq:minsigsep}
\end{align}
will play a key role. If all components of $\mbb\theta^0_j$ are equal we take $\Delta(\mbb\theta^0_j)$ to be $\infty$.
We also introduce $n_{j,\text{min}} = \min_k n_{jk}$,
\[
n^0_{j, \text{min}} = \min_k \sum_{l \,:\, \theta^0_{jl} = \theta^0_{jk}} n_{jl} \qquad \text{and} \qquad n^0_{j, \text{max}} = \max_k \sum_{l \,:\, \theta^0_{jl} = \theta^0_{jk}} n_{jl};
\]
these latter two quantities are the minimum and maximum number of observations corresponding to a set of fused levels in the $j$th predictor respectively.

\subsection{Univariate model} \label{sec:univartheory}
We first consider the univariate case, where as usual we will drop the subscript $j$ for simplicity. The following result establishes conditions for recovery of the oracle least squares estimate \eqref{eq:oraclelse}.

\begin{thm} \label{thm:univarglobal}
Consider the model \eqref{eq:ANOVA_mod2} in the univariate case with $p=1$. 
Suppose there exists $\eta \in (0, 1]$ such that $\eta /s \leq n^0_{j, \text{min}}/n \leq n^0_{j, \max}/n \leq 1 / \eta s$.
Let $\gamma_* = \min \{ \gamma, \eta s \}$ and $\gamma^* = \max \{ \gamma, \eta s \}$.
Suppose further that
	\begin{align}
		\Delta( \mbb\theta^0) \geq 3 \left(1 + \sqrt{2}/\eta \right) \sqrt{\gamma \gamma^*} \lambda.\label{eq:univarsigstrength}
	\end{align}
	Then with probability at least 
	\begin{align}
		 1 - 2 \exp \left( - \frac{ n_{\text{min}} \eta  s \gamma_* \lambda^2}{8\sigma^2 } + \log(K) \right), \label{eq:univartail}
	\end{align}
	the oracle least squares estimate $\hat{\mbb\theta}^0$ \eqref{eq:oraclelse} is the global optimum of \eqref{eq:objective1}, so $\hat{\mbb\theta} = \hat{\mbb\theta}^0$.
\end{thm}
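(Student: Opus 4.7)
The plan is to condition on a high-probability event controlling the subaverage deviations, use it to verify that the ordering of the $\bar{Y}_k$ respects the true groups, and then show by direct comparison that the oracle least squares estimator strictly minimises the SCOPE objective \eqref{eq:univarobjordered} on this event.

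Each centred subaverage $\bar{Y}_k - \mu^0 - \theta^0_k = n_k^{-1}\sum_{i:X_i=k}\varepsilon_i$ is sub-Gaussian with parameter at most $\sigma/\sqrt{n_{\min}}$. A standard sub-Gaussian tail bound together with a union bound over $k$ shows that, with probability at least \eqref{eq:univartail}, we are on the event
\[
\mathcal{E} := \bigl\{\,\max_k|\bar{Y}_k - \mu^0 - \theta^0_k| \leq \tfrac{1}{2}\sqrt{\eta s \gamma_*}\,\lambda\,\bigr\}.
\]
A short case check verifies $\gamma\gamma^* \geq \eta s\,\gamma_*$ in both cases $\gamma \leq \eta s$ and $\gamma > \eta s$, so by \eqref{eq:univarsigstrength} the signal gap $\Delta(\mbb\theta^0)$ exceeds twice the noise bound on $\mathcal{E}$; hence the ordering of $(\bar{Y}_k)$ respects the true groups. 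Proposition~\ref{thm:orderprop} then lets us work with the ordered parametrisation \eqref{eq:univarobjordered}, re-indexing so $\bar{Y}_1 \leq \cdots \leq \bar{Y}_K$. Within each true group $g$, the oracle estimator takes the common weighted-average value $\bar{Y}^{(g)} := \sum_{k \in g} w_k \bar{Y}_k / \sum_{k \in g} w_k$, and a further calculation using \eqref{eq:univarsigstrength} gives $\bar{Y}^{(g+1)} - \bar{Y}^{(g)} \geq \gamma\lambda$ for consecutive groups. Hence in $\hat{\mbb\theta}^0$ every between-group consecutive difference sits in the flat region of the MCP with contribution $\gamma\lambda^2/2$, while within-group differences vanish.

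Any ordered $\mbb\theta$ corresponds to a partition of $\{1,\ldots,K\}$ into consecutive fusion blocks, and for a fixed partition the squared loss is minimised by assigning each block its weighted average of $(\bar{Y}_k)$. The minimisation of $Q$ therefore reduces to comparing finitely many consecutive partitions. For a refinement of the oracle partition, each extra within-group break contributes a new positive consecutive difference and increases the penalty via $\rho'(0+) = \lambda$, while the loss decrease is controlled by the within-group subaverage spread, which on $\mathcal{E}$ is at most of order $\sqrt{\eta s\gamma_*}\,\lambda$; a direct computation shows the penalty gain dominates the loss gain under the sample-size balance and \eqref{eq:univarsigstrength}. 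For a coarsening that merges two adjacent true groups, the merged block value lies at least $\Delta(\mbb\theta^0)/2$ from one of the neighbouring oracle averages, so the loss grows by at least of order $n^0_{\min}\Delta(\mbb\theta^0)^2/n$, which dominates the saved penalty of at most $\gamma\lambda^2/2$ per merger by \eqref{eq:univarsigstrength}. General partitions combining splits and merges are handled by combining the two bounds.

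The main obstacle is the delicate bookkeeping around the MCP transition at $\gamma\lambda$: a merged block whose value happens to lie near the edge of the quadratic region can save strictly more than $\gamma\lambda^2/2$ per averted break, while a refinement with very small splits pays only a near-linear-in-$\lambda$ penalty cost. The key lever in both regimes is that on $\mathcal{E}$ the oracle group averages are cleanly separated by more than $\gamma\lambda$, forcing any non-oracle partition either to carry a loss cost of order $\Delta(\mbb\theta^0)^2$ or to pay a penalty of order $\lambda$ times a quantity that, by the scaling implicit in \eqref{eq:univartail}, provably exceeds the corresponding loss saving.
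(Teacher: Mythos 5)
Your opening moves match the paper's: the event bounding $\max_k|\bar{Y}_k-\mu^0-\theta^0_k|$ by $\tfrac12\sqrt{\eta\gamma_* s}\,\lambda$, the union bound giving \eqref{eq:univartail}, and the use of Proposition~\ref{thm:orderprop} to pass to the ordered problem \eqref{eq:univarobjordered} are all exactly as in the paper. The gap is in the optimality argument. Your reduction ``any ordered $\mbb\theta$ corresponds to a partition into consecutive blocks, and for a fixed partition the squared loss is minimised by block-wise weighted averages, so minimising $Q$ reduces to comparing finitely many partitions'' is not valid: the MCP penalty depends on the differences between consecutive block values, so for a fixed fusion pattern the minimiser of loss \emph{plus} penalty is generally not the vector of block averages (blocks whose gaps fall below $\gamma\lambda$ are shrunk toward each other). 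Consequently, comparing $Q(\hat{\mbb\theta}^0)$ only against partition-plus-block-average candidates does not rule out competitors whose coefficients are partially merged across true group boundaries or sit near the MCP transition — which, as your own closing paragraph concedes, is precisely the delicate regime. Your split/merge comparisons (``penalty gain dominates loss gain'', ``loss of order $n^0_{\min}\Delta^2/n$ dominates $\gamma\lambda^2/2$'') are asserted rather than proved, and they are only stated for the idealised block-average candidates.

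The paper closes this gap with machinery you would need to reproduce. Lemmas~\ref{thm:oraclelselemma} and \ref{thm:betalexist} show, by explicit perturbation/contradiction arguments covering the cases where a coefficient crosses a group boundary or lies in the quadratic versus flat part of the MCP, that \emph{any} global minimiser satisfies $\bar{Y}_{k_{j-1}+1}\le\hat\theta_{k_{j-1}+1}\le\hat\theta_{k_j}\le\bar{Y}_{k_j}$ and that consecutive-group gaps are at least $\gamma\lambda$; this is what legitimises replacing the between-group penalty terms by the constant $\tfrac{s-1}{2}\gamma\lambda^2$ and makes the objective separable across true groups, as in \eqref{eq:univarsepobj}. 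Then, within each group, fusion to the oracle value is not a one-line $\rho'(0+)=\lambda$ argument: it is Lemma~\ref{thm:nullconsistency}, whose proof reduces the problem to noise configurations $\xi\in\{-\tau,\tau\}^K$, optimises over the weight allocation, and invokes the scalar comparison of Lemma~\ref{thm:quadoptlem}, using exactly the balance condition $\eta/s\le n^0_{\min}/n\le n^0_{\max}/n\le 1/(\eta s)$ to verify $\sqrt{\eta\gamma_* s}\,\lambda/2\le(2\wedge\sqrt{w^0_1\gamma})\lambda/w^0_1$. Until you supply arguments of this kind for arbitrary (not block-average) competitors and for the MCP transition regime, the proof is incomplete.
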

For a choice of the tuning parameters $(\gamma, \lambda)$ with $\gamma \leq \eta s$ and $\lambda$ such that equality holds in \eqref{eq:univarsigstrength}, we have, writing $\Delta \equiv \Delta(\mbb\theta^0)$, that $\hat{\mbb\theta} = \hat{\mbb\theta}^0$ with probability at least
\[
1-2\exp \left(-c \eta^2 n_{\text{min}} \Delta^2 / \sigma^2  + \log(K) \right),
\]
where $c$ is an absolute constant.
The quantity $\eta$ reflects how equal the number of observations in the true fused levels are: in settings where the prevalences of the underlying true levels are roughly equal, we would expect this to be closer to $1$.

Consider now an asymptotic regime where $K$, $s$ and $1/\Delta$ are allowed to diverge with $n$, $n_{\text{min}} \asymp n/K$, so all levels have roughly the same prevalence, and $\eta$ is bounded away from zero, so all true underlying levels also have roughly the same prevalence. Then in order for $\hat{\mbb\theta} = \hat{\mbb\theta}^0$ with high probability, we require $\Delta \gtrsim \sigma \sqrt{K \log(K) / n}$. This requirement cannot be weakened for any estimator; this fact comes as a consequence of minimax lower bounds on mis-clustering errors in Gaussian mixture models \citep[Theorem~3.3]{lu2016statistical}.

We remark that our result here concerning properties of the global minimiser of our objective is very different from existing results on local minimisers of objectives involving all-pairs-type penalties. For example, in the setting above where $K = n$, Theorem~2 of \citet{ma2017concave} gives that provided $s = o(n^{1/3} (\log n)^{-1/3})$ and $\Delta \gg \sigma s^{3/2} n^{-1/2} \sqrt{\log(n)}$, there exists a sequence of local minimisers converging to the oracle least-squares estimate with high probability. This is significantly weaker than the condition $\Delta \gtrsim \sigma \sqrt{\log(n)}$ required for any estimator to recover oracle least-squares in this setting, illustrating the substantial difference between results on local and global optima here.

\subsection{Multivariate model} \label{sec:multivartheory}
When the number of variables is $p >1$, models can become high-dimensional, with ordinary least squares estimation failing to provide a unique solution. We will however assume that the solution for $\mbb\theta \in \Theta_0$ to
\begin{align*}
\sum_{j=1}^p\sum_{k=1}^{K_j} \theta^0_{jk} \ind_{\{X_{ij}=k\}}   = \sum_{j=1}^p\sum_{k=1}^{K_j} \theta_{jk} \ind_{\{X_{ij}=k\}} 
\end{align*}
is unique, which occurs if and only if the oracle least squares estimate \eqref{eq:oraclelse} is unique. In this case, we note that $\hat{\mbb\theta}^0 = AY$ for a fixed matrix $A$. A necessary condition for this is that $\sum_j ( s_j - 1 ) < n$.

Our result below provides a bound on the probability that the oracle least squares estimate is a blockwise optimum of the 
SCOPE objective \eqref{eq:objective1} with $\rho_j = \rho_{\gamma_j, \lambda_j}$.
This is much more meaningful than an equivalent bound for $\hat{\mbb\theta}^0$ to be a local optimum
as the number of local optima will be enormous.  
In general though there may be several blockwise optima, and it seems challenging to obtain a result giving conditions under which our blockwise coordinate descent procedure is guaranteed to converge to $\hat{\mbb\theta}^0$. Our empirical results (Section~\ref{sec:numerical}) however show that the fixed points computed in practice tend to give good performance.
\begin{thm} \label{thm:multivaroracle}
Consider the model \eqref{eq:ANOVA_mod2} and assume $\hat{\mbb\theta}^0 = AY$.
Suppose that there exists $\eta \in (0, 1]$ such that $\eta /s_j \leq n^0_{j,\text{min}}/n  \leq n^0_{j,\text{max}}/n \leq 1 / \eta s_j$ for all $j = 1, \ldots, p$. Let ${\gamma_*}_j = \min \{ \gamma_j, \eta s_j \}$ and $\gamma^*_j = \max \{ \gamma_j, \eta s_j \}$. Further suppose that 
\begin{align}
		\Delta(\mbb\theta^0_j) \geq 3 \left( \frac{4}{3} + \frac{\sqrt{2}}{\eta} \right) \sqrt{\gamma_j \gamma^*_j} \lambda_j . \label{eq:multivarsigsep}
\end{align}
Then letting $c_\text{min} := (\max_l (AA^T)_{ll})^{-1}$, with probability at least 
\begin{align}
	1 - 4 \sum_{j=1}^p \exp \left( - \frac{ (n_{j, \text{min}} \wedge c_\text{min})  \eta {\gamma_*}_j s_j \lambda_j^2}{8\sigma^2 } + \log(K_j) \right), 
	\label{eq:multivarprob} 
\end{align} 
the oracle least squares estimate $\hat{\mbb\theta}^0$
is a blockwise optimum of \eqref{eq:objective1}.
\end{thm}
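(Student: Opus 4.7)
The strategy is to reduce the blockwise optimality of $\hat{\mbb\theta}^0$ to $p$ univariate SCOPE sub-problems, one per block, and then invoke an adapted form of Theorem~\ref{thm:univarglobal}. Fix $j \in \{1,\ldots,p\}$ and consider the block-$j$ objective obtained by holding $\mbb\theta_l = \hat{\mbb\theta}^0_l$ for all $l \neq j$; this reduces to a univariate SCOPE problem with effective responses $R^{(j)}_i = Y_i - \hat{\mu} - \sum_{l \neq j}\sum_{k'} \hat{\theta}^0_{lk'} \ind_{\{X_{il}=k'\}}$ and corresponding subaverages $\bar{R}^{(j)}_k = n_{jk}^{-1}\sum_{i: X_{ij}=k} R^{(j)}_i$, exactly in the form \eqref{eq:univar}. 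Writing $e_i$ for the residuals of the full oracle least squares fit, one has $R^{(j)}_i = \hat{\theta}^0_{jk} + e_i$ when $X_{ij}=k$, hence $\bar{R}^{(j)}_k = \hat{\theta}^0_{jk} + \bar{e}_k$ with $\bar{e}_k = n_{jk}^{-1}\sum_{i: X_{ij}=k} e_i$. The KKT conditions of the multivariate constrained OLS, together with the identifiability constraint $g_j$ (which forces $\sum_i e_i = 0$ and consequently the associated Lagrange multipliers to vanish), yield $\sum_{k \in G} n_{jk}\, \bar{e}_k = 0$ for every true group $G \subseteq \{1,\ldots,K_j\}$ of variable $j$. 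This identity shows that $\hat{\mbb\theta}^0_j$ coincides with the \emph{univariate} oracle least squares estimate for the sub-problem with responses $R^{(j)}$, so it suffices to establish that this univariate oracle is the global SCOPE minimum of the sub-problem.

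Having made this identification, the task reduces to applying Theorem~\ref{thm:univarglobal} to each univariate sub-problem. The only nontrivial modification is the noise: the quantities $\bar{R}^{(j)}_k - \theta^0_{jk}$ are no longer simple averages of independent $\varepsilon_i$, but rather decompose as
\begin{equation*}
\bar{R}^{(j)}_k - \theta^0_{jk} = (\hat{\theta}^0_{jk} - \theta^0_{jk}) + \bar{e}_k,
\end{equation*}
where both summands are linear functionals of the noise vector $\varepsilon$. Specifically, $\bar{e}_k$ is a subaverage of $M\varepsilon$ where $M$ is the orthogonal projection onto the residual space of the oracle fit, and $\hat{\theta}^0_{jk} - \theta^0_{jk}$ is the corresponding coordinate of $P\varepsilon$ with $P = I - M$. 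Because $M$ is an orthogonal projection, $\|M\mathbf{1}_{\{X_{ij}=k\}}\|_2 \leq \sqrt{n_{jk}}$, so by sub-Gaussianity of $\varepsilon$ the subaverage $\bar{e}_k$ is sub-Gaussian with parameter at most $\sigma/\sqrt{n_{jk}}$; an analogous OLS variance computation in the reduced oracle model, using the hypothesis $n^0_{j,\min} \geq \eta n/s_j$, bounds the sub-Gaussian parameter of $\hat{\theta}^0_{jk} - \theta^0_{jk}$ by a comparable quantity. Running the proof of Theorem~\ref{thm:univarglobal} with the effective error written in this decomposed form, and Hoeffding-bounding each of the two pieces separately, produces the enlarged constant $4/3$ in the signal-strength requirement \eqref{eq:multivarsigsep} (compared with $1$ in the univariate condition \eqref{eq:univarsigstrength}) and the prefactor $4$ rather than $2$ in the per-block tail bound.

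The main obstacle is precisely that the effective noise is not i.i.d., so one cannot invoke Theorem~\ref{thm:univarglobal} as a black box. Instead one must revisit its proof and check that every Hoeffding-type tail bound it uses on sums or subaverages of independent $\varepsilon_i$ admits a direct analogue for the two linear functionals of $\varepsilon$ identified above, with only constant-factor inflation of the sub-Gaussian parameters; the doubling of the tail union bound from $2$ to $4$ and the strengthening of the signal constant encapsulate exactly this cost. Once this adapted univariate argument is in place, a union bound over $j \in \{1,\ldots,p\}$ combines the per-block failure probabilities into the stated bound \eqref{eq:multivarprob}, completing the proof.
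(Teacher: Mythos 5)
Your overall architecture matches the paper's: reduce blockwise optimality to the $p$ univariate problems at the partial residuals $R^{(j)}$, note that the projected noise $(I-P^0)\varepsilon$ is $\sigma$-sub-Gaussian along the relevant linear functionals so that the subaverage tail bounds in the proof of Theorem~\ref{thm:univarglobal} still apply without independence, use the orthogonality identity $\sum_{k\in G}n_{jk}\bar e_k=0$ to identify the subproblem's oracle least squares fit with $\hat{\mbb\theta}^0_j$, and finish with a union bound over $j$. The gap is in how you absorb the oracle estimation error $\hat{\theta}^0_{jk}-\theta^0_{jk}$. You treat it as a second noise term to be ``Hoeffding-bounded separately'', so the effective noise around the true signal is only bounded by $\sqrt{\eta{\gamma_*}_js_j}\lambda_j$ rather than half that, and you assert that re-running the proof of Theorem~\ref{thm:univarglobal} with this doubled bound costs exactly the constant $4/3$ in \eqref{eq:multivarsigsep} and the prefactor $4$ in \eqref{eq:multivarprob}. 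That is not substantiated, and as stated it does not go through: several steps of that proof use the threshold $\tfrac12\sqrt{\eta\gamma_* s}\lambda$ with no factor-two slack. For instance, the within-group range condition \eqref{eq:oraclelsenc1} required by Lemma~\ref{thm:oraclelselemma} would become $2\sqrt{\eta\gamma_* s}\lambda$ and violate that lemma's hypothesis; the applications of Lemma~\ref{thm:quadoptlem} inside it need $\sqrt{\eta\gamma_* s}\lambda<(1\wedge\sqrt{\gamma\tilde w_{k_j}})\lambda/\tilde w_{k_j}$, which is essentially tight when $\gamma\geq\eta s$; and the final application of Lemma~\ref{thm:nullconsistency} needs $\|\tilde Y\|_\infty\leq(2\wedge\sqrt{w^0_1\gamma})\lambda/w^0_1$, again without a spare factor of two. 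The extra $\sqrt{\gamma_j\gamma^*_j}\lambda_j$ bought by the $4/3$ only strengthens the between-group separation and does not repair these within-group/threshold conditions; shrinking both event thresholds instead would change the exponent and no longer match \eqref{eq:multivarprob}.

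The paper's bookkeeping avoids this by exploiting a fact your decomposition makes visible but does not use: $\hat{\theta}^0_{jk}-\theta^0_{jk}$ is constant within each true group, so it should be folded into the signal, not the noise. Apply Theorem~\ref{thm:univarglobal} to the $j$th subproblem with signal $\hat{\mbb\theta}^0_j$ and noise consisting only of the projected-noise subaverages (so the noise threshold is unchanged), and control the now-random separation $\Delta(\hat{\mbb\theta}^0_j)$ on the separate event $\|\hat{\mbb\theta}^0_j-\mbb\theta^0_j\|_\infty\leq\tau_j$ with $\tau_j=\tfrac12\sqrt{\eta{\gamma_*}_js_j}\lambda_j$; the triangle inequality together with \eqref{eq:multivarsigsep}, whose extra $\sqrt{\gamma_j\gamma^*_j}\lambda_j\geq2\tau_j$ is spent exactly here, yields \eqref{eq:univarsigstrength} for $\hat{\mbb\theta}^0_j$, and the union of the two families of events gives the prefactor $4$. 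With that modification your argument coincides with the paper's proof.
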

Now suppose $\gamma_j \leq \eta s_j$ and $\lambda_j$ are such that equality holds in \eqref{eq:multivarsigsep} for all $j$. Then writing $K_{\text{max}} = \max_j K_j$, $n_{\text{min}} = \min_j n_{j,\text{min}}$ and $\Delta_{\text{min}} = \min_j \Delta(\mbb\theta^0_j)$, we have that $\hat{\mbb\theta}^0$ is a blockwise optimum of \eqref{eq:objective1} with probability at least
\[
1-4\exp \left(-c \eta^2 (n_{\text{min}} \wedge c_{\text{min}}) \Delta_{\text{min}}^2 / \sigma^2  + \log(K_{\text{max}} p) \right),
\]
where $c$ is an absolute constant. Consider now an analogous asymptotic regime to that described in the previous section for the univariate case. Specifically assume $n_{\text{min}} \asymp n / K_{\text{max}}$ and $c_{\text{min}} \gtrsim n_{\text{min}}$ for simplicity. We then see that in order for $\hat{\mbb\theta}^0$ to be a blockwise optimum with high probability, it is sufficient that $\Delta_{\text{min}} \gtrsim \sigma\sqrt{K_{\text{max}} \log(K_{\text{max}} p) /n} $.

\section{Extensions} \label{sec:extensions}
In this section, we describe some extensions of our SCOPE methodology.

\paragraph{Continuous covariates.} If some of the covariates are continuous rather than categorical, we can apply any penalty function of choice to these, and perform a regression by optimising the sum of a least squares objective, our SCOPE penalty and these additional penalty functions, using (block) coordinate descent.

For example, consider the model \eqref{eq:ANOVA_mod} with the addition of $d$ continuous covariates. Let $Z \in \R^{n \times d}$ be the centred design matrix for these covariates with $i$th row $Z_i \in \R^d$. One can fit a model with SCOPE penalising the categorical covariates, and the Lasso with tuning parameter $\alpha > 0$ penalising the continuous covariates, resulting in the following objective over $\mbb\beta \in \R^d$ and $\mbb\theta \in \Theta$:
\begin{align*}
	\frac{1}{2n}\sum_{i = 1}^{n}\left( Y_i - \hat{\mu} - Z_i^T\mbb\beta - \sum_{j = 1}^p \sum_{k = 1}^{K_j} \theta_{jk} \ind_{\{ X_{ij} = k \}} \right)^2 + \alpha \| \mbb\beta \|_1 + \sum_{j=1}^p \sum_{k = 1}^{K_j - 1} \rho_j( \theta_{j (k+1)} - \theta_{j(k)} ).
\end{align*}
This sort of integration of continuous covariates is less straightforward when attempting to use tree-based methods to handle categorical covariates, for example.

\paragraph{Generalised linear models.} Sometimes a generalised linear model may be appropriate. Although a quadratic loss function is critical for our exact optimisation algorithm described in Section~\ref{sec:univariate}, we can iterate local quadratic approximations to the loss term in the objective and minimise this. This results in a proximal Newton algorithm and is analogous to the standard approach for solving $\ell_1$-penalised generalised linear models \citep[Section~3]{friedman2010regularization}.
An implementation of this scheme in the case of logistic regression for binary responses is available in the accompanying R package \texttt{CatReg}. We remark that when computing logistic regression models with a SCOPE penalty it is advisable to use a larger value of $\gamma$ than with a continuous response to aid convergence of the proximal Newton step; we recommend a default setting of $\gamma = 100$. In Section~\ref{sec:adultds1} we use the approach described above to perform a logistic regression using SCOPE on US census data.

\paragraph{Hierarchical categories.}
Often certain predictors may have levels that are effectively subdivisions of the levels of other predictors. Examples include category of item in e-commerce or geographical data with predictors for continent, countries and district. For simplicity, we will illustrate how such settings may be dealt with by considering a case with two predictors, but this may easily be generalised to more complex hierarchical structures. Suppose there is a partition $G_1 \cup \cdots \cup G_{K_1}$ of $\{1,\ldots,K_2\}$ such that for all $k=1,\ldots,K_1$,
\[
X_{i2} \in G_k \Longrightarrow X_{i1} = k,
\]
so the levels of the second predictor in $G_k$ represent subdivisions of $k$th level of the first predictor. Let $K_{2k} := |G_k|$ and let $\mbb\theta_{2k}$ refer to the subvector $(\theta_{2l})_{l \in G_k}$ for each $k=1,\ldots,K_1$, so components of $\mbb\theta_{2k}$ are the coefficients corresponding to the levels in $G_k$. Also let $\theta_{2k(r)}$ denote the $r$th order statistic within $\mbb\theta_{2k}$.
It is natural to encourage fusion among levels within $G_k$ more strongly than for levels in different elements of the partition. To do this we can modify our objective function so the penalty takes the form
\[
\sum_{k=1}^{K_1-1} \rho_1(\theta_{1(k+1)} - \theta_{1(k)}) + \sum_{k=1}^{K_1} \sum_{l=1}^{K_{2k}-1} \rho_{2k}(\theta_{2k(l+1)} - \theta_{2k(l)}).
\]
We furthermore enforce the identifiability constraints that
\[
\sum_{l=1}^{K_1} n_{1l} \theta_{1l} = 0 \;\;\;\text{ and } \;\;\;\sum_{l \in G_k} n_{2l}\theta_{2l} = 0 \text{ for all }k=1,\ldots,K.
\]
As well as yielding the desired shrinkage properties, an additional advantage of this approach is that the least squares criterion is separable in $\mbb\theta_{21}, \ldots, \mbb\theta_{2K_1}$ so the blockwise update of $\mbb\theta_2$ can be performed in parallel. This can lead to a substantial reduction in computation time if $K_2$ is large.

\section{Numerical experiments} \label{sec:numerical}
In this section we explore the empirical properties of SCOPE. We first present  results on the performance on simulated data, and then in Sections~\ref{sec:adultds1}~to~\ref{sec:covidhub} present analyses and experiments on US census data, insurance data and COVID-19 modelling data.

We denote SCOPE with a specific choice of $\gamma$ as SCOPE-$\gamma$, and write SCOPE-CV to denote SCOPE with a cross-validated choice of $\gamma$. SCOPE solutions are computed using our R \citep{rlang} package \texttt{CatReg} \citep{CatReg}, using 5-fold cross-validation to select $\lambda$ for all examples except those in Section~\ref{sec:covidhub}. We compare SCOPE to linear or logistic regression where appropriate and a range of existing methods, including CAS-ANOVA \citep{bondell2009simultaneous} \eqref{eq:all_pairs}, and an adaptive version where the weights $w_{j,kl}$ are multiplied by a factor proportional to the $|\hat\theta_{jk}^{\mathrm{init}} - \hat\theta_{jl}^{\mathrm{init}}|^{-1}$, where $\hat{\mbb\theta}^\mathrm{init}$ is an initial CAS-ANOVA estimate. 
For these methods the tuning parameter $\lambda$ was also selected by 5-fold cross-validation.
As well as this, we include Delete or merge regressors (DMR) \citep{maj2015delete} and Bayesian effect fusion (BEF) \citep{pauger2019bayesian} in some experiments. 
With the former, models were fitted using \texttt{DMRnet} \citep{DMRnet} and selected by 5-fold cross-validation where possible; otherwise an information criterion was used. With BEF, coefficients were modelled with a Gaussian mixture model with posterior mean estimated using 1000 samples using \texttt{effectFusion} \citep{effectFusion}.
We also include comparison to the tree-based approaches CART \citep{breiman1984classification} and random forests (RF) \citep{breiman2001random}. Lastly, in some experiments, models were also fitted using the Lasso \citep{tibshirani1996regression}. CART was implemented using \texttt{rpart} \citep{rpart} with pruning according to the one standard error rule. Random forests and Lasso were implemented using the default settings in \texttt{randomForest} \citep{randomForest} and \texttt{glmnet} \citep{friedman2010regularization} packages respectively.
 For full details of the specific versions of these methods and software used in the numerical experiments, see Section~\ref{sec:methoddetails} of the Supplementary material. 

\subsection{Simulations} \label{sec:simulations}
We simulated data according to the model \eqref{eq:ANOVA_mod} with the covariates $X_{ij}$ generated randomly in the following way. We first drew $(W_{ij})_{j=1}^p$ from a multivariate $\mathcal{N}_p(0, \Sigma)$ distribution where the covariance matrix $\Sigma$ had ones on the diagonal. The off-diagonal elements of $\Sigma$ were chosen such that $U_{ij} := \Phi^{-1}(W_{ij})$ had $\text{corr}(U_{ij}, U_{ik}) = \rho$ for $j \neq k$. The marginally uniform $U_{ij}$ were then quantised this to give $X_{ij} = \ceil{24 U_{ij}}$, so the number of levels $K_j=24$.

The errors $\varepsilon_i$ were independently distributed as $\mathcal{N}(0, \sigma^2)$. The performance of SCOPE and competitor methods was measured using mean squared prediction error on $10^5$ new (noiseless) observations generated in the same way as the training data, and final results are averages over 500 draws of training and test data. We considered various settings of $(n, p, \rho, \mbb\theta^0, \sigma^2)$ below with low-dimensional and high-dimensional scenarios considered in Sections~\ref{sec:lowdimsims} and \ref{sec:highdimsims} respectively. 
The coefficient vectors for each experiment are specified up to an additive constant, which is required to satisfy the identifiability condition \eqref{eq:identifiability}.

We measured predictive performance by the mean squared prediction error (MSPE) given by
\begin{equation} \label{eq:MSPE}
	\text{MSPE} := \E_x \{g(x) - \hat{g}(x)\}^2,
\end{equation}
where $g$ is the true regression function, $\hat{g}$ an estimate, and the expectation is taken over the covariate vector $x$.

\subsubsection{Low-dimensional experiments} \label{sec:lowdimsims}
Results are presented for three settings with $n=500$, $p = 10$ given below. 
\begin{enumerate}
  \item $\mbb\theta^0_j = (\overbrace{-3, \ldots, -3}^\text{10 times}, \overbrace{0,\ldots, 0}^\text{4 times}, \overbrace{3, \ldots, 3}^\text{10 times})$ for $j = 1, 2, 3$, and $\mbb\theta^0_j = 0$ otherwise; $\rho = 0$.
  \item $\mbb\theta^0_j = (\overbrace{-3, \ldots, -3}^\text{8 times}, \overbrace{0,\ldots, 0}^\text{8 times}, \overbrace{3, \ldots, 3}^\text{8 times})$ for $j = 1, 2, 3$, and $\mbb\theta^0_j = 0$ otherwise; $\rho = 0$.
  \item As Setting 1, but with $\rho = 0.8$.
\end{enumerate}
\begin{table}[h!]
\centering
\footnotesize
\setlength{\tabcolsep}{5pt}
 		\begin{tabular}{r S[table-format=2.3]@{\hskip 0.22in} S[table-format=2.3]@{\hskip 0.22in} S[table-format=2.3]@{\hskip 0.22in} S[table-format=2.3] S[table-format=2.3] S[table-format=2.3]@{\hskip 0.22in} S[table-format=2.3]@{\hskip 0.22in} S[table-format=2.3]@{\hskip 0.22in} S[table-format=2.3]}
	& \multicolumn{4}{c}{Setting 1} &  & \multicolumn{4}{c}{Setting 2}  \\ 
 $\sigma^2$: & \multicolumn{1}{c}{1} &\multicolumn{1}{c}{6.25} & \multicolumn{1}{c}{25} & \multicolumn{1}{c}{100} & & \multicolumn{1}{c}{1} &\multicolumn{1}{c}{6.25} & \multicolumn{1}{c}{25} & \multicolumn{1}{c}{100}  \\ 
	 \text{SNR}: & \multicolumn{1}{c}{4.7} & \multicolumn{1}{c}{1.9} & \multicolumn{1}{c}{0.95} & \multicolumn{1}{c}{0.47} && \multicolumn{1}{c}{4.2} & \multicolumn{1}{c}{1.7} & \multicolumn{1}{c}{0.85} & \multicolumn{1}{c}{0.42} \\\hline \hline 
	SCOPE-8  & $\mathbf{0.014}$\rlap{\tiny{(0.0)}}  & 0.450\rlap{\tiny{(0.5)}} & 4.571\rlap{\tiny{(1.0)}} & 12.936\rlap{\tiny{(2.8)}} & & $\mathbf{0.015}$\rlap{\tiny{(0.0)}} & $\mathbf{0.285}$\rlap{\tiny{(0.3)}} & 6.775\rlap{\tiny{(0.9)}} & 12.697\rlap{\tiny{(2.3)}} \\ \hline
	SCOPE-32  & 0.018\rlap{\tiny{(0.0)}} & 0.878\rlap{\tiny{(0.6)}} & 4.151\rlap{\tiny{(0.9)}} & $\mathbf{12.356}$\rlap{\tiny{(2.1)}} & & 0.019\rlap{\tiny{(0.0)}} & 0.655\rlap{\tiny{(0.4)}} & 5.026\rlap{\tiny{(1.0)}} & $\mathbf{12.037}$\rlap{\tiny{(2.0)}}   \\ \hline
	SCOPE-CV &  0.015\rlap{\tiny{(0.0)}} & $\mathbf{0.407}$\rlap{\tiny{(0.4)}} & $\mathbf{4.120}$\rlap{\tiny{(0.9)}} & 12.513\rlap{\tiny{(2.5)}} & & 0.016\rlap{\tiny{(0.0)}} & 0.292\rlap{\tiny{(0.3)}} & $\mathbf{5.005}$\rlap{\tiny{(1.1)}} & 12.444\rlap{\tiny{(2.5)}}\\ \hline
	Linear regression  & 0.851\rlap{\tiny{(0.1)}} & 5.317\rlap{\tiny{(0.7)}} & 21.503\rlap{\tiny{(2.7)}} & 86.745\rlap{\tiny{(10.7)}} & & 0.869\rlap{\tiny{(0.1)}} & 5.406\rlap{\tiny{(0.7)}} & 21.216\rlap{\tiny{(2.5)}}  & 85.439\rlap{\tiny{(10.9)}}   \\ \hline
		Oracle least squares  & 0.014\rlap{\tiny{(0.0)}} & 0.091\rlap{\tiny{(0.1)}} & 0.333\rlap{\tiny{(0.2)}} & 1.405\rlap{\tiny{(0.8)}} & & 0.014\rlap{\tiny{(0.0)}} & 0.088\rlap{\tiny{(0.0)}} & 0.336\rlap{\tiny{(0.2)}} & 1.532\rlap{\tiny{(0.8)}} \\ \hline
	CAS-ANOVA  & 0.617\rlap{\tiny{(0.3)}} & 1.602\rlap{\tiny{(0.3)}} & 5.448\rlap{\tiny{(1.0)}} & 14.814\rlap{\tiny{(2.2)}} & & 1.483\rlap{\tiny{(0.4)}} & 1.626\rlap{\tiny{(0.3)}} & 5.466\rlap{\tiny{(1.0)}} & 13.421\rlap{\tiny{(2.2)}}   \\ \hline
	Adaptive CAS-ANOVA  & 0.135\rlap{\tiny{(0.1)}} & 0.880\rlap{\tiny{(0.4)}} & 5.076\rlap{\tiny{(1.2)}} & 22.896\rlap{\tiny{(4.7)}} & & 0.134\rlap{\tiny{(0.1)}} & 0.912\rlap{\tiny{(0.3)}} & 5.535\rlap{\tiny{(1.2)}} & 22.213\rlap{\tiny{(4.9)}}  \\ \hline
		DMR & $\mathbf{0.014}$\rlap{\tiny{(0.0)}} & 0.448\rlap{\tiny{(0.4)}} & 4.884\rlap{\tiny{(1.4)}} & 18.394\rlap{\tiny{(3.6)}} & & 0.016\rlap{\tiny{(0.0)}} & 0.409\rlap{\tiny{(0.4)}} & 6.430\rlap{\tiny{(1.4)}} & 17.457\rlap{\tiny{(2.1)}}\\ \hline
	BEF & 0.020\rlap{\tiny{(0.0)}} & 2.209\rlap{\tiny{(1.1)}} & 6.297\rlap{\tiny{(1.8)}} & 21.927\rlap{\tiny{(2.3)}} & & 0.019\rlap{\tiny{(0.0)}} & 1.055\rlap{\tiny{(0.9)}} & 8.183\rlap{\tiny{(2.0)}} & 18.236\rlap{\tiny{(1.5)}} \\ \hline
		CART & 3.844\rlap{\tiny{(0.4)}} & 5.099\rlap{\tiny{(0.9)}} & 13.219\rlap{\tiny{(2.1)}} & 22.431\rlap{\tiny{(1.2)}} & & 5.530\rlap{\tiny{(0.6)}} & 7.457\rlap{\tiny{(0.9)}} & 13.280\rlap{\tiny{(1.8)}} & 18.198\rlap{\tiny{(0.7)}}  \\ \hline
	RF & 9.621\rlap{\tiny{(0.5)}} & 10.944\rlap{\tiny{(0.5)}} & 13.217\rlap{\tiny{(0.7)}} & 16.344\rlap{\tiny{(0.9)}} & & 8.947\rlap{\tiny{(0.3)}} & 9.747\rlap{\tiny{(0.4)}} & 11.249\rlap{\tiny{(0.6)}} & 13.646\rlap{\tiny{(0.8)}}
 	\end{tabular}

	\begin{tabular}{r  S[table-format=2.3]@{\hskip 0.22in} S[table-format=2.3]@{\hskip 0.22in} S[table-format=2.3]@{\hskip 0.22in} S[table-format=2.3] }
	&  \multicolumn{4}{c}{Setting 3} \\ 
	  $\sigma^2$: & \multicolumn{1}{c}{1} &\multicolumn{1}{c}{6.25} & \multicolumn{1}{c}{25} & \multicolumn{1}{c}{100}  \\ 
	 \text{SNR}: & \multicolumn{1}{c}{7.3} & \multicolumn{1}{c}{2.9} & \multicolumn{1}{c}{1.5} & \multicolumn{1}{c}{0.73} \\ \hline \hline 
		SCOPE-8  & $\mathbf{0.015}$\rlap{\tiny{(0.0)}} & 0.967\rlap{\tiny{(0.7)}} & 5.060\rlap{\tiny{(1.3)}} & 14.555\rlap{\tiny{(2.9)}}  \\ \hline
	SCOPE-32  & 0.018\rlap{\tiny{(0.0)}} & 0.713\rlap{\tiny{(0.4)}} & 3.580\rlap{\tiny{(0.8)}} & $\mathbf{9.721}$\rlap{\tiny{(1.9)}}  \\ \hline
	SCOPE-CV  & 0.022\rlap{\tiny{(0.1)}} & 0.582\rlap{\tiny{(0.3)}} & $\mathbf{3.368}$\rlap{\tiny{(0.9)}} & 10.168\rlap{\tiny{(2.6)}} \\ \hline
	Linear regression   & 0.879\rlap{\tiny{(0.1)}} & 5.485\rlap{\tiny{(0.7)}} & 21.987\rlap{\tiny{(2.7)}} & 87.820\rlap{\tiny{(11.9)}}   \\ \hline
			Oracle least squares  &0.014\rlap{\tiny{(0.0)}} & 0.092\rlap{\tiny{(0.0)}} & 0.362\rlap{\tiny{(0.2)}} & 1.488\rlap{\tiny{(1.0)}} \\ \hline
	CAS-ANOVA   & 0.710\rlap{\tiny{(0.2)}} & 1.601\rlap{\tiny{(0.3)}} & 4.732\rlap{\tiny{(0.9)}} & 12.708\rlap{\tiny{(2.1)}}  \\ \hline
	Adaptive CAS-ANOVA  & 0.189\rlap{\tiny{(0.2)}} & 0.701\rlap{\tiny{(0.3)}} & 3.705\rlap{\tiny{(1.0)}} & 16.186\rlap{\tiny{(3.6)}}  \\ \hline
		DMR &  $\mathbf{0.015}$\rlap{\tiny{(0.0)}} & $\mathbf{0.553}$\rlap{\tiny{(0.5)}} & 5.730\rlap{\tiny{(1.9)}} & 18.594\rlap{\tiny{(4.5)}} \\ \hline
	BEF   & 0.019\rlap{\tiny{(0.0)}} & 1.716\rlap{\tiny{(0.9)}} & 8.143\rlap{\tiny{(2.6)}} & 26.923\rlap{\tiny{(7.0)}} \\ \hline
	CART &  4.336\rlap{\tiny{(0.6)}} & 5.685\rlap{\tiny{(1.0)}} & 9.910\rlap{\tiny{(1.7)}} & 18.543\rlap{\tiny{(2.2)}} \\ \hline
	RF   & 4.039\rlap{\tiny{(0.3)}} & 5.673\rlap{\tiny{(0.5)}} & 9.157\rlap{\tiny{(0.9)}} & 13.766\rlap{\tiny{(1.7)}}
 	\end{tabular}
 	\caption{Mean squared prediction errors (and standard deviations thereof) of various methods on the settings described. }

\label{tab:simmspetable}
\end{table}

Each of these experiments were performed with noise variance $\sigma^2 = $ 1, 6.25, 25 and 100. Note that the variance of the signal varies across each setting, and signal-to-noise ratio (SNR) for each experiment is displayed in Table~\ref{tab:simmspetable}.
Methods included for comparison were SCOPE-8, SCOPE-32, SCOPE-CV, linear regression, vanilla and adaptive CAS-ANOVA, DMR, Bayesian effect fusion, CART and random forests. Also included are the results from the oracle least squares estimator \eqref{eq:oraclelse}.

Results are shown in Table~\ref{tab:simmspetable} and further details are given in Section~\ref{sec:lowdimsims2} of the Supplementary material.
Across all experiments, SCOPE with a cross-validated choice of $\gamma$ exhibits prediction performance at least as good as the optimal approaches, and in all but the lowest noise settings performs better than the other methods that were included. In these exceptions, we see that fixing $\gamma$ to be a small value (corresponding to high-concavity) provides leading performance. 

In these low noise settings, we see that the methods based on first estimating the clusterings of the levels and then estimating the coefficients without introducing further shrinkage, such as DMR or Bayesian effect Fusion, perform well. However they tend to struggle when the noise is larger.
In contrast the tree-based methods perform poorly in low noise settings but exhibit competitive performance in high noise settings.

\FloatBarrier

\subsubsection{High-dimensional experiments} \label{sec:highdimsims}
We considered 8 settings as detailed below, each with $n = 500$, $p = 100$ and simulated 500 times. 

\begin{enumerate}
\item $\mbb\theta^0_j = (\overbrace{-2, \ldots, -2}^\text{8 times}, \overbrace{0,\ldots, 0}^\text{8 times}, \overbrace{2, \ldots, 2}^\text{8 times})$ for $j = 1, 2, 3$, $\mbb\theta^0_j = (\overbrace{-2, \ldots, -2}^\text{10 times}, \overbrace{0,\ldots, 0}^\text{4 times}, \overbrace{2, \ldots, 2}^\text{10 times})$ for $j = 4, 5, 6$, and $\mbb\theta^0_j = 0$ otherwise; $\rho = 0$ and $\sigma^2 = 50$.
  \item As Setting 1, but with $\rho = 0.5$.
  \item $\mbb\theta^0_j = (\overbrace{-2, \ldots, -2}^\text{8 times}, \overbrace{0,\ldots, 0}^\text{8 times}, \overbrace{2, \ldots, 2}^\text{8 times})$ for $j = 1, 2, 3$, $\mbb\theta^0_j = (\overbrace{-2, \ldots, -2}^\text{16 times}, \overbrace{3,\ldots, 3}^\text{8 times})$ for $j = 4, 5, 6$, and $\mbb\theta^0_j = 0$ otherwise; $\rho = 0.5$ and $\sigma^2 = 100$.
\item $\mbb\theta^0_j = (\overbrace{-2, \ldots, -2}^\text{5 times}, \overbrace{-1,\ldots, -1}^\text{5 times}, \overbrace{0, \ldots, 0}^\text{4 times}, \overbrace{1, \ldots, 1}^\text{5 times}, \overbrace{2, \ldots, 2}^\text{5 times})$ for $j = 1, \ldots, 5$, and $\mbb\theta^0_j = 0$ otherwise; $\rho = 0$ and $\sigma^2 = 25$.
\item $\mbb\theta^0_j = (\overbrace{-2, \ldots, -2}^\text{16 times}, \overbrace{3, \ldots, 3}^\text{8 times})$ for $j = 1, \ldots , 25$, and $\mbb\theta^0_j = 0$ otherwise; $\rho = 0$ and $\sigma^2 = 1$.
  \item As Setting 5, but with $\rho = 0.5$.
  \item $\mbb\theta^0_j = (\overbrace{-2, \ldots, -2}^\text{4 times}, \overbrace{0,\ldots, 0}^\text{12 times}, \overbrace{2, \ldots, 2}^\text{8 times})$ for $j = 1, \ldots, 10$, and $\mbb\theta^0_j = 0$ otherwise; $\rho = 0$ and $\sigma^2 = 25$.
  \item $\mbb\theta^0_j = (\overbrace{-3, \ldots, -3}^\text{6 times}, \overbrace{-1,\ldots, -1}^\text{6 times}, \overbrace{1,\ldots, 1}^\text{6 times}, \overbrace{3, \ldots, 3}^\text{6 times})$ for $j = 1, \ldots, 5$, and $\mbb\theta^0_j = 0$ otherwise; $\rho = 0$ and $\sigma^2 = 25$.
\end{enumerate}
Models were fitted using SCOPE-8, SCOPE-32, SCOPE-CV, DMR, CART, Random forests and the Lasso.  Table~\ref{tab:hdsimmspetable} gives the mean squared prediction errors across each of the settings.

As well as prediction performance, it is interesting to see how the methods perform in terms of variable selection performance. With categorical covariates, there are two potential ways of evaluating this. The first is to consider the number of false positives and false negatives across the $p=100$ categorical variables, defining a variable $j$ to have been selected if $\hat{\mbb\theta}_j \neq 0$. These results are shown in Table~\ref{tab:hdsimfpfn}. This definition of a false positive can be considered quite conservative; typically one can find that often the false signal variables have only two levels, each with quite small coefficients. This means that the false positive rate can increase substantially with only a small increase in the dimension of the estimated linear model.

The second is to see within the signal variables (i.e., the $j$ for which $\mbb\theta^0_j \neq 0$), how closely the estimated clustering resembles the true structure. To quantify this, we use the \emph{adjusted Rand index} \citep{hubert1985comparing}. This is the proportion of all pairs of observations that are either (i) in different true clusters and different estimated clusters, or (ii) in the same true cluster and estimated cluster; this is then corrected to ensure that its value is zero when exactly one of the clusterings is `all-in-one'. In Table~\ref{tab:hdsimrand} we report the average adjusted Rand index over the true signal variables in each setting.

\begin{table}[h!]
\centering
\footnotesize
 		\begin{tabular}{r S[table-format=2.3]@{\hskip 0.26in} S[table-format=2.3]@{\hskip 0.26in} S[table-format=2.3]@{\hskip 0.26in} S[table-format=2.3]@{\hskip 0.28in} S[table-format=2.3]@{\hskip 0.28in} S[table-format=2.3]@{\hskip 0.26in} S[table-format=2.3]@{\hskip 0.26in} S[table-format=2.3]}
	 Setting: & \multicolumn{1}{c}{1} & \multicolumn{1}{c}{2} & \multicolumn{1}{c}{3} & \multicolumn{1}{c}{4} & \multicolumn{1}{c}{5} &  \multicolumn{1}{c}{6} & \multicolumn{1}{c}{7} & \multicolumn{1}{c}{8}   \\ 
	\text{SNR}: & \multicolumn{1}{c}{0.6} & \multicolumn{1}{c}{1.0} & \multicolumn{1}{c}{1.0} & \multicolumn{1}{c}{0.64} & \multicolumn{1}{c}{12} & \multicolumn{1}{c}{36} & \multicolumn{1}{c}{0.87} & \multicolumn{1}{c}{1.0} \\\hline \hline 
			SCOPE-8  & 14.319\rlap{\tiny{(2.0)}} & 15.445\rlap{\tiny{(2.9)}} & 30.597\rlap{\tiny{(5.6)}} & 7.254\rlap{\tiny{(1.2)}} & 96.538\rlap{\tiny{(25.0)}} & 7.960\rlap{\tiny{(23.2)}} & 15.867\rlap{\tiny{(1.4)}} & 11.028\rlap{\tiny{(1.6)}} \\ \hline
	SCOPE-32  & $\mathbf{14.009}$\rlap{\tiny{(1.6)}} & $\mathbf{10.780}$\rlap{\tiny{(1.6)}} & $\mathbf{21.841}$\rlap{\tiny{(3.4)}} & 7.256\rlap{\tiny{(0.9)}} & 65.344\rlap{\tiny{(13.4)}} & 0.107\rlap{\tiny{(0.0)}} & 14.867\rlap{\tiny{(1.2)}} & 11.218\rlap{\tiny{(1.4)}}   \\ \hline
	SCOPE-CV &  14.026\rlap{\tiny{(1.7)}} & 10.843\rlap{\tiny{(1.8)}} & 22.004\rlap{\tiny{(3.9)}} & $\mathbf{7.191}$\rlap{\tiny{(1.0)}} & $\mathbf{54.030}$\rlap{\tiny{(19.2)}} & $\mathbf{0.084}$\rlap{\tiny{(0.0)}} & $\mathbf{14.865}$\rlap{\tiny{(1.3)}} & $\mathbf{10.941}$\rlap{\tiny{(1.5)}} \\ \hline
		Oracle LSE & 5.044\rlap{\tiny{(0.6)}} & 5.130\rlap{\tiny{(0.6)}} & 2.664\rlap{\tiny{(1.0)}} & 1.09\rlap{\tiny{(0.3)}} & 0.054\rlap{\tiny{(0.0)}} & 0.055\rlap{\tiny{(0.0)}} & 1.087\rlap{\tiny{(0.3)}} & 0.799\rlap{\tiny{(0.3)}} \\ \hline
		DMR & 18.199\rlap{\tiny{(1.4)}} & 22.627\rlap{\tiny{(4.4)}} & 42.979\rlap{\tiny{(9.2)}} & 9.645\rlap{\tiny{(1.2)}} & 139.095\rlap{\tiny{(4.3)}} & 213.691\rlap{\tiny{(35.7)}} & 19.298\rlap{\tiny{(0.8)}} & 11.737\rlap{\tiny{(2.4)}} \\ \hline
	CART & 18.146\rlap{\tiny{(0.5)}} & 31.235\rlap{\tiny{(3.6)}} & 58.73\rlap{\tiny{(6.6)}} & 10.466\rlap{\tiny{(0.3)}} & 139.35\rlap{\tiny{(2.1)}} & 614.739\rlap{\tiny{(42.8)}} & 19.021\rlap{\tiny{(0.4)}} & 23.775\rlap{\tiny{(1.5)}}  \\ \hline
	RF & 16.181\rlap{\tiny{(0.6)}} & 16.345\rlap{\tiny{(1.4)}} & 31.561\rlap{\tiny{(2.6)}} & 9.053\rlap{\tiny{(0.4)}} & 128.618\rlap{\tiny{(2.2)}} & 264.374\rlap{\tiny{(14.4)}} & 17.224\rlap{\tiny{(0.4)}} & 19.783\rlap{\tiny{(0.7)}} \\ \hline
	Lasso & 18.136\rlap{\tiny{(0.5)}} & 24.839\rlap{\tiny{(1.3)}} & 48.162\rlap{\tiny{(2.5)}} & 10.473\rlap{\tiny{(0.4)}} & 135.375\rlap{\tiny{(5.0)}} & 154.656\rlap{\tiny{(7.8)}} & 18.886\rlap{\tiny{(0.6)}} & 23.813\rlap{\tiny{(1.6)}}
 	\end{tabular}
 	 	\caption{Mean squared prediction errors (and standard deviations thereof) of each of the methods in the $8$ high-dimensional settings considered.} \label{tab:hdsimmspetable}
\end{table}

\begin{table}[h!]
\centering
\footnotesize
 		\begin{tabular}{r c c c c c c c c}
	Setting: & 1 & 2 & 3 & 4 & 5 & 6 & 7 & 8 \\ \hline \hline 
SCOPE-8 & 0.02/0.35 & 0.04/0.23 & 0.04/0.25 & 0.02/0.15 & 0.02/0.23 & 0.02/0.01 & 0.02/0.35 & 0.01/0.00 \\ \hline
SCOPE-32 & 0.14/0.15 & 0.30/0.02 & 0.30/0.02 & 0.15/0.04 & 0.52/0.00 & 0.00/0.00 & 0.21/0.08 & 0.21/0.00 \\ \hline  
SCOPE-CV & 0.12/0.20 & 0.30/0.02 & 0.29/0.03 & 0.12/0.07 & 0.59/0.00 & 0.00/0.00 & 0.21/0.11 & 0.09/0.00 \\ \hline
DMR & 0.00/0.86 & 0.00/0.44 & 0.00/0.47 & 0.00/0.62 & 0.00/0.91 & 0.03/0.60 & 0.00/0.88 & 0.00/0.02 \\ \hline
Lasso & 0.01/0.88 & 0.00/1.00 & 0.00/1.00 & 0.01/0.83 & 0.00/0.98 & 0.00/1.00 & 0.00/0.91 & 0.00/0.90  
 	\end{tabular}
 	 	\caption{(False positive rate)/(False negative rate) of linear modelling methods considered in the high-dimensional settings.} \label{tab:hdsimfpfn}
\end{table}

\begin{table}[h!]
\centering
\footnotesize
 		\begin{tabular}{r S[table-format=1.2] S[table-format=1.2] S[table-format=1.2] S[table-format=1.2] S[table-format=1.2] S[table-format=1.2] S[table-format=1.2] S[table-format=1.2] }
	Setting: & \multicolumn{1}{c}{1} & \multicolumn{1}{c}{2} & \multicolumn{1}{c}{3} & \multicolumn{1}{c}{4} & \multicolumn{1}{c}{5} &  \multicolumn{1}{c}{6} & \multicolumn{1}{c}{7} & \multicolumn{1}{c}{8}   \\\hline \hline 
			SCOPE-8  & 0.23 & 0.36 & 0.38 & 0.15 & 0.39 & 0.96 & 0.13 & 0.29 \\ \hline
			SCOPE-32  & 0.29 & 0.46 & 0.48 & 0.19 & 0.56 & 1.00 & 0.17 & 0.34 \\ \hline 
			SCOPE-CV  & 0.27 & 0.45 & 0.46 & 0.18 & 0.56 & 1.00 & 0.17 & 0.31 \\ \hline 
			DMR  & 0.04 & 0.20 & 0.23 & 0.06 & 0.04 & 0.19 & 0.03 & 0.28 \\ \hline
			Lasso  & 0.00 & 0.00 & 0.00 & 0.00 & 0.00 & 0.00 & 0.00 & 0.00
 	\end{tabular}
 	 	\caption{Average adjusted Rand index among true signal variables for the high-dimensional settings.} \label{tab:hdsimrand}
\end{table}

Further details can be found in Section~\ref{sec:highdimsims2} of the Supplementary material. In particular we include a table with the distribution of cross-validated choices of $\gamma$ (from a grid $\{4,8,16,32,64\}$) for each experimental setting. 
Note that a choice of $\gamma = 4$ is close to the setting of $\gamma = 3$ recommended in \cite{zhang2010nearly}, though the problem of categorical covariates is very different in nature than the vanilla variable selection problem considered there. Our results there suggest that for SCOPE, a larger value of $\gamma$ is preferable across a range of settings, which is also visible in the comparison between $\gamma = 8$ and $\gamma = 32$ in Table~\ref{tab:hdsimmspetable}.

Across all the settings in this study, SCOPE performs better than any of the other methods included. 
This is regardless of which of the three $\gamma$ regimes is chosen, although cross-validating $\gamma$ gives the strongest performance overall.
Comparing the results for $\gamma = 8$ and $\gamma = 32$ suggests that a larger (low-concavity) choice of $\gamma$ is preferable for higher-dimensional settings.
In setting 6, we see from Tables~\ref{tab:hdsimfpfn}~and~\ref{tab:hdsimrand} that SCOPE obtains the true underlying groupings of the coefficients and obtains the oracle least-squares estimate in every case, giving these striking results. This is also achieved for some of the experiments in setting 5. In contrast, DMR, which initially applies a group Lasso \citep{yuan2006model} to screen the categorical variables and give a low-dimensional model, necessarily misses some signal variables in this first stage and hence struggles here.

\FloatBarrier
\subsection{Adult dataset analysis} \label{sec:adultds1}
The \textit{Adult dataset}, available from the UCI Machine Learning Repository \citep{dua2019}, contains a sample of $45\,222$ observations based on information from the 1994 US census. The binary response variable is $0$ if the individual earns at most \$50\,000 a year, and 1 otherwise.
There are $2$ continuous and 8 categorical variables; some such as `native country' have large numbers of levels, bringing the total dimension to 93. 
An advantage of using SCOPE here over black-box predictive tools such as Random forests is the interpretability of the fitted model.

In Table~\ref{fig:adultfullmodel}, we show the 25-dimensional fitted model.
Within the Education category, we see that six distinct levels have been identified.
These agree almost exactly with the stratification one would expect, with all school dropouts before 12th grade being grouped together at the lowest level.
\begin{table}[h!]
\centering
\footnotesize
	\begin{tabular}{r S[table-format=1.3] l}
	Variable & \multicolumn{1}{c}{Coefficient} & Levels \\ \hline \hline
	Intercept & -3.048 & -- \\ \hline
	Age & 0.027 & -- \\ \hline
	Hours per week & 0.029 & -- \\ \hline
	Work class & 0.378 & Federal government, Self-employed (incorporated) \\ 
	& 0.058 & Private \\
	& -0.143 & Local government \\
	& -0.434 & Self-employed (not incorporated), State government, Without pay \\ \hline
	Education level  & 1.691 & Doctorate, Professional school \\ 
	& 1.023 & Master's \\
	& 0.646 & Bachelor's \\
	 & -0.132 & Associate's (academic), Associate's (vocational), Some college (non-graduate) \\ 
	& -0.546 & 12th, High school grad \\
	& -1.539 & Preschool, 1st-4th, 5th-6th, 7th-8th, 9th, 10th, 11th \\\hline
	Marital status & 0.059	& Divorced, Married (armed forces spouse), Married (civilian spouse), Married \\ && (absent spouse), Separated, Widowed \\ 
	& -0.476 & Never married \\ \hline
	Occupation & 0.560 & Executive/Managerial \\
	& 0.311 & Professional/Specialty, Protective service, Tech support \\
	& -0.003 & Armed forces, Sales \\
	& -0.168 & Admin/Clerical, Craft/Repair \\
	& -0.443 & Machine operative/inspector, Transport \\ 
	& -1.107 & Farming/Fishing, Handler/Cleaner, Other service, Private house servant \\\hline
	Relationship* & 1.498 & Wife \\ 
	 & 0.332 & Husband \\
	& -1.220 & Not in family \\
	& -1.482 & Unmarried, Other relative \\
	 & -2.144 & Own child \\ \hline
	Race & 0.013 & White \\
	& 0.008	& Asian/Pacific islander, Other \\ 
	& -0.182 & Native-American/Inuit, Black \\  \hline
	Sex 	& 0.139 & Male \\ 
	& -0.619 & Female \\ \hline
	Native country & 0.018 & KH, CA, CU, ENG, FR, DE, GR, HT, HN, HK, HU, IN, IR, IE, IT, JM,  \\ && JP, PH, PL, PT, PR, TW, US, YU \\
	 & -0.882 & CN, CO, DO, EC, SV, GT, NL, LA, MX, NI, GU-VI-etc, PE, SCT, ZA,\\ &&  TH, TT, VN 
	
	\end{tabular}
	
\caption{Coefficients of SCOPE model trained on the full dataset. Here, $\gamma = 100$ and $\lambda$ was selected by 5-fold cross-validation (with cross-validation error of 16.82\%). Countries, aside from those in the UK, are referred to by their (possibly historical) internet top-level domains. \qquad *Relation with which the subject lives.}
\label{fig:adultfullmodel}
\end{table}

\begin{figure}[h!]
\centering
\begin{minipage}[b]{\textwidth}
\includegraphics[width=\textwidth]{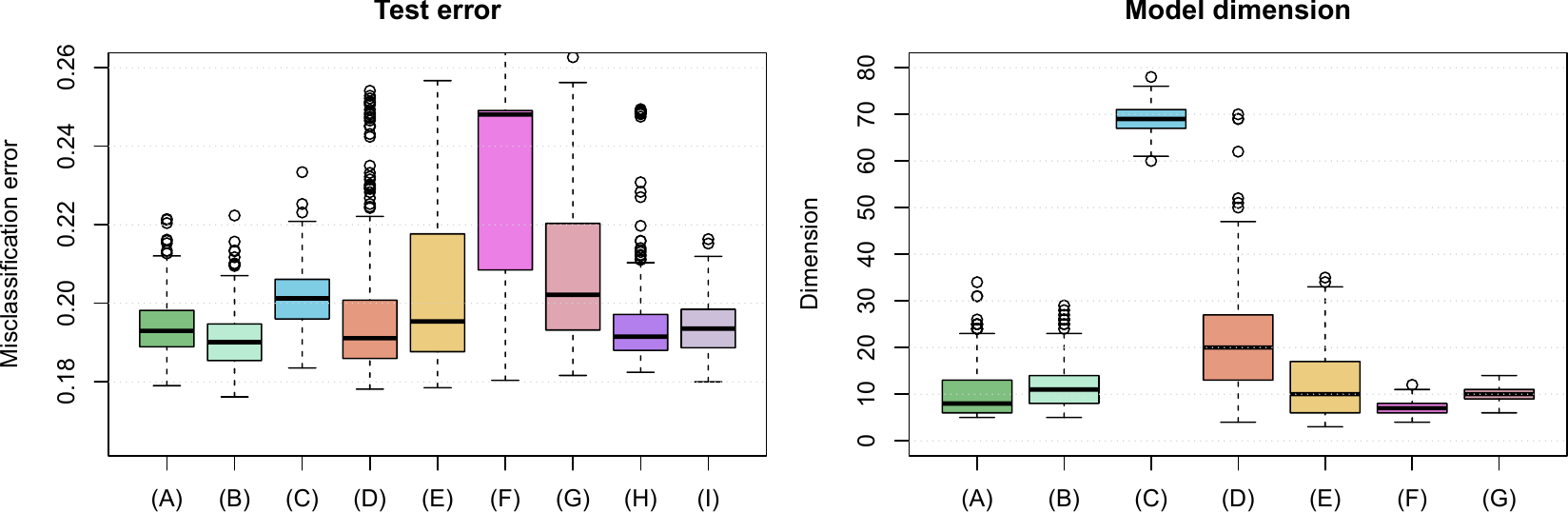}	
\caption{Prediction performance and fitted model dimension (respectively) of various methods on the Adult dataset: (A) SCOPE-100; (B) SCOPE-250; (C) Logistic regression; (D) CAS-ANOVA; (E) Adaptive CAS-ANOVA; (F) DMR; (G) BEF; (H) CART; (I) RF. \label{fig:adult_boxplots}} 
\end{minipage}
\end{figure}

\begin{table}[h!] 
\centering
\footnotesize
	\begin{tabular}{r S[table-format=0.3] S[table-format=2.1] S[table-format=4.2]}
	Method & \multicolumn{1}{c}{Misclassification error} & \multicolumn{1}{c}{Model dimension}  & \multicolumn{1}{c}{Computation time (s)} \\ \hline\hline
		SCOPE-100 & 0.194 & 10.5 & 467 \\ \hline
	SCOPE-250 & $\mathbf{0.191}$ & 11.8 & 450 \\ \hline
	Logistic regression & 0.202 & 68.9 & 0.04\\ \hline 
	CAS-ANOVA & 0.198 & 21.5 & 429 \\ \hline 
	Adaptive CAS-ANOVA & 0.205 & 11.7 & 8757 \\ \hline
	DMR & 0.235 & 6.9 & 11 \\ \hline
	BEF & 0.207 & 9.8 & 1713 \\ \hline
		CART & 0.196 &  & 0.01\\ \hline
	RF & 0.194 &  & 0.14 \\ 
 	\end{tabular}
 	\caption{Results of experiments on the Adult dataset.}
 	\label{tab:adultresults}
\end{table}
Here we assess performance in the challenging setting when the training set is quite small by randomly selecting 1\% (452) of the total observations for training, and using the remainder as a test set. Any observations containing levels not in the training set were removed.
Models were fitted with SCOPE-100, SCOPE-250, logistic regression, vanilla and adaptive CAS-ANOVA, DMR, Bayesian effect fusion, CART and random forests.

We see that both SCOPE-100 and SCOPE-250 are competitive, with CART and Random forests also performing well, though the latter two include interactions in their fits. CAS-ANOVA also performs fairly well, the misclassification error is larger that for both versions of SCOPE, and the average fitted model size is larger.

\FloatBarrier

\subsection{Adult dataset with artificially split levels} \label{sec:adultds2}

\begin{figure}[h!]
\centering
\begin{minipage}[b]{0.45\textwidth}
	\includegraphics[width=\textwidth]{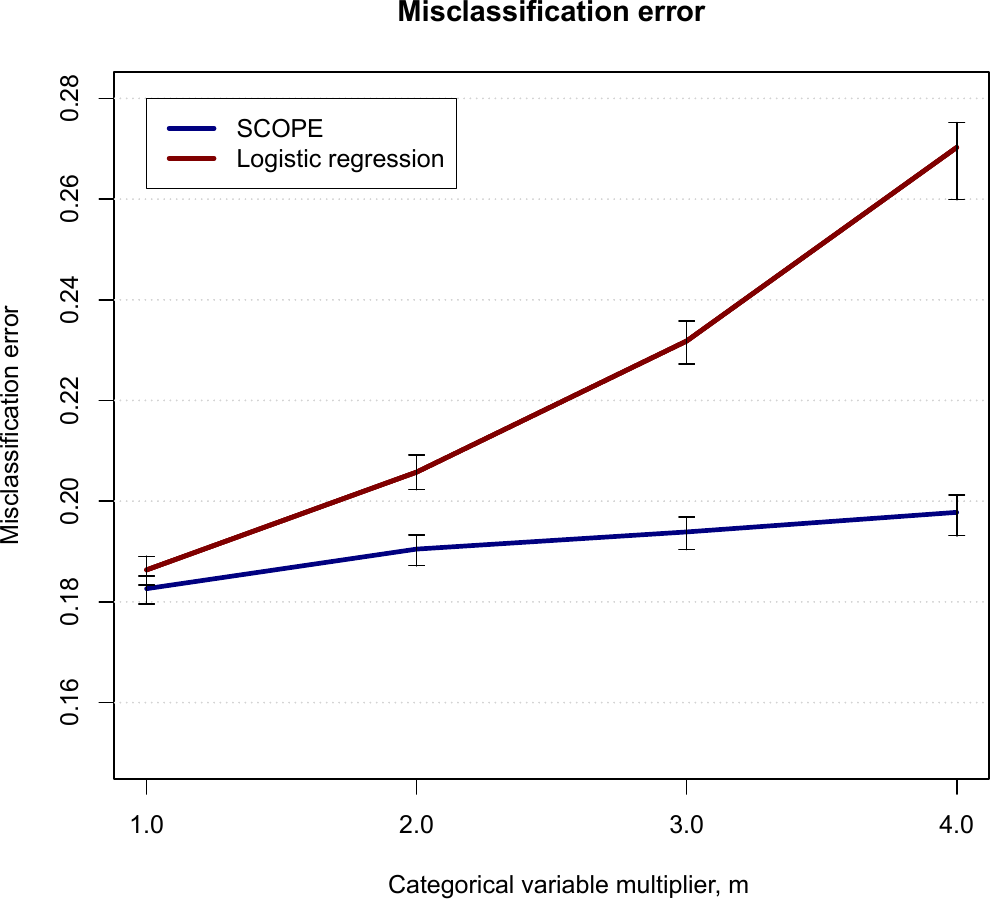}
\end{minipage}
\begin{minipage}[b]{0.45\textwidth}
	\includegraphics[width=\textwidth]{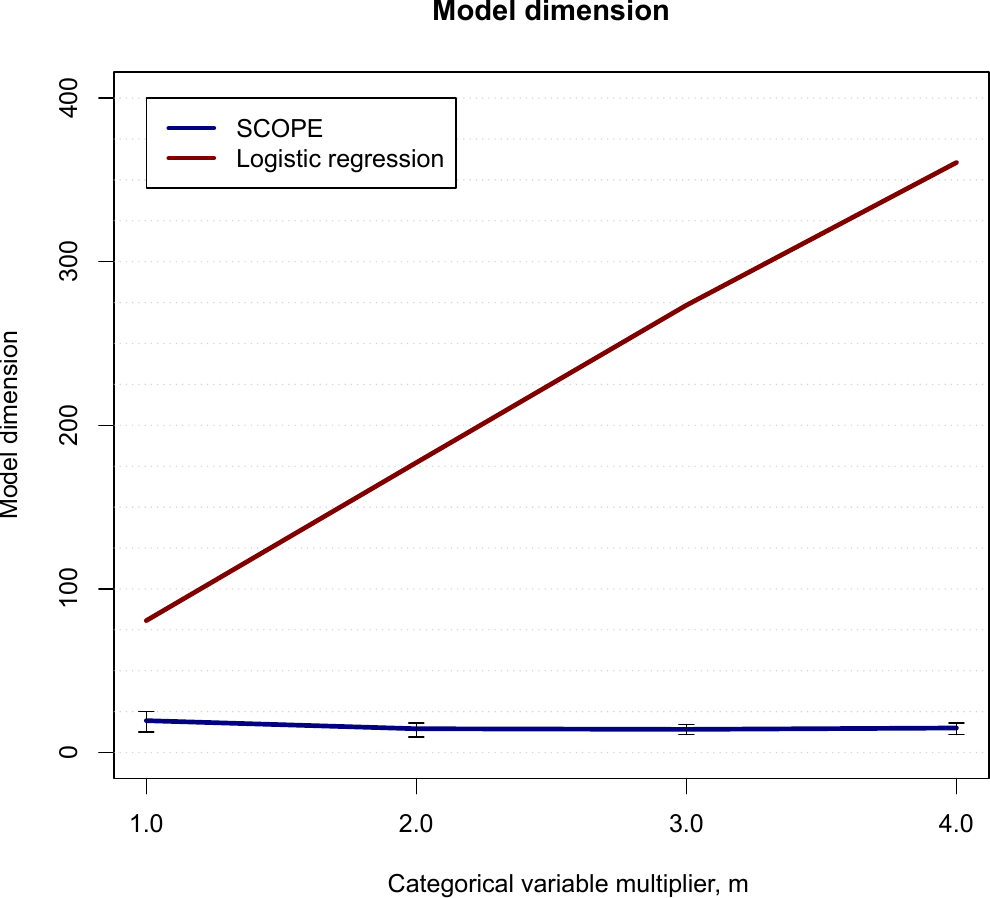}
\end{minipage}
\caption{Misclassification error and dimensions of models fitted on a sample of the \emph{Adult dataset} when levels have been artificially split $m$ times.}
\label{fig:categorynoisediagrams}
\end{figure}

To create a more challenging example, we artificially created additional levels in the \emph{Adult dataset} as follows. For each categorical variable we recursively selected a level with probability proportional to its prevalence in the data and then split it into two by appending ``-0'' or ``-1'' to the level for each observation independently and with equal probabilities. We repeated this until the total number of levels reached $m$ times the original number of levels for that variable for $m=2,3,4$. This process simulates for example responses to a survey, where different respondents might answer `US', `U.S.', `USA', `U.S.A.', `United States' or `United States of America' to a question, which would naively all be treated as different answers.

We used 2.5\% (1130) of the observations for training and the remainder for testing and applied SCOPE with $\gamma = 100$ and logistic regression. Results were averaged over 250 training and test splits. Figure~\ref{fig:categorynoisediagrams} shows that as the number of levels increases, the misclassification error of SCOPE increases only slightly and the fitted model dimension remains almost unchanged, whereas both increase with $m$ for logistic regression.

\subsection{Insurance data example} \label{sec:prudential}
The Prudential Life Insurance Assessment challenge was a prediction competition run on \citet{prudent}. By more accurately predicting risk, the burden of extensive tests and check-ups for life insurance policyholders could potentially be reduced. For this experiment, we use the training set that was provided for entrants of the competition.
\begin{figure}[h!]
\centering
\begin{minipage}[t]{0.7\textwidth}
\includegraphics[width=\textwidth]{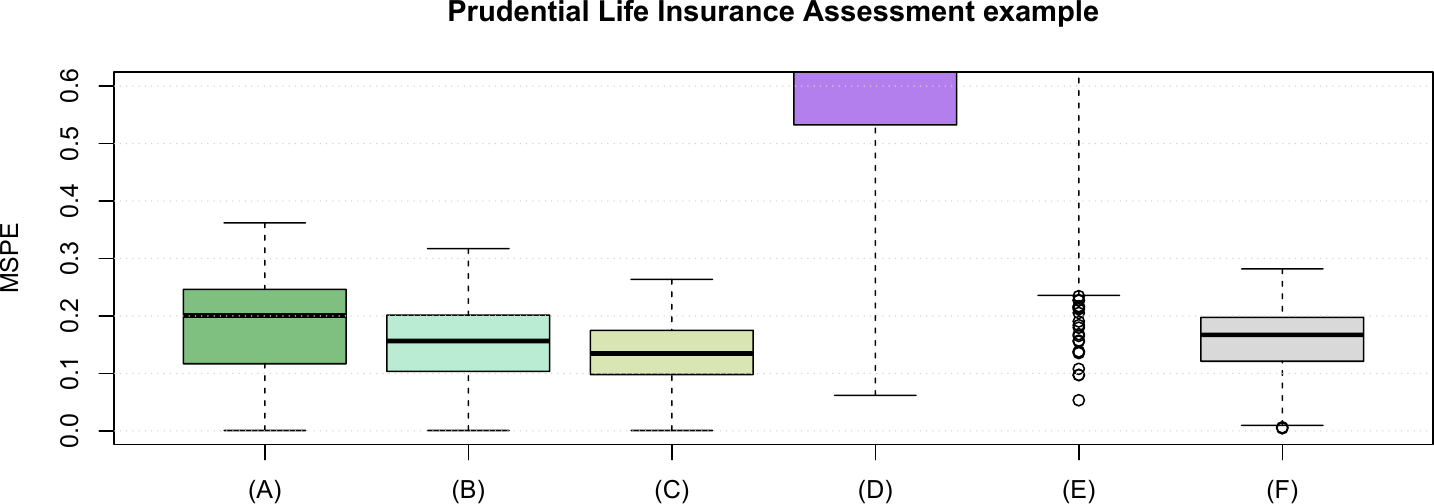}
\end{minipage}
\begin{minipage}[t]{0.25\textwidth}
\vspace{-3.3cm}
\footnotesize
	\begin{tabular}{r S[table-format=1.3] }
	\text{Method} & \text{MSPE} \\ \hline \hline 
		\text{SCOPE-8} & 0.265 \\ \hline
		\text{SCOPE-32} & 0.244 \\ \hline
		\text{SCOPE-CV} & $\mathbf{0.211}$\\ \hline
		\text{CART} & 0.851 \\ \hline
		 \text{RF} & 1.120 \\ \hline
		 \text{Lasso} & 0.244 \\ 
	\end{tabular}
\end{minipage}
\caption{Mean squared prediction error on the example based on the Prudential Life Insurance Assessment dataset. Methods used are: (A) SCOPE-8; (B) SCOPE-32; (C) SCOPE-CV; (D) CART; (E) RF; (F) Lasso.  \label{fig:prudential_results}}
\end{figure}

We removed a small number of variables due to excessive missingness, leaving 5 continuous variables and $108$ categorical variables, most with 2 or 3 levels but with some in the hundreds (and the largest with 579 levels). Rather than using the response from the original dataset, which is ordinal, 
to better suit the regression setting we are primarily concerned with in this work, we artificially generated a continuous response. To construct this signal, firstly $10$ of the categorical variables were selected at random, with probability proportional to the number of levels. For the $j$th of these, writing $K_j$ for the number of levels, we set $s_j := \lfloor 2 + \frac{1}{2} \log K_j \rfloor$ and assigned each level a coefficient in $1, \ldots, s_j$ uniformly at random, thus yielding $s_j$ true levels.
The coefficients for the $5$ continuous covariates were generated as draws from $\mathcal{N}_5(0, I)$.
The response was then scaled to have unit variance, after which standard normal noise was added.

We used 10\% ($n=5938$) of the $59\,381$ total number of observations for training, and the remainder to compute an estimated MSPE \eqref{eq:MSPE} by taking an average over these observations. We repeated this $1000$ times, sampling 10\% of the observations and generating the coefficients as above anew in each repetition. The average
mean squared prediction errors achieved by the various methods under comparison are given in Figure~\ref{fig:prudential_results}. 
We see that SCOPE with a cross-validated choice of $\gamma$ performs best, followed by the Lasso and SCOPE-32.

\subsection{COVID-19 Forecast Hub example} \label{sec:covidhub}
As well as the prediction performance experiments in the rest of this section, we include an exploratory data analysis example based on data relating to the ongoing (at time of writing) global COVID-19 pandemic. The \citet{covid19} \emph{`\ldots serves as a central repository of forecasts and predictions from over 50 international research groups.'} A collection of different research groups publish forecasts every week of case incidence in each US state for some number of weeks into the future. 

\begin{figure}[h!] 
\centering
\begin{minipage}[b]{\textwidth}
	\includegraphics[width=\textwidth]{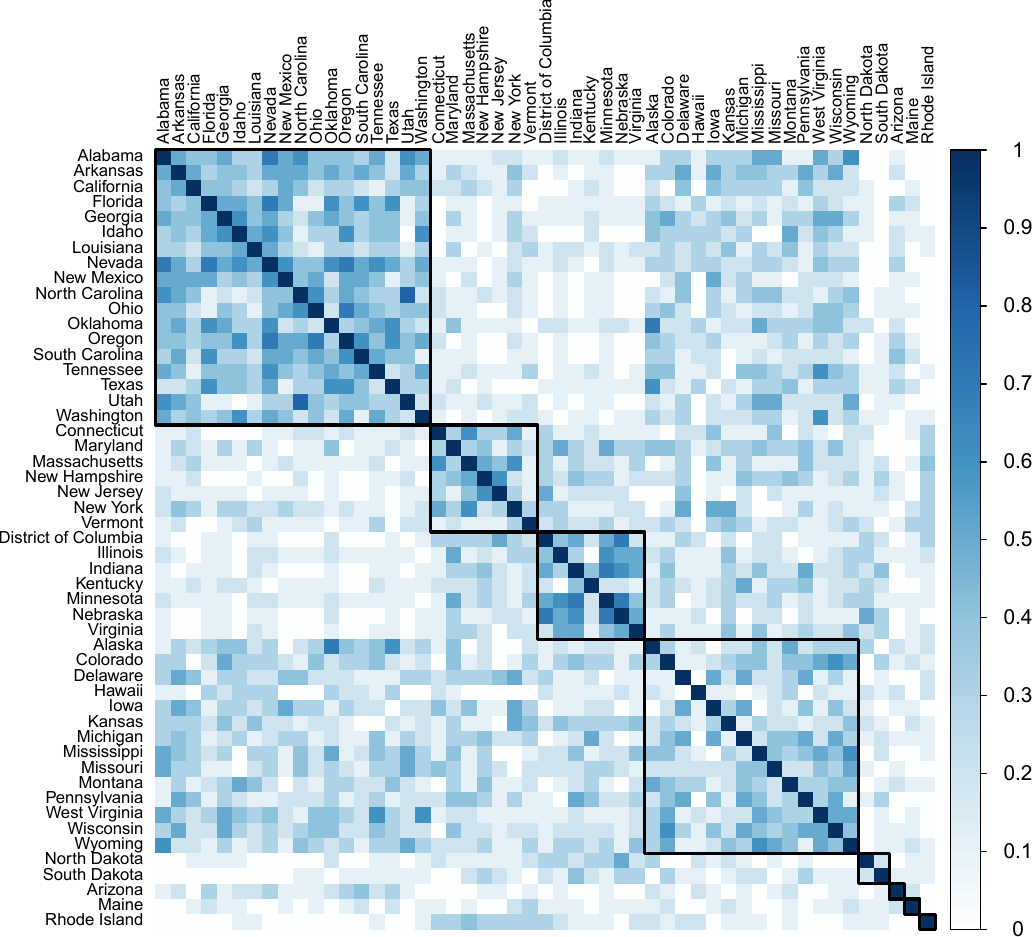}
\end{minipage}
\caption{Similarity matrix for US states computed based on data relating to the second `wave' of the COVID-19 pandemic in the US, taken to be from 26 June 2020 to 29 August 2020.}
\label{fig:covidsecondwave}
\end{figure}

In order to understand some of the difficulties of this challenging forecasting problem, we fitted an error decomposition model of the form
\begin{align}
\log \left(  \frac{1 + \text{cases}_{w, l}}{1 + \text{est.cases}_{m, t, w, l}} \right) = \alpha_0 + \alpha_{m, t} + \beta_{w, l} + \eta_{m, t, w, l},
\end{align}
where $w$ is the week that the forecast is for, $l$ is the state, $m$ indexes the forecasting model, $t$ is the `target' number of weeks in the future the forecast is for, $\eta_{m, t, w, l}$ is an error term, and $\text{cases}_{w, l}$ and $\text{est.cases}_{m, t, w, l}$ are the observed and estimated cases respectively. 
This decomposition allows an interaction term between time and location, which is important given that the pandemic was known to be more severe at different times for different areas. An interaction between model and forecasting distance was also included in order to capture the effect of some models potentially being more `short-sighted' than others. 
The inclusion of the $+1$ on the left-hand side is to avoid numerators or denominators of zero.

We used data from 6 April 2020 to 19 October 2020, giving a total of $100\,264$ $(m, t, w, l)$-tuples. We applied a SCOPE penalty with $\gamma = 8$ to $\beta_{w, l}$, which had 1428 levels. The $\alpha_{m, t}$ coefficients, which amounted to $170$ levels, were left unpenalised. The additional tuning parameter $\lambda$ was selected using the Extended Bayesian Information Criterion \citep{chen2008extended} rather than cross-validation, as it was more suited to this sort of exploratory analysis on data with a chronological structure.

The resulting estimates $\hat{\beta}_{w, l}$ had $8$ levels. We measured the `similarity' of two US states $l_a$ and $l_b$ over a period of time by computing the proportion of weeks at which their estimates $\hat{\beta}_{w, l_a} = \hat{\beta}_{w, l_b}$ coincided. The similarity matrix presented in Figure~\ref{fig:covidsecondwave} was constructed based on the second `wave' of the epidemic which occurred in Summer 2020, with clusters identified by applying spectral clustering on the similarity matrix and plotted in order of decreasing within-cluster median pairwise similarity. 

The resulting clusters are at once interpretable and interesting. Roughly speaking, the top 3 clusters (`top' when ordered according to median pairwise within-cluster agreement) correspond to states that experienced notable pandemic activity in the second, first, and third `waves' of the U.S. coronavirus pandemic, respectively. The first cluster features several southern States (e.g., Georgia, Florida, Texas) which experienced a surge of COVID cases in June--July. The second cluster features east coast states (e.g., New Jersey and New York) which experienced an enormous pandemic toll in April--May. And the third features midwestern states (e.g., Kentucky, Indiana, Nebraska) which had upticks most recently in September-October.  

\section{Discussion} \label{sec:discuss}
In this work we have introduced a new penalty-based method for performing regression on categorical data. An attractive feature of a penalty-based approach is that it can be integrated easily with existing methods for regression with continuous data, such as the Lasso.
Our penalty function is nonconvex, but in contrast to the use of nonconvex penalties in standard high-dimensional regression problems, the nonconvexity here is necessary in order to obtain sparse solutions, that is fusions of levels. Whilst computing the global optimum of nonconvex problems is typically very challenging, for the case with a single categorical variable with several hundred levels, our dynamic programming algorithm can typically solve the resulting optimisation problem in less than a second on a standard laptop computer. The algorithm is thus fast enough to be embedded within a block coordinate descent procedure for handling multiple categorical variables.

We give sufficient conditions for SCOPE to recover the oracle least squares solution when $p=1$ involving a minimal separation between unequal coefficients that is optimal up to constant factors. For the multivariate case where $p>1$, we show that oracle least squares is a fixed point of our block coordinate descent algorithm, with high probability.

Our work offers several avenues for further work. On the theoretical front, it would be interesting to obtain guarantees for block coordinate descent to converge to a local optimum with good statistical properties, a phenomenon that we observe empirically. On the methodology side, it would be useful to generalise the penalty to allow for clustering multivariate coefficient vectors; such clustering could be helpful in the context of mixtures of regressions models, for example.

\bibliographystyle{abbrvnat}

\appendix
\section{Appendix}

\subsection{Candidate minimiser functions} \label{sec:analyticminima}
In this section we give explicit forms of the functions $p_{k,r}$ as defined in Section~\ref{sec:univariate}.
We write $q_{k,r}(x)= a_r x^2 + b_r x + c_r$ for simplicity, suppressing the subscript $k$. 
For $S \subseteq \R$ and $a,b\in \R$, we write $a S + b$ for the set $\{ ax + b : x \in S \}$.

Recall from Section~\ref{sec:univariate} that
\[
u_{k, r, t}(\theta_{k+1}) := \tilde{\min_{\theta_k \in D_k : \theta_k < \theta_{k+1}}} \{ \tilde{q}_{k,r}(\theta_k) + \tilde{\rho}_t(\theta_{k+1} - \theta_k)\}.
\]
For a function $f : \R \rightarrow \R \cup \{ \infty \}$, we denote the \emph{effective domain} of $f$ by
\[
\text{dom } f := \{ x \in \R : f(x) < \infty \}.
\] 
For each $r = 1, \ldots, m(k)$, there are cases corresponding to $t=1$ and $t=2$. The formulas are as follows:
\begin{align*}
	u_{k, r, 1} (x) &= \frac{2a_r x^2 + 2(b_r - 2 a_r \gamma \lambda )x + \gamma(b_r-\lambda)^2}{2(1 - 2 a_r \gamma )} +c_r, \\
	\text{with}\quad  \text{dom } u_{k,r,1} &=
	\begin{cases}
		\left( (1 - 2 a_r \gamma ) I_{k, r} + \gamma (\lambda - b_r) \right) \cap \big[ \frac{\lambda - b_r}{2 a_r} ,\gamma \lambda - \frac{b_r}{2 a_r} \big) &\text{if}\quad  2a_r - 1 / \gamma > 0 \\
		\emptyset &\text{otherwise.}
	\end{cases}
\end{align*}
If $g_k ( \theta_{k +1}) = u_{k,r,1}(\theta_{k + 1})$, then 
\begin{align*}
	b_k ( \theta_{k + 1}) = \frac{\theta_{k + 1} + \gamma ( b_r - \lambda ) }{1 - 2 a_r \gamma}.
\end{align*}
The second case is
\begin{align*}
	u_{k, r, 2} (x) &= - \frac{b_r^2}{4a_r} + c + \frac{1}{2} \gamma \lambda^2, \\
	\text{with } \quad \text{dom } u_{k, r, 2} &= \begin{cases}
 \big[ - \frac{b_r}{2 a_r} + \gamma \lambda , \infty\big ) &\text{if } \quad a_r>0 \text{ and} -b_r / 2 a_r \in I_{k, r}\\
 \emptyset &\text{otherwise.}	
 \end{cases}
\end{align*}
Here, if $g_k ( \theta_{k + 1}) = u_{k, r, 2}(\theta_{k + 1})$, then 
\begin{align*}
	b_k (\theta_{ k + 1}) = -b_r / 2 a_r.
\end{align*}
Considering \eqref{eq:g_k_def}, we see that we can also have the case  where $g_k(\theta_{k+1}) = f_k(\theta_{k+1})$. Thus we can form the set of quadratics $p_{k,r}$ and associated intervals as the set of $u_{k,r,t}$ as above for $t=1,2$ and the $q_{k,r}$ themselves. Note that when $g_k(\theta_{k+1}) = q_{k,r}(\theta_{k+1})$, we have $b_k(\theta_{k+1}) = \theta_{k+1}$.

\subsection{Algorithm details} \label{sec:algdetail}
\begin{algorithm}
	\caption{Outline of procedure for computing $f_k$} \label{alg:fullalg}
	\begin{algorithmic}[1]
	\WHILE{$E ,N(x) \neq \emptyset$ }
		\IF{$\min \{ y \colon (y, r) \in N(x) \} < \min E$}
	\STATE $(y^*, r^*) = \argmin \{ y \colon (y, r) \in N(x) \}$\\
	  $U = U \cup \{([\tilde x, y^* ), r(x))\}$, 
	  $x = \tilde x = y^*$, 
	  $r(x) = r^*$ \\
	  $N(x) = \emptyset$, for any intersection between $p_{k-1,r(x)}$ and any $p_{k-1,r}$ with $r \in A ( x) \backslash \{ r(x) \}$ at location $y >  x$, set $N(x) = N(x) \cup \{ (y, r)\}$.
	\ELSE
	\STATE $y^* = \min E$, $E = E \backslash \{ y^* \}$, \\
	Update active set $A(y^*)$
		\IF{$r(x) \notin A(y^*)$}
	\STATE 	Set $r^*$ such that $p_{k-1, r^*} = \text{\textsf{ChooseFunction}}(A(y^*), y^*)$ \\
		$U = U \cup \{([\tilde x, y^* ), r(x))\}$, 
	  $x = \tilde x = y^*$, 
	  $r(x) = r^*$ \\
	  	  $N(x) = \emptyset$, for any intersection between $p_{k-1,r(x)}$ and any $p_{k-1,r}$ with $r \in A ( x) \backslash \{ r(x) \}$ at location $y \geq  x$, set $N(x) = N(x) \cup \{ (y, r)\}$.
		\ELSE 
\IF{$p_{k-1,r(x)} \neq p_{k-1,r^*} = \text{\textsf{ChooseFunction}}(A(y^*),y^*)$}
	\STATE 	$U = U \cup \{ ([\tilde x, y^* ), r(x))\}$, 
	  $x = \tilde x = y^*$, 
	  $r(x) = r^*$ \\
	  	  $N(x) = \emptyset$, for any intersection between $p_{k-1,r(x)}$ and any $p_{k-1,r}$ with $r \in A ( x) \backslash \{ r(x) \}$ at location $y >  x$, set $N(x) = N(x) \cup \{ (y, r)\}$.
	  	  \ELSE
	  	  \IF{$A(y^*) \neq A(x)$}
		\STATE For any intersection between $r(x)$ and any $r \in A ( y^*) \backslash A(x)$ at location $y >  x$, set $N(y^*) = N(y^*) \cup \{(y, r)\}$.\\
		For any $(y, r) \in N(x)$ with $r \notin A(y^*)$, set $N(y^*) = N(y^*) \backslash \{(y, r) \}$ \\ 
		$x = y^*$
		\ENDIF
		\ENDIF
	\ENDIF
	\ENDIF
	\ENDWHILE
	\end{algorithmic}
\end{algorithm}

\begin{algorithm}
	\caption{\textsf{ChooseFunction}$(H, x)$} \label{alg:ChooseFunction}
	\begin{algorithmic}[1]
	\REQUIRE{$H = \{ h_1, \ldots, h_n \}$ a set of functions, $x$ a real number}
	\STATE	Set $H_1 = \argmin \{h(x) : h \in H \}$
	\IF{$|H_1| = 1$}
	\STATE Select $h^* \in H_1$
	\ELSE
	\STATE Set $H_2 = \argmin \{ h'(x) : h \in H_1 \}$
	\IF{$|H_2| = 1$}
	\STATE Select $h^* \in H_2$
	\ELSE
	\STATE Set $H_3 = \argmin \{ h''(x) : h \in H_2 \}$\\
	Select $h^* \in H_3$ (choosing $h_i \in H_3$ with $i$ minimal if $|H_3| > 1$)
	\ENDIF
	\ENDIF
	\ENSURE{$h^*$}
	\end{algorithmic}
\end{algorithm}

Algorithm~\ref{alg:fullalg} describes in detail how the optimisation routine works. In the algorithm we make use of the following objects:
\begin{itemize}
  \item for $x \in \R$, $A(x)$ is the active set at $x$;
  \item $E$ is the set of points at which the active set changes;
  \item $N(x)$ is the intersection set at $x$;
  \item $U$ is a set of tuples $(I, r)$ where $I \subseteq \R$ is an interval and $r$ is an integer, which is dynamically updated as the algorithm progresses.
\end{itemize}
See Section~\ref{sec:comp_details} for definitions of the sets above.
We also use the convention that if $x = -\infty$ then $[x, y) = (-\infty, y)$.

All of the $p_{k, 1}, \ldots, p_{k, m(k)}$ and $J_{k, m}$ are computed at the start of each iterate $k$.
We then initialise
\begin{align*}
	E = \bigcup_{r = 1}^{n(k)} \partial J_{k - 1, r},
\end{align*}
the set of all of the end-points of the intervals $J_{k-1, 1} , \ldots, J_{k-1, n(k)}$.

Here $x$ can be thought of as the `current position' of the algorithm; $\tilde x$ is used to store when the minimising function $p_{k-1, r(x)}$ last changed. We initialise $\tilde x = -\infty$ and $x = - 1 + \max \{ y \in I_{k-1, 1} \colon f'_{k-1}(y_-) \leq 0 \}$.
This choice of $x$ ensures that the active set $A(x)$ contains only one element (as mentioned in Section~\ref{sec:univariate}); this will always be the index corresponding to the function $\tilde q_{k-1, 1}$.

We initialise the output set $U = \emptyset$, which by the end of this algorithm will be populated with the functions $\tilde q_{k, 1}, \ldots, \tilde q_{k, m(k)}$ and their corresponding intervals $I_{k,1}, \ldots, I_{k, m(k)}$ that partition $\R$.
Finally, we initialise the set $N(x)$ which will contain the intersections between $p_{k-1, r(x)}$ and other functions in the active set. As the active set begins with only one function, we set $N(x) = \emptyset$.

As mentioned in Section~\ref{sec:univariate}, there are several modifications that can speed up the algorithm. 
One such modification follows from the fact that for each $r$, $u_{k,r,2}$ is a constant function over its effective domain, and their effective domain is a semi-infinite interval (see Section~\ref{sec:analyticminima} of the Appendix for their expressions). Therefore, for a given point $x \in \R$, we can remove all such functions from $A(x)$ except for the one taking the minimal value.

We also note that in Algorithm~\ref{alg:fullalg}, the set $N(x)$ is not recomputed in its entirety at every point $x$ at which $A(x)$ is updated, as is described in Section~\ref{sec:univariate}. Line 13 shows how sometimes $N(x)$ can instead be updated by adding or removing elements from it.
Often, points 3 (i) and 3 (ii) from the description in the Section~\ref{sec:univariate} will coincide, and in such instances some calls to \textsf{ChooseFunction} (Algorithm~\ref{alg:ChooseFunction}) can be skipped.

\newpage
\begin{cbunit}
\resumesections
\setcounter{page}{1}
\section*{Supplementary material} 
This supplementary material is organised as follows. In Section~\ref{sec:extracompdetails} we include further details of our algorithm and the proofs of results in Sections~\ref{sec:scopeintro}~\&~\ref{sec:computation}.  The proofs of Theorems~\ref{thm:univarglobal} and \ref{thm:multivaroracle} along with a number of lemmas they require can be found in Section~\ref{sec:theoryproof}.  Section~\ref{sec:extraexp} contains information regarding simulation settings and additional results for the experiments in Section~\ref{sec:numerical} of the main paper. 
\section{Additional algorithmic details} \label{sec:extracompdetails}
\subsection{Remarks on constrained and unconstrained formulations of the univariate objective} \label{sec:constraintremarks}
It is clear why the identifiability constraint \eqref{eq:identifiability} is important when we consider the multivariate problem in Section~\ref{sec:bcd}. However, for the univariate problem, both constrained and unconstrained formulations of the objective can be clearly defined:
	\begin{align}
		\hat{\mbb\theta}^\text{c} &\in \argmin_{ \mbb\theta \in \Theta} \frac{1}{2} \sum_{k=1}^K w_k \left( \bar{Y}_k - \hat{\mu} - \theta_k \right)^2 + \sum_{k=1}^{K-1} \rho(\theta_{(k+1)} - \theta_{(k)} ), \label{eq:constrain1} \\
		\hat{\mbb\theta}^\text{u} &\in \argmin_{\mbb\theta \in \R^K} \frac{1}{2} \sum_{k=1}^K w_k \left( \bar{Y}_k - \theta_k \right)^2 + \sum_{k=1}^{K-1} \rho(\theta_{(k+1)} - \theta_{(k)}). \label{eq:constrain2}
	\end{align}
	As discussed in Section~\ref{sec:preliminaries}, we can enlarge the feasible set in \eqref{eq:constrain1} to be all of $\R^K$: similarly to the observation that $\sum_k w_k \hat{\theta}^\text{u}_k = \hat{\mu} = \sum_k w_k \bar{Y}_k$, the minimiser of \eqref{eq:constrain1} over all of $\R^K$ will always be in $\Theta$. This can be shown by following the argument at the beginning of the proof of Lemma~\ref{thm:nullconsistency}. Therefore the algorithm defined in Section~\ref{sec:univariate} can also be applied to the unconstrained formulation of the objective.
	
	It is clear that these problems are essentially identical, as $\hat{\mbb\theta}^\text{u}$ is a minimiser of the unconstrained objective if and only if $\hat{\mbb\theta}^\text{u} - \hat{\mu} \mbb 1$ is a minimiser of the constrained objective.
	Observe that while $\hat{\mbb\theta}^\text{u} \in \R^K$, the solution to the constrained objective is in fact $( \hat{\mu}, \hat{\mbb\theta}^\text{c}) \in \R \times \Theta$, which is the same $K$-dimensional space only with a different parametersation. In particular, $\hat{\mbb\theta}^\text{c}$ is non-unique if and only if $\hat{\mbb\theta}^\text{u}$ is non-unique.
	
	Since one can obtain the solution to the constrained objective by solving the unconstrained one and then reparameterising (and vice versa), we are free to assume without loss of generality that $w^T \bar{Y} = 0$, so $\hat{\mu}=0$, when solving the univariate problem, and will remark where we do this.

\subsection{Proofs of results in Sections~\ref{sec:scopeintro}~\&~\ref{sec:computation}} \label{sec:algoproofs}

\begin{proof}[Proof of Proposition~\ref{prop:convexpenalty}]
Assume, without loss of generality, that $\hat{\mu}=0$.
	Suppose that there exists $l \neq k$ such that $\hat{\theta}_{k} = \hat{\theta}_{l}$. 
	Without loss of generality we have that $\bar{Y}_{k} \neq \hat{\theta}_{k}$ (if $\bar{Y}_{k} = \hat{\theta}_{k}$ then $\bar{Y}_{l} \neq \hat{\theta}_{l}$ and it can be seen that $\hat{\theta}_{(1)} < \bar{Y}_{l} < \hat{\theta}_{K}$, in which case swap labels).
	
	Now we construct $\tilde{\mbb\theta}$ by setting $\tilde{\theta}_{r} = \hat{\theta}_r \wedge \bar{Y}_{k}$ for $r = 1, \ldots, k$, and $\tilde{\theta}_r = \hat{\theta}_r$ otherwise. We have $\ell ( \hat{\mu}, \tilde{\mbb\theta}) < \ell ( \hat{\mu}, \hat{\mbb\theta})$ and, by convexity of $\rho$, it follows that
			\begin{align*}
			\sum_{r=1}^{K-1} \rho( \tilde{\theta}_{(r+1)} - \tilde{\theta}_{(r)} ) \leq \sum_{r=1}^{K-1} \rho( \hat{\theta}_{(r+1)} - \hat{\theta}_{(r)} ).
		\end{align*}
		This gives the conclusion $Q(\tilde{\mbb\theta}) < Q(\hat{\mbb\theta})$, contradicting the optimality of $\hat{\theta}$.	
\end{proof}

\begin{proof}[Proof of Proposition~\ref{thm:orderprop}]
	Suppose, for a contradiction, that $\hat{\theta}_{k} < \hat{\theta}_{l}$.
	Then at least one of the following must be true:
	\begin{align}
	\left| \hat{\mu} + \hat{\theta}_{k} - \bar{Y}_{k} \right| &> \left| \hat{\mu} + \hat{\theta}_{l} - \bar{Y}_{k} \right| \label{eq:order1} \\
	\left| \hat{\mu} + \hat{\theta}_{l} - \bar{Y}_{l} \right| &> \left| \hat{\mu} + \hat{\theta}_{k} - \bar{Y}_{l} \right|. \label{eq:order2}
	\end{align}

Let $\tilde{\mbb\theta}$ be defined as follows. Set $\tilde{\theta}_r = \hat{\theta}_r$ for all $r \neq k, l$. If \eqref{eq:order1} holds set $\tilde{\theta}_{k} = \hat{\theta}_{l}$ and if \eqref{eq:order2} holds set $\tilde{\theta}_{l} = \hat{\theta}_{k}$.
Observe that 
\begin{align*}
	\sum_{r=1}^n \rho ( \hat{\theta}_{(r+1)} - \hat{\theta}_{(r)} ) \geq \sum_{r=1}^n \rho ( \tilde{\theta}_{(r+1)} - \tilde{\theta}_{(r)} )
\end{align*}
and that the squared loss of $\tilde{\mbb\theta}$ is strictly smaller than the squared loss of $\hat{\mbb\theta}$, thus contradicting optimality of $\hat{\mbb\theta}$.
\end{proof}

\begin{proof} [Proof of Proposition~\ref{prop:uniquewhp}]
In this proof we consider the unconstrained formulation of the objective \eqref{eq:constrain2} discussed in Section~\ref{sec:constraintremarks}. 
Suppose that $(\bar{Y}_k)_{k=1}^K$ is such that there are two distinct solutions to \eqref{eq:univarobjordered}, $\hat{\mbb\theta}^{(1)} \neq \hat{\mbb\theta}^{(2)}$. Let us assume that the levels are indexed such that $\bar{Y}_1 \leq \cdots \leq \bar{Y}_K$. Define $k^* = {\max \{ k : \hat{\theta}^{(1)}_k \neq \hat{\theta}^{(2)}_k\}}$ to be the largest index at which the two solutions take different values and note that we must have $\hat{\theta}^{(r)}_1 \leq \cdots \leq \hat{\theta}^{(r)}_K$.

First consider the case where $k^* < K$. Then
 \[
 S_r := \{k : \hat{\theta}^{(r)}_k = \hat{\theta}^{(r)}_{k^*+1} \} \subseteq \{k^*+1, k^* +2,\ldots,K\},
 \]
 for $r=1, 2$.
We now argue that we must have $\hat{\theta}^{(1)}_{k^* + 1} = \hat{\theta}^{(2)}_{k^* + 1} =:t^* \geq (\hat{\theta}^{(1)}_{k^*} \vee \hat{\theta}^{(2)}_{k^*}) + \gamma \lambda$. Indeed, suppose not, and suppose that without loss of generality $\hat{\theta}^{(2)}_{k^*} > \hat{\theta}^{(1)}_{k^*}$. Fix $r \in \{1,2\}$. The directional derivative of the objective in the direction of the binary vector with ones at the indices given by $S_r$ and zeroes elsewhere evaluated at $\hat{\mbb\theta}^{(r)}$ must be $0$. But comparing these for $r=1, 2$, we see they are identical except for the term $\rho'(\theta_{k^*+1} - \hat{\theta}^{(r)}_{k^*})$, which will be strictly larger for $r=2$, giving a contradiction. This then implies that both $\hat{\theta}^{(1)}_{k^*}$ and $\hat{\theta}^{(2)}_{k^*}$ must minimise $f_{k^*}$ over $\theta \leq t^* - \gamma\lambda$ since
the full objective value is
\[
Q(\hat{\mbb\theta}^{(r)}) = f_{k^*}(\hat{\theta}^{(r)}_{k^*})  + \frac{1}{2}\gamma \lambda^2 + \text{(terms featuring only index $k^* + 1$ or higher)}
\] for $r=1,2$.
We also have that when $k^*=K$, both $\hat{\theta}^{(1)}_{k^*}$ and $\hat{\theta}^{(2)}_{k^*}$ must minimise $f_{k^*}$.

Using the functions $g_{k-1}$ as defined in \eqref{eq:gfunction}, we have the simple relationship that $g_{k-1}(\theta_k) = f_k ( \theta_k) - \frac{1}{2}w_k ( \bar{Y}_k - \theta_k)^2$. In particular, properties (i) and (iii) of Lemma~\ref{lem:optim_prop} hold with $f_k$ replaced by $g_{k-1}$. These can be characterised as $g_{k-1}(\theta_k) = \check{q}_{k,r}(\theta_k)$ for $\theta_k \in I_{k,r}$, where $I_{k,r}$ are the intervals associated with $f_k$ and $\check{q}_{k,r}(\theta_k) := q_{k,r}(\theta_k) - \frac{1}{2}w_k ( \bar{Y}_k - \theta_k)^2$. Note that for each $r$, $\check{q}_{k,r}$ depends on the values of $\bar{Y}_{1},\ldots, \bar{Y}_{k-1}$ but not that of $\bar{Y}_k$ (observe that $q_{k,r}(\theta_k)$ includes a term $\frac{1}{2}w_k ( \bar{Y}_k - \theta_k)^2$; see \eqref{eq:f_k_def}).

Now as $\hat{\theta}^{(1)}_{k^*} \leq \hat{\theta}^{(1)}_{k^* + 1} - \gamma \lambda$ and $\hat{\theta}^{(2)}_{k^*}  \leq \hat{\theta}^{(2)}_{k^* + 1} - \gamma \lambda$ (if $k^* < K$), by Lemma~\ref{lem:optim_prop} (iii) both must be local minima of $f_{k^*}$, and we have that there must exist distinct $r_1 \neq r_2$ such that $\hat{\theta}^{(1)}_{k^*} \in I_{k^*,r_1}$ and $\hat{\theta}^{(2)}_{k^*} \in I_{k^*,r_2}$. Let
\begin{align*}
		\check{q}_{k^*,r_1}(x) &= a_1 x^2 + b_1 x + c_1, \\
	\check{q}_{k^*,r_2}(x) &= a_2 x^2 + b_2 x + c_2.
\end{align*}
Since $\hat{\theta}^{(1)}_{k^*}$ must be the minimum of $\check{q}_{k^*,r_1}(\theta_{k^*}) + \frac{1}{2}w_{k^*}(\bar{Y}_{k^*} - \theta_{k^*})^2$ (and similarly for $\hat{\theta}^{(2)}_{k^*}$), we must have that
\begin{align}
	\min_x \left\{ a_1 x^2 + b_1 x + c_1 + \frac{1}{2}w_{k^*}(\bar{Y}_{k^*} - x)^2 \right\} &= \min_x \left\{ a_2 x^2 + b_2 x + c_2 + \frac{1}{2}w_{k^*}(\bar{Y}_{k^*} - x)^2 \right\} \notag\\ 
	\implies c_1 - \frac{(b_1 - w_{k^*} \bar{Y}_{k^*})^2}{4a_1 + 2w_{k^*}} &= c_2 -  \frac{(b_2 - w_{k^*} \bar{Y}_{k^*})^2}{4a_2 + 2w_{k^*}}. \label{eq:unique1}
\end{align}
This is a quadratic equation in $\bar{Y}_{k^*}$, so there are at most two values for which \eqref{eq:unique1} holds. Considering all pairs $r_1, r_2$, we see that in order for there to exist two solutions $\hat{\mbb\theta}^{(1)} \neq \hat{\mbb\theta}^{(2)}$, $\bar{Y}_{k^*}$ must take values in a set of size at most $c(K)$, for some function $c : \mathbb{N} \rightarrow \mathbb{N}$. 

Now let
\[
\mathcal{S} := \{ (\bar{Y}_k)_{k=1}^K : \text{ the minimiser of the objective is not unique}\} \subseteq \R^K.
\]
What we have shown, is that associated with each element $(\bar{Y}_k)_{k=1}^K \in \mathcal{S}$, there is at least one $k^*$ such that
\[
|\{(\bar{Y}'_k)_{k=1}^K \in \mathcal{S} : \bar{Y}'_k = \bar{Y}_k \text{ for all } k  \neq k^* \}| 
\]
is bounded above by $c(K)$. Now for each $j=1,\ldots,K$, let $\mathcal{S}_j$ be the set of $(\bar{Y}_k)_{k=1}^K \in \mathcal{S}$ for which the there exists a $k^*$ with the property above and $k^*=j$. Note that $\cup_j \mathcal{S}_j = \mathcal{S}$. Now $\mathcal{S}_j \subset \R^K$ has Lebesgue measure zero as a finite union of graphs of measurable functions $f : \R^{K-1} \to \R$. Thus $\mathcal{S}$ has Lebesgue measure zero.
\end{proof}

\begin{proof}[Proof of Lemma~\ref{lem:optim_prop}.]
Assume, without loss of generality, that $\hat{\mu} = 0$.
We proceed inductively, assuming that the properties (i) and (iii) hold for $f_k$, and (ii) holds for $b_{k+1}$. Additionally we include in our inductive hypothesis that for all $x$, $f'_k(x_-) \geq f'_k(x_+)$, where we define $f_k'(x_-)$ and $f'_k(x_+)$ to be the left-derivative and right-derivative of $f_k$ at $x$, respectively. We note that these trivially hold for the base case $f_1$, and the case $b_2$ can be checked by direct calculation.

We first prove (i), that $f_{k+1}$ is continuous, coercive, and piecewise quadratic and with finitely many pieces. We then show that $f'_{k+1}(x_-) \geq f'_{k+1}(x_+)$ for all $x$, which allows us to show that (iii) holds for $f_{k+1}$. Finally, we use these results to show that (ii) holds for $b_{k+2}$.

We now show that $f_{k+1}$ is coercive and continuous. Clearly $g_k(x) \geq \min_{y\leq x} f_k(y)$, so it follows that $g_k(x) \rightarrow \infty$ as $x \rightarrow -\infty$ as $f_k$ is coercive. Furthermore $g_k$ is bounded from below as $f_k$ is coercive and continuous. Thus since $f_{k+1}(x) = g_k(x) + \frac{1}{2} w_{k+1}(\bar{Y}_{k+1} - x)^2$, it follows that $f_{k+1}$ is coercive. Next as $g_k(x) = \min_{y \leq x} f_k(y) + \rho(y - x)$, and $f_k$ and $\rho$ are continuous, it follows that $g_k$ is continuous and therefore that $f_{k+1}$ is continuous. 
	
	To see why $f_{k+1}$ is piecewise quadratic with finitely many pieces, we observe that it can be written $f_{k+1}(x) = f_k(b_{k+1}(x)) + \rho(x - b_{k+1}(x)) + \frac{1}{2}w_{k+1}(\bar{Y}_{k+1} - x)^2$. We have by our inductive hypothesis that $f_k$ is piecewise quadratic and $b_{k+1}(x)$ is piecewise linear, both with finitely many pieces. Since the composition of a piecewise linear function inside a piecewise quadratic function is piecewise quadratic, the remainder of (i) is shown.
	
	We now turn our attention to (iii), and define for $x \in \R$:
	\begin{align*}
		y_*(x) &= \sargmin_{y \leq x} f_k(y) + \rho(x - y), \\
				y^*(x) &= \sargmin_{y \leq x} f_{k+1}(y) + \rho(x - y).
	\end{align*}
	We will first show that $f'_{k+1}(x_+) \leq f'_{k+1}(x_-)$ for all $x \in \R$. Suppose that we are increasing $x$ and we have reached a point where $g_k(x)$ is not differentiable (that is, the left-derivative and the right-derivative do not match). By assumption (ii) for $b_{k+1}$, we can assume that there is some window $\delta > 0$ such that $y_*(t)$ is linear for $t \in (x - \delta, x)$, say $y_*(t) = \alpha + \beta t$.
	
	In order to proceed with the following argument, we must show that for sufficiently small $\epsilon > 0$, we have $\alpha + \beta(x + \epsilon) \leq x + \epsilon$. If $\alpha + \beta x < x$, this is immediate. Therefore it remains to consider the case $\alpha + \beta x = x$, for which we show that we must have $\alpha = 0$ and $\beta = 1$, i.e\ $y_*(t) = t$ for $t \in (x - \delta, x)$. This follows from the observation that if $y_*(t) < t$, then for all $t_1 > t$ we have $y_*(t_1) \notin ( y_*(t), t]$. Indeed, suppose not, then 
	\begin{align*}
		f_k(y_*(t_1)) + \rho(t_1 - y_*(t_1)) &< f_k(y_*(t)) + \rho(t_1 - y_*(t)) \\
		\implies \quad f_k(y_*(t_1)) + \rho(t - y_*(t_1)) &< f_k(y_*(t)) + \rho(t_1 - y_*(t))   + \rho(t - y_*(t_1)) - \rho(t_1 - y_*(t_1)) \\
		&\leq f_k(y_*(t)) + \rho(t - y_*(t)),
	\end{align*}
 contradicting the definition of $y_*(t)$. The last line uses $\rho(t_1 - y_*(t)) - \rho(t_1 - y_*(t_1)) \leq \rho(t - y_*(t)) - \rho(t - y_*(t_1))$, which follows from concavity of $\rho$ and $y_*(t) < y_*(t_1) \leq t < t_1$. 

  	With this established, we have that:
  	\begin{align*}
  	g_k(x - \epsilon) &= f_k(\alpha + \beta (x - \epsilon)) + \rho(x-\epsilon - (\alpha + \beta(x - \epsilon))) \\
  		g_k(x + \epsilon) &= f_k(y_*(x + \epsilon)) + \rho(x + \epsilon - y_*(x + \epsilon)) \\
  		&\leq f_k(\alpha + \beta (x + \epsilon)) + \rho(x + \epsilon - (\alpha + \beta(x + \epsilon))).
  	\end{align*}
Note that $f_k$ has both left-derivatives and right-derivatives at every point in $\R$. Suppose first that $\beta \geq 0$, and we observe that
\begin{align*}
	g_k'(x_-) &= \beta f_k'(y_*(x)_{-}) + (1 - \beta )\rho'(x - y_*(x))
\end{align*}
Then by the basic definition of the right-derivative, 
\begin{align*}
	g'_k(x_+) &= \lim_{\epsilon \rightarrow 0^+} \frac{ f_k(y_*(x + \epsilon)) + \rho(x + \epsilon - y_*(x + \epsilon)) - f_k(y_*(x)) - \rho(x - y_*(x))}{\epsilon}\\
	&\leq \lim_{\epsilon \rightarrow 0^+} \frac{1}{\epsilon}\bigg[ f_k(\alpha + \beta (x + \epsilon)) + \rho(x+\epsilon - (\alpha + \beta(x + \epsilon))) \\
	&\hspace{1.5cm} - f_k(\alpha + \beta x ) - \rho(x- (\alpha + \beta x)) \bigg]\\
	&= \beta f'_k( y_*(x)_{+}) + (1 - \beta) \rho'(x - y_*(x)) \\
	&= g'_k(x_-) + \beta( f'_k( y_*(x)_{+}) - f'_k( y_*(x)_{-})) \\
	&\leq g'_k(x_-),
\end{align*}
where the last inequality follows from our inductive hypothesis that $f'_k(y_{+}) \leq f'_k(y_{-})$ for all $y \in \R$. An analogous argument shows that the same conclusion holds when $\beta < 0$.

Now we use this to prove the claim. Because there are no points of $f_{k+1}$ at which the left-derivative is less than the right-derivative, without loss of generality we claim that $f_{k+1}$ is differentiable at $y^*(x)$ for all $x$, unless $y^*(x) = x$. Indeed, suppose not, then we have that $f'_{k+1}(y^*(x)_-) > f'_{k+1}(y^*(x)_+)$ and necessarily that defining $h(y) := f_{k+1}(y) + \rho(x-y)$, we have $0 \in \partial h(y^*(x))$.
But since $h(y^*(x)_+) < h(y^*(x)_-)$, we contradict the optimality of $y^*(x)$ as this point is in fact a local maximum.

We finally consider claim (ii). By (iii), we have that for every point $x$, $y^*(x)$ is either $x$ or at the minimum of one of the quadratic pieces of $f_{k+1}(\cdot) + \rho(x - \cdot)$. In either case, we have that $y^*(x)$ is linear in $x$ and thus $f_{k+1}(y^*(x)) + \rho(x - y^*(x))$ is quadratic in $x$. We can define $g_{k+1}(x)$ pointwise as the minimum of this finite set of quadratic functions of $x$, whose expressions are given in Appendix~\ref{sec:analyticminima}. Importantly, the coefficients in the linear expression $y^*(x)$ of $x$ depend only on which of these functions is the minimum at $x$. As the number of intersections between elements in this set of quadratic functions is bounded above by twice the square of the size of the set, we can conclude that $b_{k+2}(x)$ is piecewise linear and with a finite number of pieces, thus concluding the proof.
\end{proof}

\subsection{Computation time experiments} \label{sec:comptimeexp}
A small experiment was performed to demonstrate the runtimes one can expect in practice for the univariate problem. Note that this clustering is applied iteratively in the block coordinate descent procedure we propose to use in multivariate settings.
We considered 3 settings: one with no signal, one with 2 true clusters and one with 5 true clusters. 
Independent and identically distributed Gaussian noise was added to each of the subaverages.
As in Section~\ref{sec:adultds2} the number of categories was increased by random splitting of the levels.
Each of these tests were repeated 25 times, on a computer with a 3.2GHz processor. The results are shown in Figure~\ref{fig:timeplot}.
\begin{figure}[h!]
\centering
\begin{minipage}[b]{0.95\textwidth} 
\includegraphics[width=\textwidth]{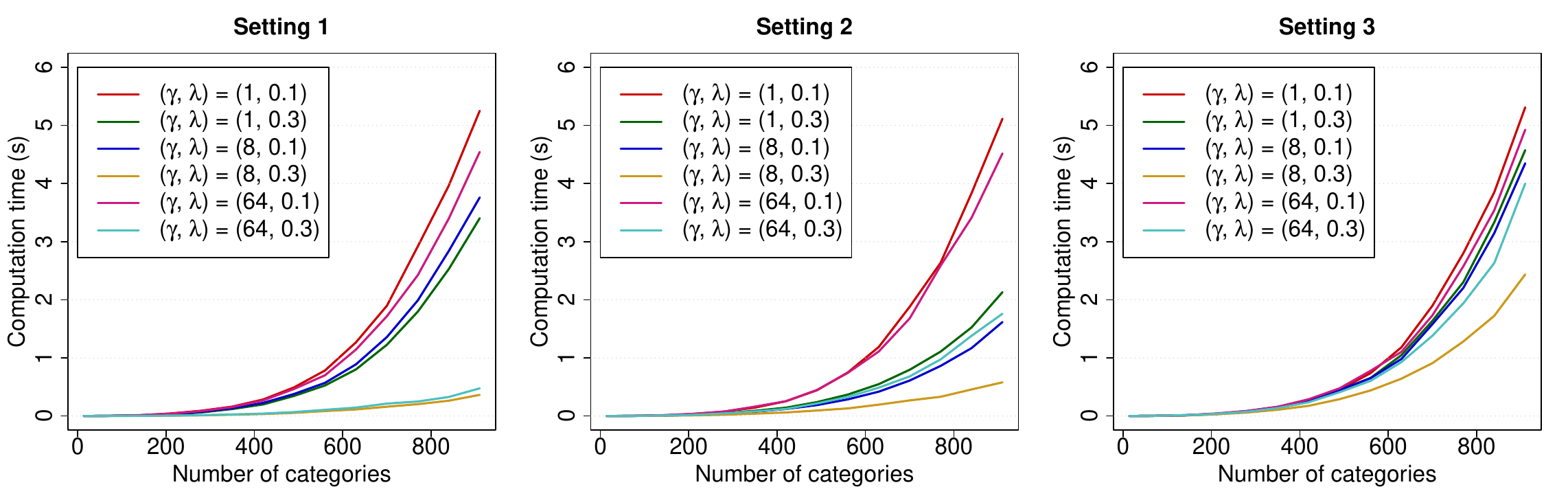}
\end{minipage}
\caption{Computation times for solving the univariate problem.}	
 \label{fig:timeplot}
\end{figure}

\subsection{Discretised algorithm} \label{sec:discretealg}
For very large-scale problems, speed can be improved if we only allow coefficients to take values in some fixed finite grid, rather than any real value. Below we describe how such an algorithm would approximately solve the univariate objective \eqref{eq:univarobjordered}. We will use the unconstrained objective as discussed in Section~\ref{sec:constraintremarks}.
We would first fix $L$ grid points $\vartheta_1 < \cdots < \vartheta_L$, and then proceed as described in Algorithm~\ref{alg:discretealg}. 

\begin{algorithm}
	\caption{Discrete algorithm for computing approximate solution to \eqref{eq:univarobjordered}} \label{alg:discretealg}
	\begin{algorithmic}[1]
	\FOR{$l = 1, \ldots, L$}
	\STATE Set $F_\text{new}(l) = \frac{1}{2} w_1 ( \bar{Y}_1 - \vartheta_l)^2$
	\STATE Set $B(1,l) = l$
	\ENDFOR
	\FOR{$k = 2, \ldots, K$}
	\STATE Set $F_\text{old} = F_\text{new}$
	\FOR{$l = 1, \ldots, L$} 
	\STATE Set $B(k, l) = \argmin_{l' \in \{ 1, \ldots, l \} } F_\text{old}(l') + \rho(\vartheta_l - \vartheta_{l'}) + \frac{1}{2} w_k ( \bar{Y}_k - \vartheta_l)^2$
	\STATE Set $F_\text{new}(l) = F_\text{old}(B(k,l)) + \rho(\vartheta_l - \vartheta_{B(k,l)}) + \frac{1}{2} w_k ( \bar{Y}_k - \vartheta_l)^2$
	\ENDFOR
	\ENDFOR
	\STATE Set $B^*(K) = \argmin F_\text{new}$, and $\hat{\theta}_K = \vartheta_{B^*(K)}$
	\FOR{$k = K - 1, \ldots, 1$}
	\STATE Set $B^*(k) = B(k+1, B^*(k+1))$, and $\hat{\theta}_k = \vartheta_{B^*(k)}$
	\ENDFOR
	\end{algorithmic}
\end{algorithm}

This algorithm has the same basic structure to the approach we use in Section~\ref{sec:univariate} for computing the exact global optimum.
The difference is that now, instead of as in \eqref{eq:fkfull}, we define $f_k$ in the following way:
\[
	f_k(\theta_k) := \min_{\substack{(\theta_1,\ldots,\theta_{k-1})^T \in \{ \vartheta_1,\ldots,\vartheta_L \}^{k-1} \\ \theta_1 \leq \cdots \leq \theta_{k-1} \leq \theta_k } } \bigg\{ \frac{1}{2} \sum_{l=1}^{k} w_l(\bar{Y}_l - \theta_l)^2 + \sum_{l=1}^{k-1}\rho(\theta_{l+1} - \theta_l) \bigg\}.
\]
The objects $F$ and $B$ play analogous roles to $f_k$ and $b_k$ in Section~\ref{sec:univariate}.
Since we restrict $\theta_k \in \{ \vartheta_1, \ldots, \vartheta_L \}$, we only need to store the values that $f_k$ takes at these $L$ values; this is the purpose of the vector $F$ in Algorithm~\ref{alg:discretealg}. 
Similarly, the rows $B(k, \cdot)$ serve the same purpose as the functions $b_k$ where, again, we only need to store $L$ values corresponding to the different options for $\theta_k$.

This algorithm returns the optimal solution $\hat{\mbb \theta}$ to the objective where each of the coefficients are restricted to take values only in $\{ \vartheta_1, \ldots, \vartheta_L \}$. We must ensure that the grid of values has fine enough resolution that interesting answers can be obtained, which requires $L$ being sufficiently large. The number of clusters obtained by this approximate algorithm is bounded above by $L$, so this must not be chosen too small.

One can see that the computational complexity of this algorithm is linear in $K$, with a total of $O(K L^2)$ operations required. This is of course in addition to the $O(n)$ operations needed to compute $w_1,\ldots,w_K$ and $\bar{Y}_1, \ldots, \bar{Y}_K$ beforehand. In particular, choosing $L \lesssim \sqrt{K}$ guarantees that the complexity of this algorithm is at worst quadratic in $K$.

\section{Proofs of results in Section~\ref{sec:theory}} \label{sec:theoryproof}

\subsection{Proof of Theorem~\ref{thm:univarglobal}} \label{sec:uniproof}
The proof of Theorem~\ref{thm:univarglobal} requires a number of auxiliary lemmas, which can be found in Section~\ref{sec:unilemmas}.

\begin{proof}[\unskip\nopunct]
Let us define $R_i = Y_i - \hat{\mu}$ for $i  = 1, \ldots, n$, and $\bar{R}_k = \frac{1}{n_k} \sum_{i=1}^n \ind_{\{ X_i = k \}} R_i$ for $k = 1, \ldots, K$. 
Note that
\begin{align*}
	R_i = \sum_{k=1}^K \ind_{\{ X_i = k \}}\theta^0_k + (P \varepsilon)_i
\end{align*}
where $P = I - \mathbf{1}\mathbf{1}^T /n$. 

For each $k = 1, \ldots, K$, we define the event
	\begin{align*}
	 \Lambda_k = \left\{ 	\left\vert \frac{1}{n_k} \sum_{i=1}^n \ind_{\lbrace X_i = k \rbrace} (P\varepsilon)_i  \right\vert < \frac{1}{2} \sqrt{ \eta \gamma_* s} \lambda \right\}.
	\end{align*}
	 By a union bound, we have that $\mathbb{P}( \cap_{k=1}^K \Lambda_k) \geq 1 - \sum_{k=1}^K \mathbb{P}(\Lambda_k^c)$. Now observe we can write \[
	\frac{1}{n_k} \sum_{i=1}^n \ind_{\{ X_i = k \} } (P \varepsilon)_i =  {v^{(k)}}^T P \varepsilon,
	\]
	where we define $v^{(k)} \in \R^n$ by $v^{(k)}_i = \frac{1}{n_k} \ind_{\{ X_i = k \}} $.
	 Since $P$ is an orthogonal projection matrix, we have that $\| P v^{(k)} \|_2 \leq \| v^{(k)} \|_2 = \frac{1}{\sqrt{n_k}}$. It follows that $ {v^{(k)}}^T P \varepsilon $ is sub-Gaussian with parameter $\sigma / \sqrt{n_k}$. Applying the standard sub-Gaussian tail bound, we obtain
	\begin{align*}
		\mathbb{P}(\Lambda_k^c)  &= \mathbb{P}\left(  \left\vert \frac{1}{n_k} \sum_{i=1}^n  \ind_{\lbrace X_i = k \rbrace}  (P \varepsilon)_i \right\vert \geq \frac{1}{2} \sqrt{ \eta \gamma_* s} \lambda \right) \\
		&\leq  2 \exp\left( - \frac{n w_k \eta \gamma_* s \lambda^2}{8 \sigma^2}   \right),
	\end{align*}
	where recall that $w_k = n_k / n$.
	Therefore, we have that
	\begin{align}
		\mathbb{P} \left( \cap_{k=1}^K \Lambda_k \right) \geq 1 - 2 \sum_{k=1}^K \exp \left( - \frac{n w_k \eta \gamma_* s \lambda^2}{8 \sigma^2}  \right) 
		\geq 1 - 2  \exp \left( - \frac{n w_{\text{min}} \eta \gamma_* s \lambda^2}{8 \sigma^2}  + \log ( K ) \right). \label{eq:univartailboundcompute}
	\end{align}
In the following we work on the intersection $\Lambda := \cap_{k=1}^K \Lambda_k$. This entails that for each $k$, $| \bar{R}_k - \theta^0_k | < \sqrt{\eta \gamma_* s} \lambda / 2$.
We now relabel indices such that $\bar{R}_1 \leq \cdots \leq \bar{R}_K$, and so from Proposition~\ref{thm:orderprop} that $\hat{\theta}_1 \leq \cdots \leq \hat{\theta}_K$. Since our assumption \eqref{eq:univarsigstrength} implies $\Delta( \mbb\theta^0) \geq \sqrt{\eta \gamma_* s} \lambda$, it follows that on $\Lambda$ the observed ordering is consistent with the ordering of the true coefficients, i.e.\ there exist $0=k_0 < k_1 < \cdots < k_s=K$ such that
\begin{align}
	\theta^0_1 = \cdots= \theta^0_{k_1} < \theta^0_{k_1 + 1} = \cdots = \theta^0_{k_2} < \cdots < \theta^0_{k_{s-1} + 1} = \cdots = \theta^0_{k_s}. \label{eq:oracle_indices}
\end{align}
Indeed, observe that for $j = 1, \ldots, s - 1$, we have by the triangle inequality and \eqref{eq:univarsigstrength}, the stronger property that
\begin{align}
	\bar{R}_{k_j + 1} - \bar{R}_{k_{j}} &> 3 \left(1 + \frac{\sqrt{2}}{\eta} \right) \sqrt{\gamma \gamma^*} \lambda - \sqrt{\eta \gamma_* s}\lambda \notag\\
	&> \gamma \lambda + 2 ( \sqrt{2 s / \eta} \sqrt{\gamma} \lambda \vee \gamma \lambda) + 2 \sqrt{\eta \gamma_* s}\lambda. \label{eq:univar_sep_established}
\end{align}
Our optimisation objective is therefore
\begin{align}
	\hat{\mbb\theta} \in \argmin_{\mbb\theta \in \Theta} \frac{1}{2} \sum_{k=1}^K w_k ( \bar{R}_k - \theta_k)^2 + \sum_{k=1}^K \rho(\theta_{k+1} - \theta_k). \label{eq:univarproof_innerobjective}
\end{align}
Since $\bar{R}_{k_j} - \bar{R}_{k_{j-1}+ 1} < \sqrt{\eta \gamma_* s} \lambda$ for $j = 1, \ldots, s$, it follows from Lemma~\ref{thm:oraclelselemma}
that $ \hat{\theta}_{k_j + 1} - \hat{\theta}_{k_j} \geq \gamma \lambda$ for $j = 1, \ldots , s - 1$, so
\begin{align}
	Q(\hat{\mbb\theta}) &= \frac{1}{2}\sum_{k=1}^K w_k (\bar{R}_k - \hat{\theta}_k)^2 + \sum_{k = 1}^{K-1} \rho ( \hat{\theta}_{k+1} - \hat{\theta}
	_{k}) \notag \\
	&= \frac{1}{2} \sum_{k = 1}^K w_k (\bar{R}_k - \hat{\theta}_k)^2 + \sum_{j = 1}^s \sum_{k = k_{j-1}+1}^{k_j - 1} \rho(\hat{\theta}_{k+1} - \hat{\theta}_k) + \frac{s - 1}{2} \gamma \lambda^2 \\ 
	&= \min_{\mbb\theta \in \R^K} \left(\frac{1}{2} \sum_{k = 1}^K w_k (\bar{R}_k - \theta_k)^2 + \sum_{j = 1}^s \sum_{k = k_{j-1}+1}^{k_j - 1} \rho(\theta_{k+1} - \theta_k)\right) + \frac{s - 1}{2} \gamma \lambda^2. \label{eq:univarsepobj}
\end{align}
Observe that we can have $ k_{j-1} + 1 > k_j - 1$ for some $j$, in which case we take the sum over that range to be zero.
Note that \eqref{eq:univarsepobj} can be optimised over $(\theta_{k_{j-1}+1}, \ldots, \theta_{k_j})$ separately for each $j = 1, \ldots, s$.
If $s = 1$, i.e.\ the true signal is zero, then the result follows from Lemma~\ref{thm:nullconsistency}. Now we see what happens when $s > 1$.

Without loss of generality, consider $j = 1$  and note that if $k_1 = 1$ it is immediate that $\hat{\theta}_1 = \hat{\theta}^0_1$. Hence, we can assume that $k_1 > 1$. We note that $\hat{\theta}^0_1 = \sum_{k=1}^{k_1} w_k \bar{R}_k / w^0_1$, where we define $w^0_k = n^0_k / n$. We see that our goal is to compute
\begin{align}
	&\argmin_{\mbb\theta \in \R^{k_1}} \frac{1}{2} \sum_{k=1}^{k_1} w_k ( \bar{R}_k - \theta_{k})^2 + \sum_{k = 1}^{k_1 - 1} \rho
	( \theta_{k + 1} - \theta_k ) \notag \\
	= \hat{\theta}^0_1\mathbf{1} \: + \: &\argmin_{\mbb \theta \in \R^{k_1}} \frac{1}{2} \sum_{k = 1}^{k_1} w_k ( \tilde{R}_k - \theta_k )^2 - \sum_{k = 1}^{k_1 - 1} \rho
	 (  \theta_{k+1} - \theta_k ), \label{eq:univarolserepar}
\end{align}
where $\mathbf{1} \in \R^{k_1}$ is a vector of ones and $\tilde{R}_k := \bar{R}_k - \hat{\theta}^0_1$ for $k = 1, \ldots, k_1$.
Note that we subtract $\hat{\theta}^0_1$ to ensure that
\[
\sum_{k=1}^{k_1} w_k \tilde{R}_k =0,
\]
as required for application of Lemma~\ref{thm:nullconsistency}.
We have by assumption that for $k \in 1,\ldots, k_1$, $| \tilde{R}_k | \leq \sqrt{ \eta \gamma_* s} \lambda / 2 \leq (2 \wedge \sqrt{w^0_1 \gamma}) \lambda / w^0_1$.
Thus, Lemma~\ref{thm:nullconsistency} can be applied with $\check{w} = w^0_1$ and it follows that $\hat{\theta}_k = \hat{\theta}^0_1$ for $k = 1, \ldots, k_1$.
\end{proof}

\subsubsection{Auxiliary lemmas} \label{sec:unilemmas}
Here we prove a number of results required to obtain conditions for recovering the oracle least squares estimate in the univariate case.
Lemma~\ref{thm:nullconsistency} gives conditions for recovery of the true solution, in the case where there is zero signal.
Lemmas~\ref{thm:oraclelselemma} and \ref{thm:betalexist} ensure that the true levels are far enough apart that they can be separated.
Once we have this separation, we apply Lemma~\ref{thm:nullconsistency} on each of the levels to obtain the solution.

\begin{lem} \label{thm:quadoptlem}
Consider the optimisation problem
\begin{align*}
	x^* = \argmin_{x \geq 0} \frac{\kappa}{2} ( 2 \tau - x)^2 + \rho ( x),
\end{align*}
	where $\tau > 0$ and $\kappa \in (0, 1]$. Suppose further that $\tau < ( 1 \wedge \sqrt{ \kappa \gamma } )  \lambda / 2 \kappa$.
	Then $x^* = 0$ is the unique optimum.
\end{lem}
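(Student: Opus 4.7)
The plan is to exploit the piecewise quadratic structure of the MCP and simply show that $F(x) := \tfrac{\kappa}{2}(2\tau - x)^2 + \rho(x)$ strictly exceeds $F(0) = 2\kappa\tau^2$ for every $x > 0$. Subtracting $F(0)$ and expanding gives the convenient form
\begin{align*}
F(x) - F(0) = \tfrac{\kappa x^2}{2} - 2\kappa\tau x + \rho(x),
\end{align*}
which I will analyse separately on the two regions $(0, \gamma\lambda]$ (where $\rho(x) = \lambda x - x^2/(2\gamma)$) and $(\gamma\lambda, \infty)$ (where $\rho$ is the constant $\gamma\lambda^2/2$).

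On the inner region, substituting the quadratic form of $\rho$ and factoring yields
\begin{align*}
F(x) - F(0) = x \left[ \tfrac{1}{2}\bigl(\kappa - 1/\gamma\bigr) x + (\lambda - 2\kappa\tau) \right].
\end{align*}
The bracket is affine in $x$, so positivity on $(0, \gamma\lambda]$ is equivalent to positivity at both endpoints $x=0$ and $x=\gamma\lambda$. At $x=0$ the bracket equals $\lambda - 2\kappa\tau$, which is positive since the hypothesis implies $\tau < \lambda/(2\kappa)$. At $x=\gamma\lambda$ it equals $\lambda(1+\kappa\gamma)/2 - 2\kappa\tau$; here I would use the other half of the hypothesis, $\tau < \sqrt{\kappa\gamma}\,\lambda/(2\kappa)$, together with the AM--GM inequality $2\sqrt{\kappa\gamma} \le 1 + \kappa\gamma$, to obtain $2\kappa\tau < \sqrt{\kappa\gamma}\,\lambda \le \lambda(1+\kappa\gamma)/2$.

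On the outer region $x > \gamma\lambda$, $F$ is a simple quadratic in $x$ and
\begin{align*}
F(x) - F(0) = \tfrac{\kappa}{2}(x - 2\tau)^2 + \tfrac{\gamma\lambda^2}{2} - 2\kappa\tau^2.
\end{align*}
If $2\tau \le \gamma\lambda$, the infimum over $x > \gamma\lambda$ is attained in the limit $x \downarrow \gamma\lambda$ and reduces by continuity to the endpoint case already handled. If instead $2\tau > \gamma\lambda$, the minimum is at $x = 2\tau$ with value $\gamma\lambda^2/2 - 2\kappa\tau^2$, which is strictly positive exactly when $\tau < \sqrt{\gamma/\kappa}\,\lambda/2 = \sqrt{\kappa\gamma}\,\lambda/(2\kappa)$ --- precisely the second half of the hypothesis on $\tau$.

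Putting these together gives $F(x) > F(0)$ for all $x > 0$, so $x^{*} = 0$ is the unique minimiser. There is no real obstacle; the only step requiring a little care is the endpoint check at $x = \gamma\lambda$, where the AM--GM step bridges the two components of $\min\{1, \sqrt{\kappa\gamma}\}$. The two halves of the condition $\tau < (1 \wedge \sqrt{\kappa\gamma})\lambda/(2\kappa)$ correspond cleanly to the two binding constraints: $\tau < \lambda/(2\kappa)$ controls the behaviour for small $x$, while $\tau < \sqrt{\kappa\gamma}\,\lambda/(2\kappa)$ controls the ``flat tail'' of $\rho$ for large $x$.
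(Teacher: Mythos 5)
Your proof is correct, and it takes a more self-contained route than the paper's. The paper first rescales the problem (dividing by $\kappa$, which turns $\rho_{\gamma,\lambda}$ into $\rho_{\kappa\gamma,\lambda/\kappa}$), then splits according to whether $\kappa\gamma \geq 1$ — where it simply cites equation (2.3) of Breheny and Huang (2011), since the rescaled objective is convex — or $\kappa\gamma < 1$, where it argues that any interior stationary point on the MCP region must be a maximum, so only the endpoint comparisons $F(0)$ versus $F(2\tau \wedge \gamma\lambda)$ and $F(2\tau)$ need to be checked; the two halves of the hypothesis emerge from those two comparisons. You instead verify $F(x) > F(0)$ directly on the two pieces of $\rho$: on $(0,\gamma\lambda]$ the factorisation $F(x)-F(0) = x\bigl[\tfrac{1}{2}(\kappa - 1/\gamma)x + (\lambda - 2\kappa\tau)\bigr]$ reduces everything to positivity of an affine bracket at the two endpoints, where the AM--GM step $2\sqrt{\kappa\gamma} \leq 1+\kappa\gamma$ bridges the two components of the hypothesis, and on $(\gamma\lambda,\infty)$ the flat penalty gives an explicit quadratic whose minimum value is $\gamma\lambda^2/2 - 2\kappa\tau^2$. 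What your argument buys is that it avoids the external citation, avoids the rescaling, and handles the convex ($\kappa \geq 1/\gamma$) and concave ($\kappa < 1/\gamma$) regimes of the inner piece uniformly, since an affine function on an interval is positive as soon as it is positive at both endpoints regardless of its slope; the paper's version buys a slightly shorter write-up by outsourcing the convex case and by exploiting the a priori localisation $x^* \in [0,2\tau]$. Both yield the same strict inequalities, hence uniqueness.
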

\begin{proof}
	We first observe that
	\begin{align*}
		x^* = \argmin_{x \geq 0} \frac{\kappa}{2} ( 2 \tau - x)^2 + \rho_{\gamma, \lambda} ( x) = \argmin_{x \geq 0} \frac{1}{2} ( 2 \tau - x)^2 + \rho_{\kappa \gamma, \lambda / \kappa} ( x)  .
	\end{align*}
	For convenience, we define $ F(x) := \ ( 2 \tau - x)^2 / 2 + \rho_{\kappa \gamma, \lambda / \kappa} ( x)$.
	It now suffices to show that $F$ is uniquely minimised at $0$ provided $\tau < ( 1 \wedge \sqrt{ \kappa \gamma } )  \lambda / 2 \kappa$. 
	We can clearly see that $x^* \in [0, 2 \tau ]$.
	Equation (2.3) of \citet{breheny2011coordinate} gives the result when $\kappa \gamma \geq 1$.
	
	When $\kappa \gamma < 1$, we see that any stationary point of $F$ in $[0, \gamma \lambda \wedge 2 \tau ]$ must be a maximum, since on this interval $F(x)$ is a quadratic function with a negative coefficient of $x^2$. Therefore its minimum over $[0, \gamma\lambda]$ is attained at either $x = 0$ or $x = \gamma \lambda \wedge 2 \tau$.
	If $ 2 \tau \leq \gamma \lambda$, then it suffices to check that $F(0) < F(2 \tau)$.
	This holds if and only if $\tau < \gamma \lambda / ( \gamma \kappa + 1 )$, but since we are assuming $\tau \leq \gamma \lambda / 2$ and $\kappa \gamma < 1$, this is always satisfied.
	
	If $\gamma \lambda < 2 \tau$, then we can see that the minimum of $F$ over $[ \gamma \lambda, 2 \tau ]$ will be attained at exactly $2 \tau$.
	Thus, here it also suffices to check $F(0) < F(2 \tau )$, which holds if and only if $ \tau < \sqrt{ \gamma / \kappa } \lambda / 2$.
The final bound $\tau < ( 1 \wedge \sqrt{\kappa \gamma } ) \lambda / 2 \kappa$ follows from combining the results for these cases.
\end{proof}
The following is a deterministic result to establish separation between groups of coefficients.
\begin{lem} \label{thm:oraclelselemma}
Consider the setup of Theorem~\ref{thm:univarglobal}, and assume that $\hat{\mu} = 0$.
Suppose that $\bar{Y}_1 \leq \cdots \leq \bar{Y}_K$, and that for $j = 1, \ldots, s$ we have
\begin{align}
	\bar{Y}_{k_j} - \bar{Y}_{k_{j-1}+1} < \sqrt{\eta \gamma_* s} \lambda, \label{eq:oraclelsenc1}
\end{align}
where $k_j$ and $k_{j - 1}$ are as defined in \eqref{eq:oracle_indices}.
Suppose further that for $j = 1, \ldots, s - 1$,
\begin{align}
	\bar{Y}_{k_j + 1} - \bar{Y}_{k_j} \geq  \gamma \lambda + 2  ( \sqrt{2 s / \eta} \sqrt{\gamma} \lambda \vee \gamma \lambda) + 2 \sqrt{ \eta \gamma_* s} \lambda .\label{eq:oraclelselemsigsep}
\end{align}
Then for $j = 1, \ldots, s$, we have $\bar{Y}_{k_{j-1}+1} \leq \hat{\theta}_{k_{j - 1} + 1} \leq \hat{\theta}_{k_j} \leq \bar{Y}_{k_j}$.
\end{lem}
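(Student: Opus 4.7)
The plan is to argue by contradiction. Assume the conclusion fails. By Proposition~\ref{thm:orderprop} applied under the assumed ordering $\bar Y_1 \leq \cdots \leq \bar Y_K$, the global minimiser $\hat{\mbb\theta}$ must be non-decreasing, so the failure reduces to one of two symmetric cases at some group $j$: either $\hat\theta_{k_j} > \bar Y_{k_j}$ or $\hat\theta_{k_{j-1}+1} < \bar Y_{k_{j-1}+1}$. I will focus on the first; the second is obtained by reflection.

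The key construction is a group-wise clipped competitor $\tilde{\mbb\theta}$: for each $j$ and each $k \in G_j := \{k_{j-1}+1, \ldots, k_j\}$, set
\begin{align*}
\tilde\theta_k \;=\; \min\bigl(\max(\hat\theta_k, \bar Y_{k_{j-1}+1}), \, \bar Y_{k_j}\bigr).
\end{align*}
Within a group this is the projection of a monotone sequence onto an interval, which preserves monotonicity and is a $1$-Lipschitz contraction on consecutive differences; across a boundary, $\tilde\theta_{k_j} \leq \bar Y_{k_j} < \bar Y_{k_j+1} \leq \tilde\theta_{k_j+1}$ by~\eqref{eq:oraclelselemsigsep}. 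Because $\bar Y_k \in [\bar Y_{k_{j-1}+1},\bar Y_{k_j}]$ for every $k \in G_j$, any active clip strictly reduces the squared-error term, and the within-group penalty contributions $\rho(\tilde\theta_{k+1}-\tilde\theta_k)$ can only shrink since $\rho$ is non-decreasing and the differences contract. At each boundary the clipped gap satisfies $\tilde\theta_{k_j+1}-\tilde\theta_{k_j}\geq\bar Y_{k_j+1}-\bar Y_{k_j}\geq\gamma\lambda$, so $\rho(\tilde\theta_{k_j+1}-\tilde\theta_{k_j})$ equals the saturated value $\gamma\lambda^2/2$.

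The main obstacle is bounding the possible rise in the between-group penalty, which can increase by up to $\gamma\lambda^2/2$ per boundary. I will show this is compensated by squared-error savings coming from clusters of $\hat{\mbb\theta}$ that bridge a true boundary, i.e.\ clusters $\{l, \ldots, r\}$ with $l\leq k_j$ and $r\geq k_j+1$ and common value $c$. Splitting such a cluster at the boundary into sub-clusters of weights $W^{(1)}, W^{(2)}$ and weighted averages $\bar Y^{(1)} \leq \bar Y_{k_j}$ and $\bar Y^{(2)} \geq \bar Y_{k_j+1}$, the parallel-axis identity yields
\begin{align*}
\sum_{k \in \mathrm{cluster}} w_k(\bar Y_k - c)^2 \;-\; \min_{c_1,c_2}\Bigl[\sum w_k (\bar Y_k-c_{1\text{ or }2})^2\Bigr] \;\geq\; \tfrac{W^{(1)}W^{(2)}}{W}(\bar Y_{k_j+1}-\bar Y_{k_j})^2.
\end{align*}
Using the hypothesis $n^0_{\min}/n\geq \eta/s$ inherited from Theorem~\ref{thm:univarglobal} yields $W^{(1)}W^{(2)}/W \gtrsim \eta/s$, so the saving is at least of order $(\eta/s)(\bar Y_{k_j+1}-\bar Y_{k_j})^2$. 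The margin $2(\sqrt{2s/\eta}\sqrt{\gamma}\lambda\vee\gamma\lambda)$ appearing in~\eqref{eq:oraclelselemsigsep} is calibrated precisely so that $(\bar Y_{k_j+1}-\bar Y_{k_j})^2 \geq s\gamma\lambda^2/\eta$, making this saving strictly exceed the incurred $\gamma\lambda^2/2$. For boundaries where no cluster bridges, $\rho(\hat\theta_{k_j+1}-\hat\theta_{k_j})$ can fail to be saturated only if the cluster containing $k_j$ or $k_j+1$ sits far from its $\bar Y$-weighted centroid (forcing squared-error loss at those indices), which by the KKT stationarity for cluster values combined with the saturation of the far-side gap yields an analogous compensation. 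Summing over all boundaries and combining with the strict within-group savings gives $Q(\tilde{\mbb\theta})<Q(\hat{\mbb\theta})$, contradicting optimality of $\hat{\mbb\theta}$.

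The most delicate step will be the bookkeeping in the last paragraph: one must carefully enumerate the possible cluster configurations of $\hat{\mbb\theta}$ at each true boundary (cluster entirely within one group, cluster ending exactly at $k_j$, and cluster bridging $k_j$ and $k_j+1$), and verify that in each case the combined squared-error loss strictly dominates the corresponding potential penalty gain. The extra $2\sqrt{\eta\gamma_* s}\lambda$ term in~\eqref{eq:oraclelselemsigsep} absorbs the within-group deviations permitted by~\eqref{eq:oraclelsenc1}, while the $\gamma\lambda$ base ensures the clipped gap is saturated; the remaining margin is exactly what is needed for the compensation argument above.
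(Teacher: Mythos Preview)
Your overall strategy---build a competitor by projecting $\hat{\mbb\theta}$ group-wise onto the intervals $[\bar Y_{k_{j-1}+1},\bar Y_{k_j}]$---is natural, but the compensation argument has a genuine gap at the step where you invoke $n^0_{\min}/n\ge\eta/s$ to conclude $W^{(1)}W^{(2)}/W\gtrsim\eta/s$. The hypothesis bounds the \emph{total} weight of each true group, not the weight of the fragment of a bridging cluster that happens to lie in $G_j$. A bridging cluster could consist of just the two indices $\{k_j,k_j+1\}$, in which case $W^{(1)}=w_{k_j}$, and nothing in the assumptions of Theorem~\ref{thm:univarglobal} lower-bounds individual $w_k$. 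Your squared-error saving $\tfrac{W^{(1)}W^{(2)}}{W}(\bar Y_{k_j+1}-\bar Y_{k_j})^2$ can then be made arbitrarily small relative to the $\gamma\lambda^2/2$ penalty increase, and the contradiction fails. The non-bridging case is also only sketched: the appeal to ``KKT stationarity combined with saturation of the far-side gap'' does not by itself produce a quantitative lower bound on the squared-error excess, and in fact this is where most of the work lies.

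The paper's proof avoids the weight issue by first establishing a separate auxiliary result (Lemma~\ref{thm:betalexist}) guaranteeing that within every true group $G_j$ there exists at least one index $k^*_j$ with $\hat\theta_{k^*_j}\in[\bar Y_{k_{j-1}+1},\bar Y_{k_j}]$; that lemma already uses the full-group weight bound $w^0_{\min}\ge\eta/s$ together with the $\sqrt{2s/\eta}\sqrt{\gamma}\lambda$ margin. With these anchors $l_1,l_2$ in hand, the competitor is built by snapping the overshooting cluster to $\hat\theta_{l_1}$ (respectively $\hat\theta_{l_2}$), not to the interval endpoint. The remaining analysis is then a careful case split (whether $l_2=k_j+1$ or not; whether the gap $b^*-a^*$ exceeds $\gamma\lambda$ or not), each case reducing to a one- or two-variable minimisation handled by Lemma~\ref{thm:quadoptlem}. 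Your global-clipping competitor could perhaps be salvaged, but you would first need something like Lemma~\ref{thm:betalexist} to control how much of each group can overshoot, and then the boundary accounting would still require the case analysis you are currently deferring.
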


\begin{proof}
For convenience, within this lemma we define $\zeta := \sqrt{\eta \gamma_* s} \lambda$. Recall that the objective function which $\hat{\mbb\theta}$ optimises takes the form
\begin{align*}
	Q(\mbb\theta) = \frac{1}{2} \sum_{k = 1}^K w_k ( \bar{Y}_k - \theta_k)^2 + \sum_{k = 1}^{K-1} \rho ( \theta_{k+1} - \theta_k ).
\end{align*}
	We first claim that $\hat{\theta}_k \in [ \bar{Y}_1 , \bar{Y}_K ]$ for $k = 1, \ldots, K$.
	To see this, suppose that this is not the case and define $\check{\mbb\theta}$ by projecting $\hat{\mbb\theta}$ onto $[ \bar{Y}_1, \bar{Y}_K ]^K$ (i.e.\ $\check{\theta}_k = \bar{Y}_K \wedge ( \bar{Y}_1 \vee \hat{\theta}_k )$ for $k = 1, \ldots, K$). 
	The penalty contribution from $\check{\mbb\theta}$ is no larger than that of $\hat{\mbb\theta}$, and the loss contribution is strictly smaller, so we obtain the contradiction $Q( \check{\mbb\theta} ) < Q( \hat{\mbb\theta} )$.
	
	We now proceed to show that for $j = 1, \ldots, s - 1$, we have $\hat{\theta}_{k_j} \leq \bar{Y}_{k_j}$ and $\hat{\theta}_{k_j + 1} \geq \bar{Y}_{k_j + 1}$.
	We prove the first of these sets of inequalities, since the second follows similarly by considering the problem with $- \hat{\mbb\theta}$, $- \bar{Y}$ and reversing the indices.
	Suppose, for contradiction, that there exists some $j$ in $\{ 1, \ldots, s - 1 \}$ with $\hat{\theta}_{k_j} > \bar{Y}_{k_j}$. 
	Let this $j$ be minimal, such that for all $l < j$ we have $\hat{\theta}_{k_l} \leq \bar{Y}_{k_l}$.
	
	Next define $l_1$ to be the maximal element of $\{ k_{j - 1} + 1, \ldots, k_j - 1 \} $ such that $\hat{\theta}_{l_1} \leq \bar{Y}_{k_j}$.
Similarly, we define $l_2\in\{ k_j + 1, \ldots, k_{j+1} \}$ to be minimal such that $\hat{\theta}_{l_2} \geq \bar{Y}_{k_j + 1}$. The existence of $l_1$ and $l_2$ is guaranteed by Lemma~\ref{thm:betalexist}. 

We note that for $l = l_1 + 1, \ldots, k_j$, $\hat{\theta}_l = \hat{\theta}_{k_j}$ and hence $(\bar{Y}_l - \hat{\theta}_l)^2 \geq ( \bar{Y}_{k_j} - \hat{\theta}_{l})^2 = ( \bar{Y}_{k_j} - \hat{\theta}_{k_j})^2$. This can be shown by contradiction, as in \eqref{eq:2pointsinmid}.
For such $l$, we have from optimality of $\hat{\theta}$ that $\bar{Y}_l - \hat{\theta}_{l_1} \geq \hat{\theta}_{k_j} - \bar{Y}_l$ (otherwise one could improve the objective by setting $\hat{\theta}_{l_1} = \hat{\theta}_l$) which implies that $\hat{\theta}_{l_1} < \bar{Y}_l$. From this it follows that $( \bar{Y}_l - \hat{\theta}_{l_1} )^2 \leq (\bar{Y}_{k_j} - \hat{\theta}_{l_1})^2$, since $\hat{\theta}_{l_1} < \bar{Y}_l \leq \bar{Y}_{k_j}$.

Similarly, if $l_2 > k_j + 1$, then for $l = k_j + 1, \ldots, l_2 - 1$ we have $\hat{\theta}_l = \hat{\theta}_{k_j + 1}$ and hence $(\bar{Y}_l - \hat{\theta}_l)^2 \geq ( \bar{Y}_{k_j + 1} - \hat{\theta}_{l})^2 = ( \bar{Y}_{k_j+1} - \hat{\theta}_{k_j +1})^2$.
For such $l$, it follows that $\hat{\theta}_{l_2} > \bar{Y}_l$ and therefore that $( \bar{Y}_l - \hat{\theta}_{l_2} )^2 \leq (\bar{Y}_{k_j + 1} - \hat{\theta}_{l_2})^2$. 

Now, we define
\begin{align*}
	\tilde{w}_{k_j} &:= \sum_{l \leq k_j \colon \hat{\theta}_l = \hat{\theta}_{k_j}} w_l \\
	\llap{\text{ and, if $l_2 > k_j + 1$, } \quad} \tilde{w}_{k_j + 1} &:= \sum_{l \geq k_{j} + 1 \colon \hat{\theta}_l = \hat{\theta}_{k_j + 1}} w_l.
\end{align*}
We also define $\tilde{\mbb\theta} \in \R^K$ according to
		\[ 
	\tilde{\theta}_l = \begin{cases}
		\hat{\theta}_l \wedge \hat{\theta}_{l_1} & \text{ for } l \leq k_j \\
		\hat{\theta}_l \vee \hat{\theta}_{l_2} & \text{ for } l > k_j .
	\end{cases}
\]

We note that by assumption, both $\tilde{w}_{k_j} < 1 / \eta s$ and $\tilde{w}_{k_j + 1} < 1 / \eta s$. 
We now consider two cases: (A) where $l_2 = k_j + 1$, so $\hat{\theta}_{k_j + 1} \geq\bar{Y}_{k_j + 1}$, and (B) where $l_2 > k_j + 1$, so $\hat{\theta}_{k_j + 1} < \bar{Y}_{k_j + 1}$.
 
 We first consider case (A), where the penalty terms between $l_1$ and $l_2$ in $Q( \hat{\mbb\theta})$ are
\begin{align*}
	\sum_{ l = l_1}^{l_2 - 1} \rho ( \hat{\theta}_{l + 1} - \hat{\theta}_l ) = \rho ( \hat{\theta}_{l_2} - \hat{\theta}_{k_j}) + \rho ( \hat{\theta}_{k_j} - \hat{\theta}_{l_1} ).
\end{align*}
Thus,
		\begin{align}
		Q(\hat{\mbb\theta}) - Q(\tilde{\mbb\theta}) = &  \sum_{l \leq k_j \colon \hat{\theta}_l = \hat{\theta}_{k_j}} \frac{w_l}{2} (\bar{Y}_l - \hat{\theta}_l)^2 -  \sum_{l \leq k_j \colon \hat{\theta}_l = \hat{\theta}_{k_j}} \frac{w_l}{2} (\bar{Y}_l - \hat{\theta}_{l_1})^2 \notag \\
		& +  \rho ( \hat{\theta}_{l_2} - \hat{\theta}_{k_j} ) + \rho(\hat{\theta}_{k_j} - \hat{\theta}_{l_1} ) - \frac{1}{2} \gamma \lambda^2 \notag \\[1.5ex]
		\geq & \frac{\tilde{w}_{k_j}}{2} ( \bar{Y}_{k_j} - \hat{\theta}_{k_j})^2 - \frac{\tilde{w}_{k_j}}{2} ( \bar{Y}_{k_j} - \hat{\theta}_{l_1})^2 \notag \\
		& + \rho ( \hat{\theta}_{l_2} - \hat{\theta}_{k_j} ) + \rho(\hat{\theta}_{k_j} - \hat{\theta}_{l_1} ) - \frac{1}{2} \gamma \lambda^2 \label{eq:oraclelsecaseaeq1} \\[1.5ex]
		\geq &\inf_{ \bar{Y}_{k_j} < a \leq \hat{\theta}_{l_2} } \frac{\tilde{w}_{k_j}}{2} ( \bar{Y}_{k_j} - a)^2 - \frac{\tilde{w}_{k_j}}{2} ( \bar{Y}_{k_j} - \hat{\theta}_{l_1})^2  \notag \\ 
		& + \rho ( \hat{\theta}_{l_2} - a ) + \rho(a - \hat{\theta}_{l_1} ) - \frac{1}{2} \gamma \lambda^2 \label{eq:oraclelsecaseaeq2}.
		\end{align}
We specify the infimum in \eqref{eq:oraclelsecasea1eq2} because $( \bar{Y}_{k_j} , \hat{\theta}_{l_2}]$ is not closed, and let $(a_m)$ be a convergent sequence in $( \bar{Y}_{k_j}, \hat{\theta}_{l_2}]$ whose limit attains this infimum. We define $a^* = \lim_{m \rightarrow \infty} a_m$.

By assumption \eqref{eq:oraclelselemsigsep}, at least one of $(a^* - \hat{\theta}_{l_1})$ and $( \hat{\theta}_{l_2} - a^*)$ is greater than or equal to $\gamma \lambda$.
Here, we use that the separation $\eqref{eq:oraclelselemsigsep} \geq 2 \gamma \lambda$.
If $\hat{\theta}_{l_2} -a^* \geq \gamma \lambda$ then we denote this case (A1) and \eqref{eq:oraclelsecaseaeq2} becomes
\begin{align}
	Q(\hat{\mbb\theta}) - Q(\tilde{\mbb\theta}) &\geq \inf_{ \bar{Y}_{k_j} < a \leq \hat{\theta}_{l_2} - \gamma \lambda} \frac{\tilde{w}_{k_j}}{2} ( \bar{Y}_{k_j} - a)^2 - \frac{\tilde{w}_{k_j}}{2} ( \bar{Y}_{k_j} - \hat{\theta}_{l_1})^2 + \rho ( a- \hat{\theta}_{l_1}) \label{eq:oraclelsecasea1eq1} \\
	&\geq \min_{ \hat{\theta}_{l_1} \leq \tilde a \leq \hat{\theta}_{l_2} - \gamma \lambda} \frac{\tilde{w}_{k_j}}{2} ( \bar{Y}_{k_j} - \tilde a)^2 - \frac{\tilde{w}_{k_j}}{2} ( \bar{Y}_{k_j} - \hat{\theta}_{l_1})^2 + \rho ( \tilde a- \hat{\theta}_{l_1}) \label{eq:oraclelsecasea1eq2}.
\end{align}
We define $ \tilde a^*$ to be the minimiser over $\tilde{a}$ of \eqref{eq:oraclelsecasea1eq2}. We can observe that since $\bar{Y}_{k_j} - \hat{\theta}_{l_1} < \zeta$ and ${ \zeta < ( 1 \wedge \sqrt{\gamma \tilde{w}_{k_j}} ) \lambda / \tilde{w}_{k_j} }$, we have ${ \bar{Y}_{k_j} - \hat{\theta}_{l_1} < ( 1 \wedge \sqrt{\gamma \tilde{w}_{k_j}} ) \lambda / \tilde{w}_{k_j} }$. Thus, we have by Lemma~\ref{thm:quadoptlem} that the uniquely optimal $\tilde a^*= \hat{ \theta}_{l_1}$. This gives that the value of \eqref{eq:oraclelsecasea1eq2} is zero.

It is straightforward to see from \eqref{eq:oraclelsecasea1eq1} that $a^* = \bar{Y}_{k_j}$ must be the unique limit of $(a_m)$.
As we have assumed that $\hat{\theta}_{k_j} > \bar{Y}_{k_j}$ and the infimum is not attained in $(\bar{Y}_{k_j}, \bar{Y}_{k_j + 1} )$, the inequality in line \eqref{eq:oraclelsecasea1eq1} can be made strict.
It follows that $Q(\hat{\mbb\theta}) > Q( \tilde{\mbb\theta})$.

Thus, it remains for us to consider the case where $ \hat{\theta}_{l_2} - a^* < \gamma \lambda$, which implies that $a^* - \hat{\theta}_{l_1} \geq \gamma \lambda$.
We denote this case (A2). 
Now, from \eqref{eq:oraclelsecaseaeq2} we can obtain
\begin{align}
		Q(\hat{\mbb\theta}) - Q(\tilde{\mbb\theta}) \geq \min_{\hat{\theta}_{l_2} - \gamma \lambda < \tilde a \leq \hat{\theta}_{l_2} } \frac{\tilde{w}_{k_j}}{2} ( \bar{Y}_{k_j} - \tilde a)^2 - \frac{\tilde{w}_{k_j}}{2} ( \bar{Y}_{k_j} - \hat{\theta}_{l_1})^2 + \rho ( \hat{\theta}_{l_2} - \tilde a) \label{eq:oraclelsecasea2eq1}.
\end{align}

The objective is piecewise quadratic (and continuously differentiable), with two pieces: $[ \hat{\theta}_{l_1}, \hat{\theta}_{l_2} - \gamma \lambda ]$ and $( \hat{\theta}_{l_2} - \gamma \lambda, \hat{\theta}_{l_2}]$.
On the first region, the objective is a convex quadratic with minimum at $\bar{Y}_{k_j} \in [ \hat{\theta}_{l_1}, \hat{\theta}_{l_2} - \gamma \lambda ]$. 

By the assumption that $a^* > \hat{\theta}_{l_2} - \gamma \lambda $, we know that the objective must be concave on $(\hat{\theta}_{l_2} - \gamma \lambda, \hat{\theta}_{l_2} ] $.
It is clear that the derivative of the objective at $\hat{\theta}_{l_2} - \gamma \lambda$ is positive.
Hence, if $ \tilde a^*= \hat{\theta}_{l_2} - \gamma \lambda$, then the objective will take a strictly lower value at some $\tilde a^* \in ( \hat{\theta}_{l_2} - \gamma \lambda - \epsilon, \hat{\theta}_{l_2} - \gamma \lambda)$ (for some small $\epsilon > 0$), contradicting optimality of $\tilde a^*$.
It therefore follows that $\tilde a^* = \hat{\theta}_{l_2}$.

With this knowledge, we can further simplify \eqref{eq:oraclelsecasea2eq1} to obtain
\begin{align*}
			Q(\hat{\mbb\theta}) - Q(\tilde{\mbb\theta}) \geq  \frac{\tilde{w}_{k_j}}{2} ( \bar{Y}_{k_j} - \hat{\theta}_{l_2})^2 - \frac{\tilde{w}_{k_j}}{2} ( \bar{Y}_{k_j} - \hat{\theta}_{l_1})^2  > 0.
\end{align*}
The second inequality follows from $\bar{Y}_{k_j} - \hat{\theta}_{l_1} \leq \zeta$ and $\hat{\theta}_{l_2} - \bar{Y}_{k_j} > \zeta$.
Hence, we obtain that ${ Q( \hat{\mbb\theta}) > Q(\tilde{\mbb\theta}) }$.

We now we direct our attention towards case (B), where similarly to before we observe that the penalty contributions between $l_1$ and $l_2$ in $Q(\hat{\mbb\theta})$ are
\begin{align*}
	\sum_{ l = l_1}^{l_2 - 1} \rho ( \hat{\theta}_{l + 1} - \hat{\theta}_l ) = \rho ( \hat{\theta}_{l_2} - \hat{\theta}_{k_j + 1}) + \rho ( \hat{\theta}_{k_j + 1} - \hat{\theta}_{k_j} ) + \rho ( \hat{\theta}_{k_j} - \hat{\theta}_{l_1} ).
\end{align*}
Similarly to \eqref{eq:oraclelsecaseaeq1} in case (A), we obtain
		\begin{align}
			Q(\hat{\mbb\theta}) - Q(\tilde{\mbb\theta}) \geq & \frac{\tilde{w}_{k_j}}{2} ( \bar{Y}_{k_j} - \hat{\theta}_{k_j})^2 + \frac{\tilde{w}_{k_j + 1}}{2} ( \bar{Y}_{k_j + 1} - \hat{\theta}_{k_j + 1})^2 \notag \\
						&-   \frac{\tilde{w}_{k_j}}{2}( \bar{Y}_{k_j} - \hat{\theta}_{l_1})^2 -  \frac{\tilde{w}_{k_j + 1}}{2}( \bar{Y}_{k_j + 1} - \hat{\theta}_{l_2})^2 \notag \\			
			&+  \rho ( \hat{\theta}_{l_2} - \hat{\theta}_{k_j + 1}) + \rho ( \hat{\theta}_{k_j + 1} - \hat{\theta}_{k_j} ) + \rho ( \hat{\theta}_{k_j} - \hat{\theta}_{l_1} ) - \frac{1}{2} \gamma \lambda^2   \label{eq:oraclelsecasebeq1}\\[1.5ex]
			\geq & \inf_{ \bar{Y}_{k_j} < a \leq b < \bar{Y}_{k_j + 1} }  \frac{\tilde{w}_{k_j}}{2} ( \bar{Y}_{k_j} - a)^2 + \frac{\tilde{w}_{k_j + 1}}{2} ( \bar{Y}_{k_j + 1} - b)^2 \notag \\
						&-   \frac{\tilde{w}_{k_j}}{2}( \bar{Y}_{k_j} - \hat{\theta}_{l_1})^2 -  \frac{\tilde{w}_{k_j + 1}}{2}(  \bar{Y}_{k_j + 1} - \hat{\theta}_{l_2} )^2  \notag \\		
			&+  \rho ( \hat{\theta}_{l_2} - b) + \rho ( b - a ) + \rho ( a - \hat{\theta}_{l_1} ) - \frac{1}{2} \gamma \lambda^2. \label{eq:oraclelsecasebeq2}
		\end{align}
 We specify the infimum in \eqref{eq:oraclelsecasebeq2} because $(\bar{Y}_{k_j}, \bar{Y}_{k_j + 1} )$ is not closed and therefore a minimum may not exist. Let $( a_m, b_m )$ be a convergent sequence in $(\bar{Y}_{k_j}, \bar{Y}_{k_j + 1} )$ whose limit achieves this infimum. We now define $( a^*, b^*) = \lim_{m \rightarrow \infty} (a_m, b_m)$.
By assumption \eqref{eq:oraclelselemsigsep}, we know that $\bar{Y}_{k_j + 1} - \bar{Y}_{k_j} \geq 3 \gamma \lambda$, which implies that $\hat{\theta}_{l_2} - \hat{\theta}_{l_1} \geq 3 \gamma \lambda$. 
Thus, one of $ \{ ( \hat{\theta}_{l_2} -  b^*) , { ( b^* - a^* ) } , {( a^* - \hat{\theta}_{l_1}) } \}$ must be at least $\gamma \lambda$.

We first consider if $ b^* - a^* \geq \gamma \lambda$, and denote this case (B1). Here, \eqref{eq:oraclelsecasebeq2} becomes
\begin{align}
	Q( \hat{\mbb\theta}) - Q( \tilde{\mbb\theta}) \geq &\inf_{ \bar{Y}_{k_j} < a \leq b < \bar{Y}_{k_j + 1} }  \frac{\tilde{w}_{k_j}}{2} ( \bar{Y}_{k_j} - a)^2 + \frac{\tilde{w}_{k_j + 1}}{2} ( \bar{Y}_{k_j + 1} - b)^2 \notag \\
	&-   \frac{\tilde{w}_{k_j}}{2}( \bar{Y}_{k_j} - \hat{\theta}_{l_1})^2 -  \frac{\tilde{w}_{k_j + 1}}{2}(  \bar{Y}_{k_j + 1} - \hat{\theta}_{l_2} )^2		
			+  \rho ( \hat{\theta}_{l_2} - b)  + \rho ( a - \hat{\theta}_{l_1} ) \label{eq:oraclelsecaseb1eq1}\\[1.5ex]
			= &\inf_{ a \in ( \bar{Y}_{k_j}, \bar{Y}_{k_j + 1}) }  \frac{\tilde{w}_{k_j}}{2} ( \bar{Y}_{k_j} - a)^2  - \frac{\tilde{w}_{k_j}}{2} ( \bar{Y}_{k_j} - \hat{\theta}_{l_1})^2 + \rho ( a - \hat{\theta}_{l_1}) \notag \\
			&+ \inf_{b \in ( \bar{Y}_{k_j}, \bar{Y}_{k_j + 1})}   \frac{\tilde{w}_{k_j + 1}}{2} ( \bar{Y}_{k_j + 1} - b)^2  - \frac{\tilde{w}_{k_j + 1}}{2} ( \bar{Y}_{k_j + 1} - \hat{\theta}_{l_2})^2 + \rho ( \hat{\theta}_{l_2} - b)
			 \label{eq:oraclelsecaseb1eq2}	
\end{align}
We can observe that \eqref{eq:oraclelsecaseb1eq2} is the sum of two copies of \eqref{eq:oraclelsecasea1eq1} in case (A1).
Hence, by following the same arguments as before, we see that $Q( \hat{\mbb\theta}) > Q( \tilde{\mbb\theta})$.

It therefore remains for us to obtain the result in the case that $ b^* - a^* < \gamma \lambda$, and we denote this case (B2).
Using that the separation $\eqref{eq:oraclelselemsigsep} \geq 3 \gamma \lambda + 2 \zeta$, it is straightforward to see that one of $(\bar{Y}_{k_j + 1} - b^*)$ and $( a^* - \bar{Y}_{k_j})$ must be at least $\gamma \lambda + \zeta$.
By the symmetry of the problem, it is sufficient for us to consider the case where $\bar{Y}_{k_j + 1} - b^* \geq \gamma \lambda + \zeta$.
In this case, we can obtain from \eqref{eq:oraclelsecasebeq2} that
\begin{align}
	Q( \hat{\mbb\theta} ) - Q ( \tilde{\mbb\theta} ) \geq \min_{	(\tilde a, \tilde b) \in \mathcal{B} }  & \frac{\tilde{w}_{k_j}}{2} ( \bar{Y}_{k_j} - \tilde a)^2 + \frac{\tilde{w}_{k_j + 1}}{2}( \bar{Y}_{k_j + 1} - \tilde b)^2 \notag \\
	&- \frac{\tilde{w}_{k_j}}{2} ( \bar{Y}_{k_j} - \hat{\theta}_{l_1})^2 - \frac{\tilde{w}_{k_j + 1}}{2}( \bar{Y}_{k_j + 1} - \hat{\theta}_{l_2})^2 \notag \\
	& + \rho ( \tilde b - \tilde a ) + \rho ( \tilde a - \hat{\theta}_{l_1} ), \label{eq:oraclelsecaseb2eq1}
\end{align}
where $\mathcal{B} =  \left\{ (\tilde a, \tilde b) \colon \hat{\theta}_{l_1} \leq \tilde a \leq \tilde b \leq \bar{Y}_{k_j + 1} - \gamma \lambda - \zeta, \: \tilde b - \tilde a < \gamma \lambda \right\}$.
From this, we can extract the terms dependent on $\tilde b$ to obtain 
\begin{align}
	\tilde b^* =\argmin_{\tilde a^* \leq \tilde b < \tilde a^* + \gamma \lambda} \frac{\tilde{w}_{k_j + 1}}{2} ( \bar{Y}_{k_j + 1} - \tilde b)^2 + \rho ( \tilde b - \tilde a^* ). \label{eq:oraclelsefusecase}
\end{align}
This objective is piecewise quadratic (and continuously differentiable), with two pieces; $[\tilde a^*, \tilde a^* + \gamma \lambda )$ and $[\tilde a^* + \gamma \lambda, \hat{\theta}_{l_2}]$.
Over the second region, the objective is a convex quadratic with minimum at $\bar{Y}_{k_j + 1} \in [\tilde a^* + \gamma \lambda, \hat{\theta}_{l_2} ]$.
By following the same argument as for \eqref{eq:oraclelsecasea2eq1} in case (A2), we see that $\tilde b^* = \tilde a^*$.

With this knowledge, we can further simplify \eqref{eq:oraclelsecaseb2eq1} to obtain
\begin{align*}
	Q( \hat{\mbb\theta} ) - Q ( \tilde{\mbb\theta} ) \geq \min_{\hat{\theta}_{l_1} \leq \tilde a \leq \bar{Y}_{k_j + 1} - \gamma \lambda - \zeta}  & \frac{\tilde{w}_{k_j}}{2} ( \bar{Y}_{k_j} - \tilde a)^2 + \frac{\tilde{w}_{k_j + 1}}{2}( \bar{Y}_{k_j + 1} - \tilde a)^2 \\
	&- \frac{\tilde{w}_{k_j}}{2} ( \bar{Y}_{k_j} - \hat{\theta}_{l_1})^2 - \frac{\tilde{w}_{k_j + 1}}{2}( \bar{Y}_{k_j + 1} - \hat{\theta}_{l_2})^2 + \rho ( \tilde a - \hat{\theta}_{l_1} ).
\end{align*}
Since $ \bar{Y}_{k_j + 1 } - \tilde a^* > \zeta$, we can see that $(\bar{Y}_{k_j + 1}- \tilde a^*)^2 - ( \bar{Y}_{k_j + 1} - \hat{\theta}_{l_2} )^2 > 0$.
Thus, it suffices for us to show that 
\begin{align*}
	\min_{\hat{\theta}_{l_1} \leq \tilde a \leq \bar{Y}_{k_j + 1} - \gamma \lambda - \zeta}  & \frac{\tilde{w}_{k_j}}{2} ( \bar{Y}_{k_j} - \tilde a)^2  - \frac{\tilde{w}_{k_j}}{2} ( \bar{Y}_{k_j} - \hat{\theta}_{l_1} )^2 + \rho ( \tilde a - \hat{\theta}_{l_1} )  \geq 0.
\end{align*}
This objective is exactly as in \eqref{eq:oraclelsecasea1eq2} in case (A1), minimised over a smaller feasible set. 
Hence, it follows immediately that this holds and we can conclude that $Q ( \hat{\mbb\theta} ) > Q ( \tilde{\mbb\theta})$.

We now have for all cases that $Q(\hat{\mbb\theta}) > Q( \tilde{\mbb\theta})$, which contradicts the optimality of $\hat{\theta}$.
 Thus, we can conclude that for $j = 1, \ldots, s$, $\hat{\theta}_{k_j} \leq \bar{Y}_{k_j}$ and $\hat{\theta}_{k_{j - 1} + 1} \geq \bar{Y}_{k_{j-1}+1}$.
\end{proof}

\begin{lem} \label{thm:betalexist}
Consider the setup of Lemma~\ref{thm:oraclelselemma}.
For each $j = 1, \ldots, s$, there exists $k^*_j$ in $\{ k_{j - 1}+ 1, \ldots, k_j \}$ such that $\hat{\theta}_{k^*_j} \in [ \bar{Y}_{k_{j - 1} + 1}, \bar{Y}_{k_j} ]$.
\end{lem}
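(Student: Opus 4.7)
The plan is to proceed by contradiction. Suppose for some $j$ there is no $k^*_j \in G := \{k_{j-1}+1, \ldots, k_j\}$ with $\hat{\theta}_{k^*_j} \in [\bar{Y}_{k_{j-1}+1}, \bar{Y}_{k_j}]$. By Proposition~\ref{thm:orderprop} the sequence $\hat{\theta}_1 \leq \cdots \leq \hat{\theta}_K$ is non-decreasing, and as shown at the very start of the proof of Lemma~\ref{thm:oraclelselemma}, $\hat{\theta}_k \in [\bar{Y}_1, \bar{Y}_K]$ for every $k$. Together these two properties force one of three mutually exclusive configurations: (a) $\hat{\theta}_k < \bar{Y}_{k_{j-1}+1}$ for every $k \in G$; (b) $\hat{\theta}_k > \bar{Y}_{k_j}$ for every $k \in G$; or (c) there exist adjacent indices $l, l+1 \in G$ with $\hat{\theta}_l < \bar{Y}_{k_{j-1}+1}$ and $\hat{\theta}_{l+1} > \bar{Y}_{k_j}$. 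Cases (a) and (b) are symmetric under negation and index reversal, and for the boundary groups $j = 1$ and $j = s$ one of them is immediately ruled out by $\hat{\theta}_k \in [\bar{Y}_1, \bar{Y}_K]$.

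For case (a) I would construct a competitor $\tilde{\mbb\theta}$ by clipping from below: set $\tilde{\theta}_k = \max(\hat{\theta}_k, \bar{Y}_{k_{j-1}+1})$ for $k \geq k_{j-1}+1$ and $\tilde{\theta}_k = \hat{\theta}_k$ otherwise. Monotonicity is preserved because $\hat{\theta}_{k_{j-1}} \leq \hat{\theta}_{k_j} < \bar{Y}_{k_{j-1}+1}$ guarantees the step across the clipping boundary, while both the original sequence and the constant $\bar{Y}_{k_{j-1}+1}$ are non-decreasing. The loss strictly decreases: each $k \in G$ has $\hat{\theta}_k < \bar{Y}_{k_{j-1}+1} \leq \bar{Y}_k$, so $\bar{Y}_{k_{j-1}+1}$ lies strictly between $\hat{\theta}_k$ and $\bar{Y}_k$. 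For the penalty I would telescope across the clipped region: the jumps internal to the clip collapse to zero, the upper boundary jump $\hat{\theta}_{m+1} - \bar{Y}_{k_{j-1}+1}$ (with $m$ the largest index where clipping occurs) weakly shrinks, and the subadditivity $\rho(x+y) \leq \rho(x) + \rho(y)$ (valid since $\rho$ is concave with $\rho(0)=0$) bounds the total penalty increase by $\rho(\bar{Y}_{k_{j-1}+1} - \hat{\theta}_{k_j}) \leq \lambda(\bar{Y}_{k_{j-1}+1} - \hat{\theta}_{k_j})$. Using the weight lower bound $w^0_j \geq \eta/s$ to compare the quadratic loss decrease of order $w^0_j (\bar{Y}_{k_{j-1}+1} - \hat{\theta}_{k_j})^2$ against this linear penalty bound yields $Q(\tilde{\mbb\theta}) < Q(\hat{\mbb\theta})$, contradicting optimality.

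For case (c) I would fuse the two offending indices: set $\tilde{\theta}_l = \tilde{\theta}_{l+1} = t$ for a suitable $t \in [\bar{Y}_l, \bar{Y}_{l+1}] \subseteq [\bar{Y}_{k_{j-1}+1}, \bar{Y}_{k_j}]$ and leave other coordinates unchanged; monotonicity is admissible because $\hat{\theta}_{l-1} \leq \hat{\theta}_l < t < \hat{\theta}_{l+1} \leq \hat{\theta}_{l+2}$. The middle jump collapses to zero while subadditivity of $\rho$ bounds the growth at the two neighbouring jumps, and the loss strictly decreases because both $\hat{\theta}_l$ and $\hat{\theta}_{l+1}$ were originally on the wrong side of their targets. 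The main obstacle throughout is controlling the first-order penalty increase at the lower boundary in case (a) when $\bar{Y}_{k_{j-1}+1} - \hat{\theta}_{k_j}$ is small, so the quadratic loss gain is only second-order; I would address this by switching to an infinitesimal upward shift of the plateau containing $k_j$ and invoking first-order optimality of $\hat{\mbb\theta}$, together with the separation hypothesis \eqref{eq:oraclelselemsigsep} to force the relevant one-sided derivatives of $\rho$ to differ enough to produce the contradiction.
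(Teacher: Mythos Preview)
Your penalty bound in case (a) via repeated subadditivity is actually correct (the internal jumps telescope with the lower boundary), and you correctly identify the real obstruction: when $d := \bar{Y}_{k_{j-1}+1} - \hat{\theta}_{k_j}$ is small, the loss gain $\tfrac{w^0_j}{2}d^2$ cannot beat the penalty cost $\rho(d)\le \lambda d$. The gap is in your proposed fix. An infinitesimal upward shift of the plateau containing $k_j$ gives a stationarity condition of the form
\[
\sum_{k\in\text{plateau}} w_k(\bar Y_k-\hat\theta_k)=\rho'(\text{left jump})-\rho'(\text{right jump}),
\]
and to extract a contradiction you would need the right jump to be at least $\gamma\lambda$ (so its derivative vanishes). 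But the right jump involves $\hat\theta_{k_j+1}$, which lives in group $j{+}1$; nothing in your argument controls its location, so you cannot rule out that $\hat\theta_{k_j+1}$ sits just above $\hat\theta_{k_j}$ with $\rho'(\text{right jump})$ close to $\lambda$. The separation hypothesis \eqref{eq:oraclelselemsigsep} constrains the $\bar Y$'s, not the $\hat\theta$'s, so it does not help here directly. The same issue arises in case (c): moving only the two indices $l,\,l{+}1$ gives a loss gain scaling with the individual weights $w_l,w_{l+1}$, which are not bounded below, while the penalty cost is order $\lambda$.

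The paper's proof resolves exactly this by working across the group boundary. It first shows (a step you omit) that all out-of-interval values in group $j$ coincide, so there are at most two plateaus; one of them then carries aggregate weight $\ge \eta/2s$. A preliminary lemma bounds how far this heavy plateau can be from $[\bar Y_{k_{j-1}+1},\bar Y_{k_j}]$. The contradiction is then obtained by examining $\hat\theta_{k_j+1}$ directly: either the gap to $\hat\theta_{k_j}$ exceeds $\gamma\lambda$ (in which case moving the plateau back into the interval strictly decreases $Q$), or it does not, in which case one constructs a competitor $\tilde{\mbb\theta}$ that simultaneously moves indices in group $j$ \emph{and} group $j{+}1$. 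Your purely local construction cannot reproduce this, and that interaction with the neighbouring group is the missing ingredient.
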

\begin{proof}
	We first show that if $\hat{\theta}_{k_j} > \bar{Y}_{k_j}$, then for any $k$ with $k_{j - 1} + 1 \leq k \leq k_j$, if $\hat{\theta}_k > \bar{Y}_{k_j}$ then $\hat{\theta}_k = \hat{\theta}_{k_j}$. 

	We prove the first case since the proof for the second is identical.
	Suppose that this does not hold, i.e.\ $\hat{\theta}_{k_j} > \bar{Y}_{k_j}$ and there exists some (minimal) $k$ in $\{ k_{j-1} + 1, \ldots, k_j - 1 \}$ with $\bar{Y}_{k_j} < \hat{\theta}_k < \hat{\theta}_{k_j} $.
	Then we construct $\check{\mbb\theta}$ by 
\begin{equation}
	\check{\theta}_l = \begin{cases}
		\hat{\theta}_k & \text{ for } l = k, k+1, \ldots, k_j \\
		\hat{\theta}_l & \text{ otherwise. } 
	\end{cases} \label{eq:2pointsinmid}
\end{equation}
	We observe that the penalty contribution from $\check{\mbb\theta}$ is no more than that of $\hat{\mbb\theta}$ and that the quadratic loss for $\check{\mbb\theta}$ will be strictly less than that of $\hat{\mbb\theta}$. 
	This gives us that $Q(\check{\mbb\theta}) < Q(\hat{\mbb\theta})$, contradicting the optimality of $\hat{\mbb\theta}$.
	
	Similarly, if $\hat{\theta}_{k_{j-1} + 1} < \bar{Y}_{k_{j-1}+1}$ then the corresponding statement that for any $k$ with $k_{j - 1} + 1 \leq k_j$, if $\hat{\theta}_k < \bar{Y}_{k_{j-1}+1}$ then $\hat{\theta}_k = \hat{\theta}_{k_{j-1}+1}$.
	
	We now establish a simple preliminary result.
	Suppose that for some $j$ in $\{ 1, \ldots , s \}$ there exists $k$ in $\{ k_{j - 1} + 1, \ldots , k_j \}$ with $\hat{\theta}_k \notin [ \bar{Y}_{k_{j - 1} + 1} , \bar{Y}_{k_j} ]$, such that $\sum_{\{ l \colon \hat{\theta}_l = \hat{\theta}_k \} }w_l \geq \eta / 2 s$.
	We claim that if $\hat{\theta}_k > \bar{Y}_{k_j}$ then $\hat{\theta}_k \leq \bar{Y}_{k_j} +  ( \sqrt{2 s / \eta} \sqrt{\gamma} \lambda \vee \gamma \lambda)$. 
	Similarly, if $\hat{\theta}_k < \bar{Y}_{k_{j-1} + 1 }$ then ${ \hat{\theta}_k \geq \bar{Y}_{k_{j-1}+1} - ( \sqrt{2 s / \eta} \sqrt{\gamma} \lambda \vee \gamma \lambda) }$.
	
	To prove the claim, we consider the case $\hat{\theta}_k > \bar{Y}_{k_j}$ (the other is identical). 
	By the first observation, if $\hat{\theta}_l > \bar{Y}_{k_j}$ for $l$ in $\{ k_{j - 1} + 1 , \ldots k_j \} $ then $\hat{\theta}_l = \hat{\theta}_k$.
	Now, for contradiction, suppose $\hat{\theta}_k > \bar{Y}_{k_j } + ( \sqrt{2 s / \eta} \sqrt{\gamma} \lambda \vee \gamma \lambda)$ and let this $k$ be minimal.
	Then we can construct $\check{\mbb\theta}$ by
	\[
	\check{\theta}_l = \begin{cases}
		\sum_{l = k }^{k_j} w_l \bar{Y}_l / \sum_{ l = k }^{k_j} w_l &\text{ for } l = k, \ldots, k_j \\
		\hat{\theta}_l &\text{ otherwise}.
	\end{cases}
	\]
		By appealing to the optimality of $\hat{\mbb\theta}$, we can easily observe that $\hat{\theta}_{k - 1} \leq \bar{Y}_{k - 1}$ and therefore that the ordering of the entries of $\check{\mbb\theta}$ matches that of $\hat{\mbb\theta}$. Here, we use that $( \sqrt{2 s / \eta} \sqrt{\gamma} \lambda \vee \gamma \lambda) \geq \gamma \lambda$.
		
		We can now see that the loss term in $Q(\check{\mbb\theta})$ is less than in $Q(\hat{\mbb\theta})$, with a difference of more than $(\eta / 4 s ) ( \sqrt{2 s / \eta} \sqrt{\gamma} \lambda)^2 = \gamma \lambda^2 / 2$, which outweighs the possible increase in the penalty contribution.
		This gives us that $Q(\check{\mbb\theta}) < Q(\hat{\mbb\theta})$, contradicting the optimality of $\hat{\mbb\theta}$.
		
		We now return to the proof of the main result. Suppose, for contradiction, that there exists some $j \in \{ 1, \ldots, s \}$ such that $\hat{\theta}_k \notin [ \bar{Y}_{k_{j-1} + 1}, \bar{Y}_{k_j} ]$ for all ${k = k_{j-1}+1 , \ldots, k_j}$ and let this $j$ be minimal.
		By the first observation, we know that entries of $\hat{\mbb\theta}$ corresponding to level $j$ can take one of at most two distinct values.
		That is, for $k \in \{ k_{j - 1} + 1, \ldots , k_j \}$, 
		if we have $\hat{\theta}_{k} < \bar{Y}_{k_{j -1} + 1}$, then it follows that $\hat{\theta}_k = \hat{\theta}_{k_{j-1} + 1}$.
		Similarly, if $\hat{\theta}_{k} > \bar{Y}_{k_j}$, then $\hat{\theta}_k = \hat{\theta}_{k_j}$.
		
		By the assumption $w^0_{\text{min}} \geq \eta / s$, we have that either
		\[
		\sum_{ k \colon \hat{\theta}_k = \hat{\theta}_{k_{j - 1} + 1} } w_k \geq \frac{\eta}{2 s} \quad \text{or} \quad \sum_{ k \colon \hat{\theta}_k = \hat{\theta}_{k_j} } w_k \geq \frac{\eta}{2 s}.
		\]
		We will without loss of generality take the second statement to be true (the proof for the first case follows identically).
		Let $k'$ denote the minimal element in $ \{ k_{j -1} + 1, \ldots ,k_j \} $ such that $\hat{\theta}_{k'} = \hat{\theta}_{k_j}$.
		From the preliminary result established earlier, $\hat{\theta}_{k_j} \leq \bar{Y}_{k_j} +( \sqrt{2 s / \eta} \sqrt{\gamma} \lambda \vee \gamma \lambda)$.
		By appealing to the optimality of $\hat{\mbb\theta}$, we see that $\hat{\theta}_{k_j + 1} < \hat{\theta}_{k_j} + \gamma \lambda$ (otherwise, we could take $\hat{\theta}_{k_j}$ to be $\bar{Y}_{k_j}$ and strictly reduce the value of the objective).
		
		Now, we will use that the separation is at least $ 2 ( \sqrt{2 s / \eta} \sqrt{\gamma} \lambda \vee \gamma \lambda) + \gamma \lambda$.
		By our earlier observation \eqref{eq:2pointsinmid}, it is clear that any $l \in \{ k_j + 1, \ldots , k_{j+1} \}$ with $\hat{\theta}_l < \bar{Y}_{k_j + 1}$ has $\hat{\theta}_l = \hat{\theta}_{k_j + 1}$.
		Note that since $\hat{\theta}_{k_j + 1} - \bar{Y}_{k_j} < ( \sqrt{2 s / \eta} \sqrt{\gamma} \lambda \vee \gamma \lambda) + \gamma \lambda$, it follows that $\bar{Y}_{k_j + 1} - \hat{\theta}_{k_j + 1} > ( \sqrt{2 s / \eta} \sqrt{\gamma} \lambda \vee \gamma \lambda) + \zeta$ and therefore that $\sum_{ \{ k \colon \hat{\theta}_k = \hat{\theta}_{k_j + 1} \} } w_k < \eta / 2 s$ by the preliminary result.
		Since $w^0_{\text{min}} \geq \eta / s$ and separation $\eqref{eq:oraclelselemsigsep} \geq 2 ( \sqrt{2 s / \eta} \sqrt{\gamma} \lambda \vee \gamma \lambda) + \gamma \lambda + \zeta$, we can define $l' \in \{ k_j + 1, \ldots , k_{j+1} \}$ minimal such that $\hat{\theta}_{l'} \geq \bar{Y}_{k_j + 1}$.

		Now, in order to contradict the optimality of $\hat{\mbb\theta}$ we construct a new feasible point $\tilde{\mbb\theta}$ by setting
		\[
		\tilde{\theta}_{l} = 
		\begin{cases}
			\bar{Y}_{k_j} &\text{ for } l = k', \ldots, k_j \\
			\hat{\theta}_{l'} &\text{ for } l = k_j + 1, \ldots, l' - 1 \\
			\hat{\theta}_{l} &\text{ otherwise}.
		\end{cases}
		\]
				It follows that for $l = k_j + 1, \ldots, l' - 1$ we have 
		 \begin{align*}
		 	| \hat{\theta}_{l} - \bar{Y}_{l} | &> ( \sqrt{2 s / \eta} \sqrt{\gamma} \lambda \vee \gamma \lambda) + \zeta \\
		 	| \tilde{\theta}_{l} - \bar{Y}_{l} | &\leq ( \sqrt{2 s / \eta} \sqrt{\gamma} \lambda \vee \gamma \lambda) + \zeta.
		 \end{align*}
		 It is also straightforward to see that $|\hat{\theta}_{k_j} - \bar{Y}_{l} | \geq |\bar{Y}_{k_j} - \bar{Y}_{l}|$ for $l = k', \ldots, k_j$.
		 If follows that the loss contribution in $Q( \tilde{\mbb\theta})$ is strictly less than that in $Q( \hat{\mbb\theta})$.
		Hence, using $\hat{\theta}_{l'} - \hat{\theta}_{k_j} > \gamma \lambda$, we obtain
		\begin{align*}
			Q( \hat{\mbb\theta}) - Q( \tilde{\mbb\theta}) > &\rho ( \hat{\theta}_{l'} - \hat{\theta}_{k_j + 1} ) + \rho ( \hat{\theta}_{k_j + 1} - \hat{\theta}_{k_j}) + \rho ( \hat{\theta}_{k_j } - \hat{\theta}_{k' - 1} ) \\
			&- \frac{1}{2} \gamma \lambda^2 - \rho ( \bar{Y}_{k_j} - \hat{\theta}_{k' - 1}) \\
			\geq& 0,
		\end{align*}
		contradicting the optimality of $\hat{\mbb\theta}$.
	We conclude that for $j = 1, \ldots, s$, there exists $k^*_j$ in $\{ k_{j - 1}+ 1, \ldots, k_j \}$ such that $\hat{\theta}_{k^*_j} \in [ \bar{Y}_{k_{j - 1} + 1}, \bar{Y}_{k_j} ]$.
		\end{proof}

\begin{lem} \label{thm:nullconsistency}
Consider the univariate objective \eqref{eq:univar}, relaxing the normalisation constraint to $\check{w} := \sum_k w_k \leq 1$.
Suppose that $w^T\bar{Y} = 0$, and that $	\| \bar{Y} \|_{\infty} < \left( 2 \wedge \sqrt{\gamma \check{w} } \right) \lambda / \check{w} $. Then $\hat{\mbb\theta} = 0 $.
\end{lem}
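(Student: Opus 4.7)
The plan is to argue by contradiction: suppose $\hat{\mbb\theta} \neq 0$ is a minimiser of \eqref{eq:univar}, and derive $Q(\hat{\mbb\theta}) > Q(0)$, contradicting optimality since $\mbb\theta = 0 \in \Theta$. By Proposition~\ref{thm:orderprop}, after re-indexing so that $\bar{Y}_1 \leq \cdots \leq \bar{Y}_K$ we have $\hat{\theta}_1 \leq \cdots \leq \hat{\theta}_K$, and the identifiability constraint $\sum_k w_k \hat{\theta}_k = 0$ together with $\hat{\mbb\theta} \neq 0$ forces at least one of the gaps $\tau_j := \hat{\theta}_{j+1} - \hat{\theta}_j$ to be strictly positive.

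The core of the argument is a direct algebraic lower bound on $Q(\hat{\mbb\theta}) - Q(0)$ in terms of the $\tau_j$. Writing $L_j = \sum_{k \leq j} w_k$, $R_j = \sum_{k > j} w_k$ and $S_j = \sum_{k > j} w_k \bar{Y}_k$, summation by parts combined with $\sum_k w_k \bar{Y}_k = 0$ gives $\sum_k w_k \bar{Y}_k \hat{\theta}_k = \sum_j \tau_j S_j$, and because $\bar{Y}$ is sorted with $w$-weighted mean zero one obtains $0 \leq S_j \leq \|\bar{Y}\|_\infty \min(L_j, R_j)$. The constraint $\sum_k w_k \hat{\theta}_k = 0$ yields the variance-type identity $\sum_k w_k \hat{\theta}_k^2 = \check{w}^{-1} \sum_{k < l} w_k w_l (\hat{\theta}_l - \hat{\theta}_k)^2$; using $(\hat{\theta}_l - \hat{\theta}_k)^2 = (\sum_{j=k}^{l-1} \tau_j)^2 \geq \sum_{j=k}^{l-1} \tau_j^2$ for nonnegative $\tau_j$ and swapping summations gives $\sum_k w_k \hat{\theta}_k^2 \geq \check{w}^{-1} \sum_j L_j R_j \tau_j^2$. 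Assembling,
\[
Q(\hat{\mbb\theta}) - Q(0) \;\geq\; \sum_j \left[ \rho(\tau_j) \,-\, \|\bar{Y}\|_\infty \min(L_j, R_j)\, \tau_j \,+\, \frac{L_j R_j}{2\check{w}}\, \tau_j^2 \right].
\]

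It remains to show each bracketed summand is strictly positive whenever $\tau_j > 0$, which I will do by splitting into the two pieces of the MCP: on $(0,\gamma\lambda]$ substitute $\rho(\tau) = \lambda\tau - \tau^2/(2\gamma)$, so that the bracket is a concave quadratic in $\tau_j$ with positive slope at $0$; on $(\gamma\lambda, \infty)$ substitute $\rho(\tau) = \gamma\lambda^2/2$, so that the bracket is a convex quadratic in $\tau_j$. In each regime positivity reduces to a slope- or discriminant-type condition on $\|\bar{Y}\|_\infty$, and the worst case $\min(L_j,R_j) = \check{w}/2$ (hence $L_j R_j = \check{w}^2/4$) turns these into $\|\bar{Y}\|_\infty < 2\lambda/\check{w}$ and $\|\bar{Y}\|_\infty < \sqrt{\gamma\check{w}}\,\lambda/\check{w}$ respectively, which together match the hypothesis $\|\bar{Y}\|_\infty < (2 \wedge \sqrt{\gamma\check{w}})\,\lambda/\check{w}$. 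The main obstacle is the flat regime $\tau_j > \gamma\lambda$: there the penalty no longer grows with $\tau_j$, so the quadratic term $L_j R_j \tau_j^2/(2\check{w})$ coming from the variance lower bound on $\sum_k w_k \hat{\theta}_k^2$ is essential to dominate the linear signal term $\|\bar{Y}\|_\infty \min(L_j,R_j)\tau_j$ and furnish a positive discriminant; this is precisely where the $\sqrt{\gamma\check{w}}$ part of the hypothesis enters. Once strict positivity of each bracket is established, summing over $j$ (with at least one $\tau_j > 0$) contradicts $Q(\hat{\mbb\theta}) \leq Q(0)$, completing the proof.
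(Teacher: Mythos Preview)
Your argument is correct and takes a genuinely different route from the paper's. The paper replaces $\bar{Y}$ by a worst-case $\xi\in[-\tau,\tau]^K$, uses linearity in $\xi$ to push to the extreme points $\{-\tau,\tau\}^K$, argues the corresponding optimal $\tilde{\mbb\theta}$ takes only two values, and then reduces to a one-variable optimisation handled by Lemma~\ref{thm:quadoptlem}. Your approach is more direct and fully algebraic: a summation-by-parts identity for the cross term, the weighted variance identity for the quadratic term, and then a per-gap decomposition $Q(\hat{\mbb\theta})-Q(0)\geq\sum_j B_j(\tau_j)$. This avoids any minimax step and exposes exactly where each half of the hypothesis $(2\wedge\sqrt{\gamma\check w})\lambda/\check w$ is used.

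Two small points that deserve a sentence when you write this out. First, the Regime~1 quadratic in $\tau_j$ is not always concave: when $L_jR_j/\check w\geq1/\gamma$ (possible once $\gamma\check w\geq4$) it is convex, but then positivity on $(0,\gamma\lambda]$ is immediate from $B_j(0)=0$ and $B_j'(0)=\lambda-\|\bar Y\|_\infty\min(L_j,R_j)>0$. Second, in the genuinely concave sub-case your slope condition alone only gives positivity near $0$; to cover all of $(0,\gamma\lambda]$ you need $B_j(\gamma\lambda)>0$. This is in fact automatic from your Regime~2 analysis: the strictly negative discriminant makes the Regime~2 quadratic positive for all real $t$, in particular at $t=\gamma\lambda$, where it coincides with $B_j$; concavity then yields $B_j(t)\geq(t/\gamma\lambda)B_j(\gamma\lambda)>0$ on $(0,\gamma\lambda]$. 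With those two clarifications the per-gap positivity is airtight, and the contradiction follows.
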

\begin{proof}
Let $P_w = I -  \mb 1  w^T / \check{w}$ and $D_w \in \R^{K \times K}$ be the diagonal matrix with entries $D_{kk} \sqrt{w_k}$. First note that
\begin{align*}
	Q(\mbb\theta) - Q(P_w\mbb\theta) &= \frac{1}{2}\sum_{k=1}^K w_k (\bar{Y}_k - \theta_k)^2 - \frac{1}{2}\sum_{k=1}^K w_k (\bar{Y}_k - \theta_k + w^T \mbb\theta)^2 \\ 
	&= -\frac{1}{2}\sum_{k=1}^K w_k (w^T \mbb\theta) ( 2 \bar{Y}_k - 2\theta_k + w^T \mbb\theta) \\
	&= \left( 1 - \frac{1}{2} \check{w} \right) (w^T \mbb\theta)^2 \; \geq 0.
\end{align*}
Thus for all $\mbb\theta\in \R^K$, we have
	\begin{align*}
		Q(\mbb\theta) - Q(0) &\geq \frac{1}{2} \| D_w P_w ( \bar{Y} - \mbb\theta) \|_2^2 - \frac{1}{2} \| D_w P_w  \bar{Y} \|_2^2 + \sum_{k =1}^{K-1} \rho( \theta_{(k+1)} - \theta_{(k)}) \\ 
		&\geq \frac{1}{2} \| D_w P_w ( \bar{Y} - \mbb\theta) \|_2^2 - \frac{1}{2} \| D_w P_w  \bar{Y} \|_2^2 +  \rho( \theta_{(K)} - \theta_{(1)}) \\
		&\geq \min_{\xi \in [-\tau, \tau]^K} F(\mbb\theta, \xi, w) 
	\end{align*}
where
\[
F(\mbb\theta, \xi, w) = \frac{1}{2} \| D_w P_w ( \xi - \mbb\theta) \|_2^2 - \frac{1}{2} \| D_w P_w  \xi \|_2^2 + \rho ( \theta_{(K)} - \theta_{(1)}).
\]
Consider minimising $F$ over $\R^K \times [-\tau, \tau]^K \times S$, where $S \subseteq \R^K$ is the unit simplex scaled by $\check{w}$. We aim to show this minimum is $0$. As with the first claim in the proof of Lemma~\ref{thm:oraclelselemma}, it is straightforward to see that for any feasible $(\mbb\theta, \xi, w)$, there exists $\mbb\theta'$ with $\|\mbb\theta'\|_\infty \leq \|\xi\|_\infty$ and $F(\mbb\theta', \xi, w) \leq F(\mbb\theta, \xi, w)$.
Hence,
\[
\inf_{(\mbb\theta, \xi, w) \in \R^K \times [-\tau, \tau]^K \times S} F(\mbb\theta, \xi, w) = \inf_{(\mbb\theta, \xi, w) \in [-\tau, \tau]^K \times [-\tau, \tau]^K \times S} F(\mbb\theta, \xi, w).
\]
As on the RHS we are minimising a continuous function over a compact set, we know a minimiser must exist. Let $(\tilde{\mbb\theta}, \tilde{\xi}, \tilde{w})$ be a minimiser (to be specified later). 
Observe that 
\[
\| D_{\tilde{w}}  P_{\tilde{w}} ( \xi -  \mbb\theta) \|_2^2 - \| D_{\tilde{w}}P_{\tilde{w}}  \xi \|_2^2 = - 2\xi^T P_{\tilde{w}}^T D_{\tilde{w}}^2 P_{\tilde{w}}  \mbb\theta+ \theta^T P_{\tilde{w}}^T D_{\tilde{w}}^2 P_{\tilde{w}} \mbb\theta
\]
is linear as a function of $\xi$. Hence it is minimised over the set
\[
\{\xi: \| \xi \|_{\infty} \leq \tau\} = \text{conv}(\{ -\tau, \tau \}^K)
\]
at some point in $\{ -\tau, \tau \}^K$. Here $\text{conv}(\cdot)$ denotes the convex hull operation. We thus have
\[
Q(\mbb\theta) - Q(0) \geq \min_{\xi \in  \{-\tau, \tau\}^K} \frac{1}{2} \| D_{\tilde{w}} P_{\tilde{w}}( \xi -  \mbb\theta) \|_2^2 - \frac{1}{2} \| D_{\tilde{w}} P_{\tilde{w}} \xi \|_2^2 + \rho ( \theta_{(K)} - \theta_{(1)}).
\]
Let us take $(\tilde{\mbb\theta}, \tilde{\xi}) \in \R^K \times \{-\tau, \tau\}^K$ to be a minimiser of the RHS.

	Note that if we have $\tilde{\xi}_{j} = \tilde{\xi}_{k}$ then we may take $\tilde{\theta}_{j} = \tilde{\theta}_{k}$. Indeed, we may construct $\check{\mbb\theta}\in \R^K$ by setting 
	\[
	\check{\theta}_l = \begin{cases}
		\argmin_{b \in \{ \tilde{\theta}_j, \tilde{\theta}_k \} } ( \tilde{\xi}_j - b)^2 &\text{ for } l = j, k \\
		\tilde{\theta}_l &\text{ otherwise}.
	\end{cases}
	\]
	Since the penalty contribution from $\check{\mbb\theta}$ is not greater than that of $\tilde{\mbb\theta}$, it follows that $Q(\check{\mbb\theta} ) \leq Q( \tilde{\mbb\theta})$.
	Thus, we can assume that entries of $\tilde{\mbb\theta}$ can take one of only two distinct values.

Next we write $\tilde{\alpha} = \sum_{k : \tilde{\xi}_k = - \tau} \tilde{w}_k$ and observe that $\tilde{w}^T\tilde{\xi} = (\check{w} - 2 \tilde{\alpha}) \tau$. Let us set $s = \min_k \tilde{\theta}_k$ and  $x = \max_k \tilde{\theta}_k - \min_k \tilde{\theta}_k$. Then we have
\begin{align}
F(\tilde{\mbb\theta}, \tilde{\xi}, \tilde{w}) =& \frac{1}{2} \tilde{\alpha} \{(2\tilde{\alpha}-1 - \check{w})\tau - s \}^2 + \frac{1}{2} ( \check{w} - \tilde{\alpha}) ( (2\tilde{\alpha} + 1 - \check{w}) \tau - s - x)^2 \notag  \\ 
	&+ \rho ( x ) - \frac{2}{\check{w}} \tilde{\alpha}(\check{w} - \tilde{\alpha})\tau^2 \notag \\
	=& \frac{1}{2\check{w}} \tilde{\alpha} ( \check{w} - \tilde{\alpha}) ( 2 \tau - x )^2 + \rho(x) - \frac{2}{\check{w}} \tilde{\alpha}(\check{w} - \tilde{\alpha})\tau^2 \notag \\
	=& \frac{\check{w}}{8}(2\tau - x)^2 + \rho(x) - \frac{1}{2}\tau^2 .
	\label{eq:ncfinalobjective}
\end{align}
In the second line above, we have solved for $s$ to find that
\begin{align*}
	s = \frac{1}{\check{w}} \{\tau ( 1 - \check{w} )( \check{w} - 2 \tilde{\alpha} ) + ( \tilde{\alpha} - \check{w}) x  \}.
\end{align*}
In the third line above, we have solved for $\tilde{\alpha}$ to obtain $\tilde{\alpha} = \check{w}/2$ and hence $\tilde \alpha  ( \check{w} - \tilde \alpha ) / \check{w} = \check{w} /4$. These follow from optimality of $\tilde{\mbb\theta}$ and $\tilde{w}$ respectively.
The result follows from applying Lemma~\ref{thm:quadoptlem}, setting $\kappa = \check{w}/4$.
\end{proof}

\subsection{Proof of Theorem~\ref{thm:multivaroracle}} \label{sec:multiproof}
\begin{proof}[\unskip\nopunct]
We begin by defining $P^0$ to be the orthogonal projection onto the linear space 
\begin{align*}
	V_0 = \left\{ \mu + \sum_{j=1}^{j} \sum_{k=1}^{K_j} \ind_{\{ X_{ij} = k \} } \theta_{jk} : (\mu, \theta) \in \R \times \Theta_0 \right\}.
\end{align*}
The residuals from the oracle least-squares fit are $(I - P^0)Y = (I - P^0)\varepsilon$.
The partial residuals $R^{(j)}$ as defined in \eqref{eq:firstpartialresiduals} for the $j$th variable are therefore
\begin{align}
R^{(j)}_i = \sum_{k=1}^{K_j} \ind_{\{ X_{ij} = k \}} \hat{\theta}^0_{jk} + \left[(I - P^0) \varepsilon\right]_i. \label{eq:thm6_univarmodel}
\end{align}

For $j = 1, \ldots, p$, we define $\bar{R}^{(j)}_k = \sum_{i=1}^n \ind_{\{X_{ij} = k \}}R^{(j)}_i / n_{jk}$ for $k = 1, \ldots, K_j$, reordering the labels such that $\bar{R}^{(j)}_1 \leq \cdots \leq \bar{R}^{(j)}_{K_j}$. We then aim to apply the arguments of Theorem~\ref{thm:univarglobal} to $\hat{\mbb\theta}_j$ defined by
\begin{align}
	\hat{\mbb\theta}_j \in \argmin_{\mbb\theta_j \in \Theta_j} \frac{1}{2} \sum_{k=1}^{K_j} w_{jk} \left( \bar{R}^{(j)}_k - \theta_{jk} \right)^2 + \sum_{k=1}^{K_j - 1} \rho( \theta_{j k+1} - \theta_{jk}). \label{eq:multivarproof_innerobjective}
\end{align}
In order to do this, we define the events (for some $\tau_j$ to be determined later):
\begin{align}
\Lambda^{(1)}_j &= \left\{ | \hat{\theta}^0_{jk_l} - \theta^0_{jk_l} | \leq \tau_j : l = 1, \ldots, s_j \right\} \label{eq:multivar_lambda1} \\
\Lambda^{(2)}_{jk} &=  \left\{   \left\vert \frac{1}{n_{jk}} \sum_{i=1}^n  \ind_{\lbrace X_{ij} = k \rbrace}  ((I - P^0) \varepsilon)_i \right\vert < \frac{1}{2} \sqrt{ \eta {\gamma_*}_j s_j} \lambda_j  \right\}  .\label{eq:multivar_lambda2}
\end{align}

On the intersection of events $\cap_{k=1}^{K_j} \Lambda^{(2)}_{jk}$, we have that $| \bar{R}^{(j)}_k - \hat{\theta}^0_{jk} | < \sqrt{ \eta {\gamma_*}_j s_j} \lambda_j  / 2$. By following an identical approach to that involved in computing \eqref{eq:univartailboundcompute}, we have that
\begin{align*}
	\mathbb{P} \left( \cap_{k=1}^{K_j} \Lambda^{(2)}_{jk} \right) &\geq 1 - 2 \exp\left( - \frac{ n w_{j,\text{min}}\eta {\gamma_*}_j s_j \lambda_j^2}{8 \sigma^2} + \log(K_j) \right),
\end{align*}
where we recall that $w_{jk} = n_{jk} / n$.

We now turn our attention to the event $\Lambda^{(1)}_j$. Note that if $s_j = 1$, then this is immediately satisfied since $\hat{\mbb\theta}^0_j = \mbb\theta^0_j = 0$. If $s_j > 1$, we use that the oracle least squares estimate  $ \hat{\mbb\theta}^0  = A Y$ is a linear transformation $A$ of the responses $(Y_i)_{i=1}^n$. For each $i = 1, \ldots, n$, $Y_i$ has an independent (non-central) sub-Gaussian distribution with parameter $\sigma$.
Therefore for each $k = 1, \ldots, K_j$, $\hat{\theta}^0_{jk} - \theta^0_{jk}$ also has a sub-Gaussian distribution, with parameter at most $\sigma  c_\text{min}^{-1/2}$ (recalling that $c_\text{min} = ( \max_l (AA^T)_{ll})^{-1}$). This enables us to show that
\begin{align*}
	\mathbb{P}\left( \Lambda^{(1)}_j \right) &\geq 1 - 2 \exp \left( - \frac{ c_\text{min} \tau_j^2}{2 \sigma^2 }  + \log( s_j)\right).
\end{align*}

We can now set $\tau_j = \sqrt{ \eta {\gamma_*}_j s_j} \lambda_j / 2$. From \eqref{eq:multivarsigsep} and the triangle inequality, on the event $\Lambda^{(1)}_j$ we have that
\begin{align*}
	\Delta( \hat{\mbb\theta}^0_j) &\geq \Delta( \mbb\theta^0_j ) - \sqrt{ \eta {\gamma_*}_j s_j} \lambda_j \\
	&\geq 3 \left(1 + \frac{\sqrt{2}}{\eta} \right) \sqrt{\gamma_j \gamma_j^*} \lambda_j .
\end{align*}
Thus, on the intersection of events $\Lambda^{(1)}_j \cap \left(  \cap_{k=1}^{K_j} \Lambda^{(2)}_{jk} \right)$, we can proceed as in the proof of Theorem~\ref{thm:univarglobal} from \eqref{eq:univarproof_innerobjective},
to conclude that $\hat{\mbb\theta}_j = \hat{\mbb\theta}^0_j$.

It immediately follows that on the intersection of events $\cap_{j=1}^p \left( \Lambda^{(1)}_j \cap \left(  \cap_{k=1}^{K_j} \Lambda^{(2)}_{jk}  \right) \right)$, we have $\hat{\mbb\theta} = \hat{\mbb\theta}^0$. By a union bound, this occurs with probability at least
\begin{align*}
	\mathbb{P} \left( \cap_{j=1}^p \left( \Lambda^{(1)}_j \cap \left(  \cap_{k=1}^{K_j} \Lambda^{(2)}_{jk} \right) \right) \right) &\geq 1 - 2\sum_{j=1}^p \Bigg[ \exp\left( - \frac{ n_{j,\text{min}}\eta {\gamma_*}_j s_j \lambda_j^2}{8 \sigma^2} + \log(K_j) \right)\\
	& \hspace{2cm} +  \exp \left( - \frac{c_\text{min} \eta {\gamma_*}_j s_j \lambda_j^2}{8 \sigma^2 }  + \log( s_j)\right) \Bigg] \\
	&\geq 1 - 4 \sum_{j=1}^p \exp \left( - \frac{(n_{j, \text{min}} \wedge c_{\text{min}}) \eta {\gamma_*}_j s_j \lambda_j^2}{8 \sigma^2} + \log(K_j) \right),
\end{align*}
where in the final line we use $s_j \leq K_j$. 
\end{proof}

\section{Additional experimental information} \label{sec:extraexp}
\subsection{Details of methods} \label{sec:methoddetails}
\subsubsection*{Tree-based methods}
We used the implementation of the random forest procedure \citep{breiman2001random} in the R package \texttt{randomForest} \citep{randomForest} with default settings. CART \citep{breiman1984classification} was implemented in the R package \texttt{rpart} \citep{rpart}, with pruning according to the 1-SE rule (as described in the package documentation).

\subsubsection*{CAS-ANOVA}

The CAS-ANOVA estimator $\hat{\mbb\theta}^{\text{cas}}$ optimises over $(\mu, \mbb\theta)$ a sum of a squared loss term \eqref{eq:least_squares} and an all-pairs penalty term \eqref{eq:all_pairs}. 
In particular, \cite{bondell2009simultaneous} consider two regimes of weight vectors $w$. 
The first is not data-dependent and sets $w_{j,k_1 k_2} = (K_j + 1)^{-1} \sqrt{n_{j k_1} + n_{j k_2} }$.
The second, `adaptive CAS-ANOVA', uses the ordinary least squares estimate for $\mbb\theta$ to scale the weights.
Here, $w_{j,k_1 k_2 } =  (K_j + 1)^{-1} \sqrt{n_{j k_1} + n_{j k_2} } | \hat{\theta}^{\text{OLS}}_{j k_1} - \hat{\theta}^{\text{OLS}}_{j k_2}|^{-1} $. 

Here we introduce a new variant of adaptive CAS-ANOVA, following ideas in \cite{buhlmann2011statistics} for a 2-stage adaptive Lasso procedure.
Instead of using the ordinary least squares estimate $\hat{\mbb\theta}^{\text{OLS}}$ in the above expression, an initial (standard) CAS-ANOVA estimate is used to scale the weights, with $\lambda$ selected for the initial estimate by 5-fold cross-validation.
In simulations, this outperformed the adaptive CAS-ANOVA estimate using ordinary least squares initial estimates so in the interests of time and computational resources this was omitted from the simulation study.
Henceforth adaptive CAS-ANOVA will refer to this 2-stage procedure.

The authors describe the optimisation of $\hat{\mbb\theta}^{\text{cas}}$ as a quadratic programming problem, which was solved using the R package \texttt{rosqp} \citep{rosqp}. 
Here we used our own implementation of the quadratic programming approach described by the authors. We found it considerably faster than the code available from the authors' website, and uses ADMM-based optimisation \citep{boyd2011distributed} tools not available at the time of its publication. We also found, as discussed in Section 5.1 of \cite{maj2015delete}, that we could not achieve the best results using the publicly available code.
Lastly, using our own implementation allowed us to explore a modification of CAS-ANOVA using the more modern approach of adaptive weights via a 2-stage procedure \citep{buhlmann2011statistics} to compare SCOPE to a wider class of all-pairs penalty procedures.

For large categorical variables, solutions are slow to compute and consume large amounts of memory.
In the case of binary response, CAS-ANOVA models were fitted iterating a locally quadratic approximation to the loss function.

\subsubsection*{DMR}
The DMR algorithm \citep{maj2015delete} is implemented in the R package \texttt{DMRnet} \citep{DMRnet}. The degrees of freedom in the model is decided by 5-fold cross-validation.
It is based on pruning variables using the Group Lasso \citep{yuan2006model} to obtain at a low-dimensional model, then performing backwards selection based on ranking $t$-statistics for hypotheses corresponding to each fusion between levels in categorical variables.

The cross-validation routine appeared to error when all levels of all categorical variables were not present in one of the folds. In Section~\ref{sec:adultds1}, cross-validation was therefore not possible so model selection was performed based on Generalized Information Criterion (GIC) \citep{zheng1995consistent}. In all other examples, models were selected via 5-fold cross-validation.

\subsubsection*{Bayesian effect fusion}
In Section~\ref{sec:lowdimsims} we include Bayesian effect fusion \citep{pauger2019bayesian}, implemented in the R package \texttt{effectFusion} \citep{effectFusion}. 
Coefficients within each categorical variable were modelled with a sparse Gaussian mixture model.
The posterior mean was estimated with 1000 samples.

\subsubsection*{Lasso}
In Section~\ref{sec:highdimsims} we also include Lasso \citep{tibshirani1996regression} fits, to serve as a reference point.
Of course, this is unsuitable for models where levels in categorical variables should be clustered together, but the advanced development of the well-known R package \texttt{glmnet} \citep{friedman2010regularization} nevertheless sees its use in practice.

In order to make the fit symmetric across the categories within each variable, models were fitted with an unpenalised intercept and featuring dummy variables for all of the categories within each variable. This is instead of the corner-point dummy variable encoding of factor variables that is commonly used when fitting linear models. Models are fitted and cross-validated with \texttt{cv.glmnet} using the default settings.

\subsubsection*{SCOPE}
For SCOPE, we have provided the R package \texttt{CatReg} \citep{CatReg}. The univariate update step (see Section~\ref{sec:univariate}) is implemented in C++ using Rcpp \citep{rcpp}, with models fitted using a wrapper in R. For the binary response case, the outer loop to iterate the local quadratic approximations in the proximal Newton algorithm are done within R.
In the future, performance could be improved by iterating the univariate update step (and the local quadratic approximations, as in Sections~\ref{sec:adultds1} and \ref{sec:adultds2}) within some lower-level language.
In higher-dimensional experiments, SCOPE was slowed by cycling through all the variables; an active-set approach to this could make it faster still.

\subsection{Further details of numerical experiments} \label{sec:furtherexpdet}
For the experiments in Section~\ref{sec:simulations}, we define the signal-to-noise ratio (SNR) as $\sigma_S / \sigma$, where $\sigma_S$ is the standard deviation of the signal $Y - \varepsilon$, and $\sigma$ is the standard deviation of the noise $\varepsilon$.

\subsubsection{Low-dimensional simulations} \label{sec:lowdimsims2}
In Table~\ref{fig:lowdimdimtime} we include details of computation time and dimension of the fitted models. Figure~\ref{fig:lowdimsimsboxplot} visualises the results also summarised in Table~\ref{tab:simmspetable} in the main paper.

\begin{figure}
\centering
\begin{minipage}[b]{0.95\textwidth} 
\includegraphics[width=\textwidth]{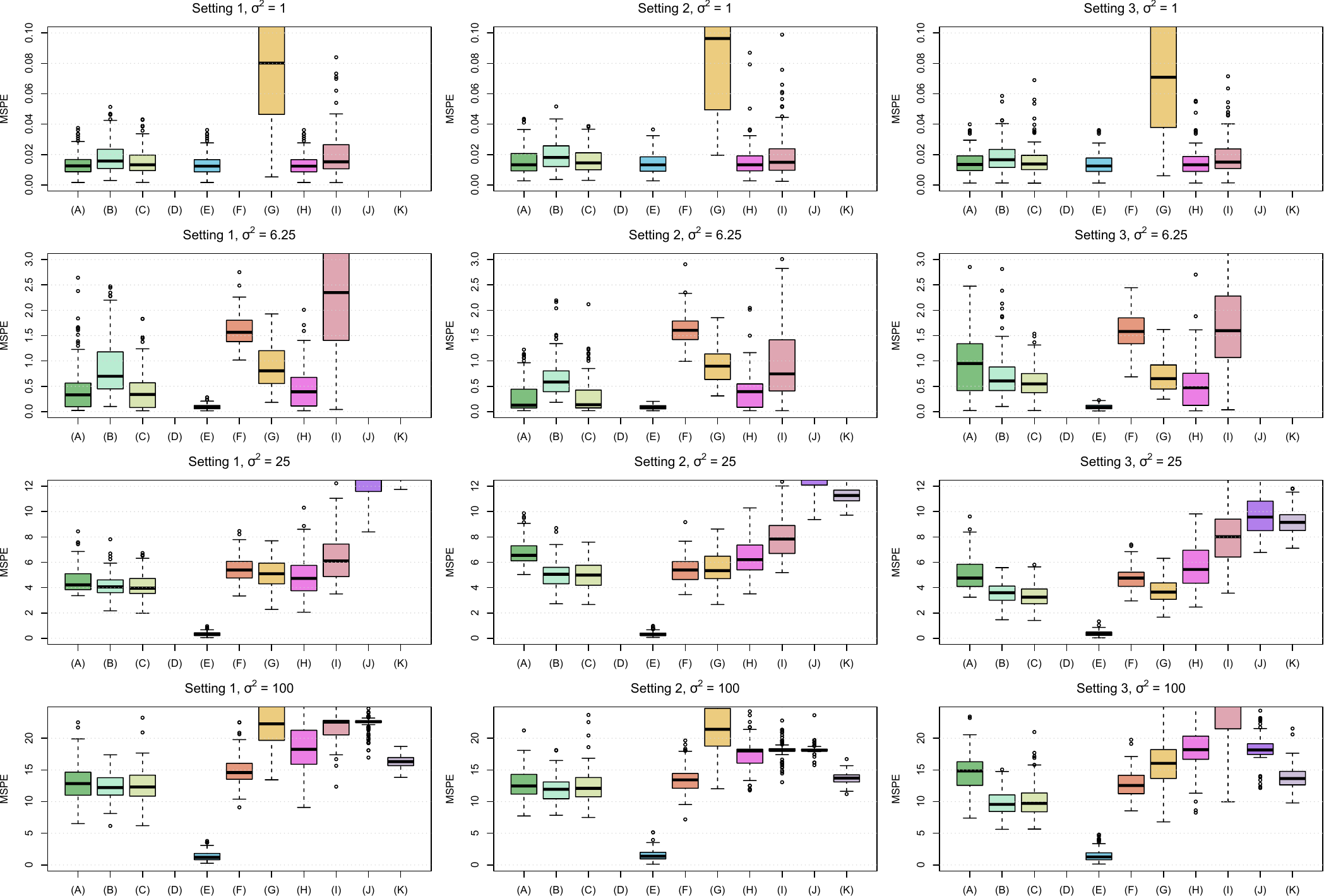}
\end{minipage}
\caption{Prediction performance of various methods: (A) SCOPE-8; (B) SCOPE-32; (C) SCOPE-CV; (D) Linear regression; (E) Oracle least squares; (F) CAS-ANOVA; (G) Adaptive CAS-ANOVA; (H) DMR; (I) BEF; (J) CART; (K) RF. Note that some `boxes' are not visible in some of the plots; this is due to the MSPE in the tests being beyond the range of the plot.}	
 \label{fig:lowdimsimsboxplot}
\end{figure}

\begin{table}[h!]
	\centering
	
	\footnotesize
	\begin{tabular}{r S[table-format=3.1] S[table-format=3.1] S[table-format=3.1] S[table-format=3.1] c S[table-format=4.2]}
		&  \multicolumn{4}{c}{Mean fitted model dimension} && \\
		$\sigma^2$: & \multicolumn{1}{c}{1} & \multicolumn{1}{c}{6.25} & \multicolumn{1}{c}{25} & \multicolumn{1}{c}{100} & & \multicolumn{1}{c}{Mean computation time (s)} \\ \hline \hline
		SCOPE-8  &7.2 & 8.5 & 4.7 & 4.3 & & 16 \\ \hline
		SCOPE-32  & 9.6 & 12.6 & 13.2 & 9.8 & & 48 \\ \hline
		SCOPE-CV & 7.9 & 10.3 & 16.8 & 10.9 & & 68  \\ \hline
		Oracle least squares  & 7.0 & 7.0 & 7.0 & 7.0 & & 0.00 \\ \hline 
		Linear regression  & 231.0 & 231.0 & 231.0 & 231.0 & &  0.01 \\ \hline
		CAS-ANOVA  & 35.2 & 70.0 & 74.3 & 52.4 & & 4679 \\ \hline
		Adaptive CAS-ANOVA  & 13.4 & 31.3 & 36.9 & 32.5 & & 9659 \\ \hline
		DMR & 7.0 & 7.2 & 5.3 & 2.7 & & 21 \\ \hline
		BEF  & 7.3 & 6.3 & 4.1 & 2.0 &  & 975 \\ \hline
		CART &  & & & && 0.01  \\ \hline
		RF  &  & & & &  & 0.66
	\end{tabular}
	\caption{Mean fitted model dimension and computation time for the various methods.}
	\label{fig:lowdimdimtime}
\end{table}

\FloatBarrier
\subsubsection{High-dimensional simulations} \label{sec:highdimsims2}
Here we include additional results relating to the high-dimensional experiments. Figure~\ref{fig:highdimsimsboxplot} visualises the results in Table~\ref{tab:hdsimmspetable} of the main paper.
\begin{figure}
\centering
\begin{minipage}[b]{0.95\textwidth} 
\includegraphics[width=\textwidth]{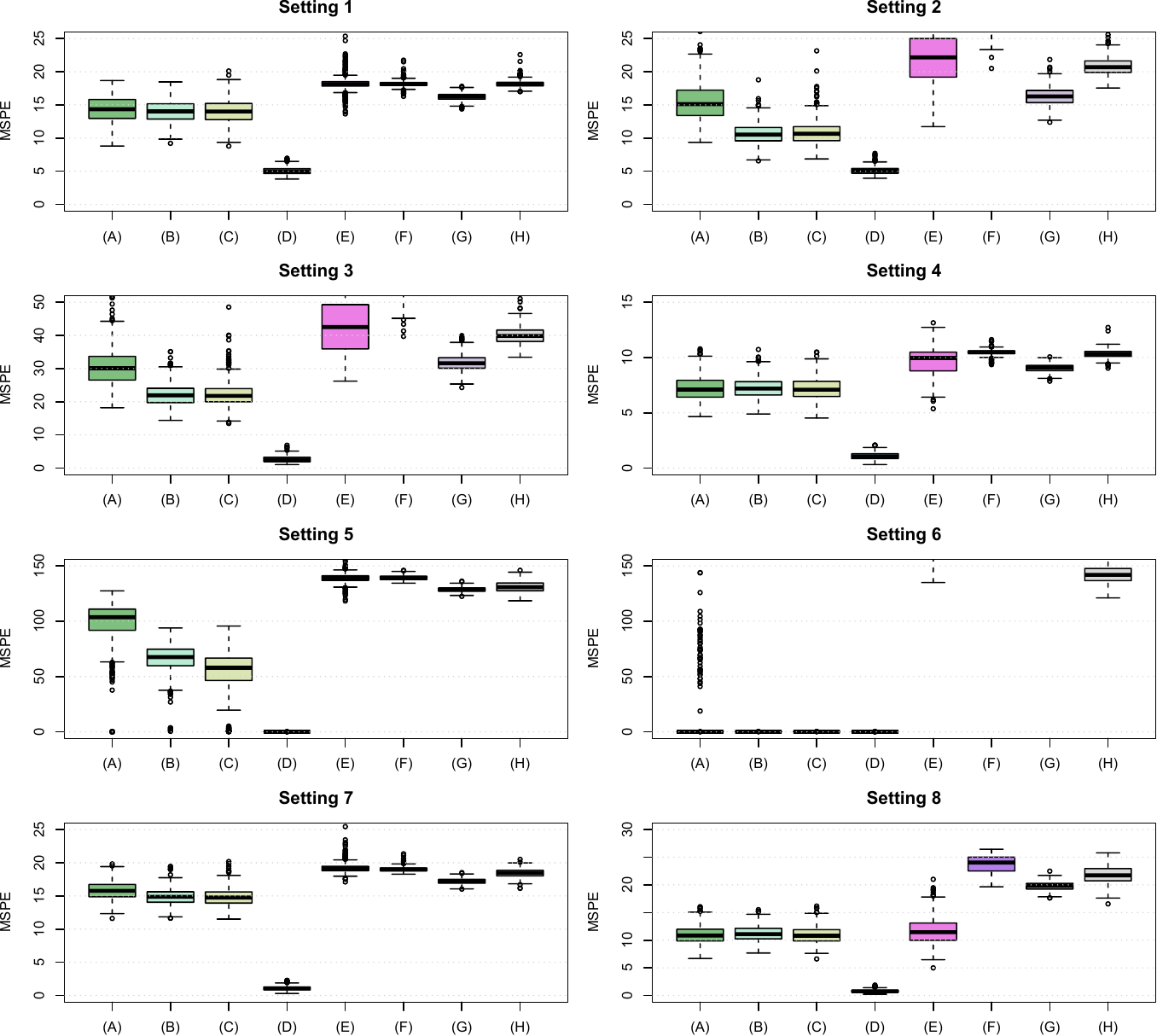}
\end{minipage}
\caption{Prediction performance of various methods: (A) SCOPE-8; (B) SCOPE-32; (C) SCOPE-CV; (D) Oracle least squares; (E) DMR; (F) CART; (G) RF; (H) Lasso. Note that some `boxes' are not visible in some of the plots; this is due to the MSPE in the tests being beyond the range of the plot.}	
 \label{fig:highdimsimsboxplot}
\end{figure}

\begin{table}[h!]
	\centering
	\footnotesize
	
	\begin{tabular}{ r S[table-format=4.1] S[table-format=4.1] S[table-format=4.1] S[table-format=4.1] S[table-format=4.1] S[table-format=4.1] S[table-format=4.1] S[table-format=4.1]}
		Setting: & \multicolumn{1}{c}{1} & \multicolumn{1}{c}{2} & \multicolumn{1}{c}{3} & \multicolumn{1}{c}{4} & \multicolumn{1}{c}{5} &  \multicolumn{1}{c}{6} & \multicolumn{1}{c}{7} & \multicolumn{1}{c}{8}   \\ \hline \hline
		SCOPE-8  & 224 & 322 & 348 & 76 & 234 & 518 & 209 & 175 \\ \hline
		SCOPE-32  &134 & 341 & 502 & 51 & 283 & 650 & 113 & 161   \\ \hline
		SCOPE-CV &  951 & 1739 & 2450 & 332 & 1516 & 2892 & 767 & 902\\ \hline
		DMR & 26 & 38 & 39 & 26 & 30 & 36 & 30 & 29 \\ \hline
		CART & 0.1 & 0.1 & 0.1 & 0.0 & 0.1 & 0.1 & 0.1 & 0.1  \\ \hline
		RF & 5.7 & 5.7 & 5.9 & 2.7 & 5.8 & 5.8 & 5.9 & 5.8 \\ \hline
		Lasso & 1.5 & 1.5 & 1.6 & 1.2 & 1.4 & 1.5 & 1.5 & 1.5 
	\end{tabular}
	\caption{Mean computation time (s)}
	\label{fig:highdimtime}
\end{table}

\begin{table}[h!]
	\centering
	\footnotesize
	
	\begin{tabular}{r S[table-format=3.1] S[table-format=3.1] S[table-format=3.1] S[table-format=3.1] S[table-format=3.1] S[table-format=3.1] S[table-format=3.1] S[table-format=3.1]}
		Setting: & \multicolumn{1}{c}{1} & \multicolumn{1}{c}{2} & \multicolumn{1}{c}{3} & \multicolumn{1}{c}{4} & \multicolumn{1}{c}{5} &  \multicolumn{1}{c}{6} & \multicolumn{1}{c}{7} & \multicolumn{1}{c}{8}   \\ \hline \hline
		SCOPE-8  & 6.9 & 9.4 & 9.8 & 6.9 & 21.3 & 27.1 & 9.3 & 7.2 \\ \hline
		SCOPE-32  &20.7 & 37.5 & 38.0 & 19.9 & 75.8 & 26.1 & 32.9 & 31.3   \\ \hline
		SCOPE-CV &  21.4 & 40.4 & 40.8 & 19.5 & 103.7 & 26.2 & 36.6 & 17.9\\ \hline
		DMR & 1.9 & 4.9 & 4.7 & 3.4 & 3.7 & 22.8 & 2.3 & 7.5 \\ \hline
		Lasso & 15.7 & 167.1 & 152.0 & 32.7 & 143.7 & 469.7 & 35.8 & 82.8
	\end{tabular}
	\caption{Mean fitted model dimension}
	\label{fig:highdimdim}
\end{table}

\begin{table}[h!]
	\centering
	\footnotesize
	\begin{tabular}{r r S[table-format=1.3] S[table-format=1.3] S[table-format=1.3] S[table-format=1.3] S[table-format=1.3]}
		Setting &  $\gamma$: & \multicolumn{1}{c}{4} & \multicolumn{1}{c}{8} & \multicolumn{1}{c}{16} & \multicolumn{1}{c}{32} & \multicolumn{1}{c}{64}    \\ \hline \hline
		1 && 0.028 & 0.290 & 0.196 & 0.138 & $\mathbf{0.348}$  \\ \hline
		2 && 0.002 & 0.016 & 0.234 & 0.298 & $\mathbf{0.450}$ \\ \hline
		3 && 0.006 & 0.012 & 0.286 & 0.248 & $\mathbf{0.448}$ \\ \hline
		4 && 0.030 & $\mathbf{0.356}$ & 0.244 & 0.100 & 0.270 \\ \hline
		5 && 0.000 & 0.000 & 0.026 & 0.070 & $\mathbf{0.904}$ \\ \hline
		6 && 0.000 & 0.000 & 0.464 & $\mathbf{0.534}$ & 0.002 \\ \hline
		7 && 0.006 & 0.092 & 0.234 & 0.144 & $\mathbf{0.524}$ \\ \hline
		8 && 0.264 & $\mathbf{0.446}$ & 0.102 & 0.018 & 0.170 
	\end{tabular}
	\caption{Proposition of times each $\gamma$ was selected by cross-validation.}
	\label{fig:highdimgamma}
\end{table}
\FloatBarrier

\bibliographystyle{abbrvnat}

\begin{thebibliography}{99}
\providecommand{\natexlab}[1]{#1}
\providecommand{\url}[1]{\texttt{#1}}
\expandafter\ifx\csname urlstyle\endcsname\relax
  \providecommand{\doi}[1]{doi: #1}\else
  \providecommand{\doi}{doi: \begingroup \urlstyle{rm}\Url}\fi


\bibitem[Bondell and Reich(2009)]{bondell2009simultaneous}
H.~D. Bondell and B.~J. Reich.
\newblock Simultaneous factor selection and collapsing levels in {ANOVA}.
\newblock \emph{Biometrics}, 65\penalty0 (1):\penalty0 169--177, 2009.



\bibitem[Breheny and Huang(2011)]{breheny2011coordinate}
P.~Breheny and J.~Huang.
\newblock Coordinate descent algorithms for nonconvex penalized regression,
  with applications to biological feature selection.
\newblock \emph{The Annals of Applied Statistics}, 5\penalty0 (1):\penalty0
  232, 2011.

\bibitem[Breheny and Huang(2015)]{breheny2015group}
P.~Breheny and J.~Huang.
\newblock Group descent algorithms for nonconvex penalized linear and logistic
  regression models with grouped predictors.
\newblock \emph{Statistics and Computing}, 25\penalty0 (2):\penalty0 173--187,
  2015.

\bibitem[Breiman(2001)]{breiman2001random}
L.~Breiman.
\newblock Random forests.
\newblock \emph{Machine Learning}, 45\penalty0 (1):\penalty0 5--32, 2001.

\bibitem[Breiman et~al.(1984)Breiman, Friedman, Stone, and
  Olshen]{breiman1984classification}
L.~Breiman, J.~Friedman, C.~Stone, and R.~Olshen.
\newblock \emph{Classification and Regression Trees}.
\newblock The Wadsworth and Brooks-Cole statistics-probability series. Taylor
  \& Francis, 1984.


\bibitem[Calinski and Corsten(1985)]{calinski1985clustering}
T.~Calinski and L.~Corsten.
\newblock Clustering means in anova by simultaneous testing.
\newblock \emph{Biometrics}, pages 39--48, 1985.

\bibitem[Chen and Chen(2008)]{chen2008extended}
J.~Chen and Z.~Chen.
\newblock Extended Bayesian information criteria for model selection with large model spaces.
\newblock \emph{Biometrika}, 95\penalty0(3):759--71, 2008.

\bibitem[Chiquet et~al.(2017)Chiquet, Gutierrez, and Rigaill]{chiquet2017fast}
J.~Chiquet, P.~Gutierrez, and G.~Rigaill.
\newblock Fast tree inference with weighted fusion penalties.
\newblock \emph{Journal of Computational and Graphical Statistics}, 26\penalty0
  (1):\penalty0 205--216, 2017.

\bibitem[COVID-19 Forecast Hub(2020)]{covid19}
COVID-19 Forecast Hub.
\newblock \emph{URL: https://covid19forecasthub.org}, 2020.

\bibitem[Dua and Graff(2019)]{dua2019}
D.~Dua and C.~Graff.
\newblock {UCI} machine learning repository, 2019.


\bibitem[Fan and Li(2001)]{fan2001variable}
J.~Fan and R.~Li.
\newblock Variable selection via nonconcave penalized likelihood and its oracle
  properties.
\newblock \emph{Journal of the American Statistical Association}, 96\penalty0
  (456):\penalty0 1348--1360, 2001.
  
 \bibitem[Fan et~al.(2018)Fan, Liu, Sun, and Zhang]{fan2018ilamm}
 J.~Fan, H.~Liu, Q.~Sun, and T.~Zhang.
\newblock  I-LAMM for sparse learning: Simultaneous control of algorithmic complexity and statistical error.
 \newblock \emph{The Annals of Statistics}, 46\penalty0 (2):\penalty0 814--841, 2018.

\bibitem[Friedman et~al.(2010)Friedman, Hastie, and
  Tibshirani]{friedman2010regularization}
J.~Friedman, T.~Hastie, and R.~Tibshirani.
\newblock Regularization paths for generalized linear models via coordinate
  descent.
\newblock \emph{Journal of Statistical Software}, 33\penalty0 (1):\penalty0
  1--22, 2010.

\bibitem[Gertheiss and Tutz(2010)]{gertheiss2010sparse}
J.~Gertheiss and G.~Tutz.
\newblock Sparse modeling of categorial explanatory variables.
\newblock \emph{The Annals of Applied Statistics}, 4\penalty0 (4):\penalty0
  2150--2180, 2010.

\bibitem[Hocking et~al.(2011)Hocking, Joulin, Bach, and
  Vert]{hocking2011clusterpath}
T.~D. Hocking, A.~Joulin, F.~Bach, and J.-P. Vert.
\newblock Clusterpath an algorithm for clustering using convex fusion
  penalties.
\newblock In \emph{28th International Conference on Machine Learning}, page~1,
  2011.

\bibitem[Hu et~al.(2018)Hu, O'Hagan, and Murphy]{Hu2018}
S.~Hu, A.~O'Hagan, and T.~B. Murphy.
\newblock Motor insurance claim modelling with factor collapsing and bayesian
  model averaging.
\newblock \emph{Stat}, 7\penalty0 (1):\penalty0 e180, 2018.

\bibitem[Hubert and Arabie(1985)]{hubert1985comparing}
L.~Hubert and P.~Arabie.
\newblock Comparing partitions.
\newblock \emph{Journal of Classification}, 2\penalty0(1):193--218, 1985.

\bibitem[Jensen et~al.(2012)Jensen, Jensen, and Brunak]{jensen2012mining}
P.~B. Jensen, L.~J. Jensen, and S.~Brunak.
\newblock Mining electronic health records: towards better research
  applications and clinical care.
\newblock \emph{Nature Reviews Genetics}, 13\penalty0 (6):\penalty0 395, 2012.


\bibitem[Kaggle(2015)]{prudent}
Kaggle.
\newblock Prudential Life Insurance Assessment.
\newblock \emph{URL: https://www.kaggle.com/c/prudential-life-insurance-assessment/data}, 2015.

\bibitem[Liaw and Wiener(2002)]{randomForest}
A.~Liaw and M.~Wiener.
\newblock Classification and regression by randomforest.
\newblock \emph{R News}, 2\penalty0 (3):\penalty0 18--22, 2002.

\bibitem[Loh and Wainwright(2012)]{loh2012highdimensional}
P.-L.~Loh and M.~J.~Wainwright.
\newblock High-dimensional regression with noisy and missing data: Provable guarantees with nonconvexity.
\newblock \emph{The Annals of Statistics}, 40\penalty0 (3):\penalty0 1637--1664, 2012.

\bibitem[Loh and Wainwright(2015)]{loh2015regularized}
P.-L.~Loh and M.~J.~Wainwright.
  \newblock Regularized M-estimators with nonconvexity: Statistical and algorithmic theory for local optima.
  \newblock \emph{Journal of Machine Learning Research}, 16\penalty0 (19):\penalty0 559--616, 2015.

\bibitem[Lu and Zhou(2016)]{lu2016statistical}
Y.~Lu and H.~H. Zhou.
\newblock Statistical and computational guarantees of lloyd's algorithm and its
  variants.
\newblock \emph{arXiv preprint arXiv:1612.02099}, 2016.

\bibitem[Ma and Huang(2017)]{ma2017concave}
S.~Ma and J.~Huang.
\newblock A concave pairwise fusion approach to subgroup analysis.
\newblock \emph{Journal of the American Statistical Association}, 112\penalty0
  (517):\penalty0 410--423, 2017.

\bibitem[Maj-Ka{\'n}ska et~al.(2015)Maj-Ka{\'n}ska, Pokarowski, Prochenka,
  et~al.]{maj2015delete}
A.~Maj-Ka{\'n}ska, P.~Pokarowski, A.~Prochenka, et~al.
\newblock Delete or merge regressors for linear model selection.
\newblock \emph{Electronic Journal of Statistics}, 9\penalty0 (2):\penalty0
  1749--1778, 2015.

\bibitem[Mazumder et~al.(2011)Mazumder, Friedman, and
  Hastie]{mazumder2011sparsenet}
R.~Mazumder, J.~H. Friedman, and T.~Hastie.
\newblock Sparsenet: Coordinate descent with nonconvex penalties.
\newblock \emph{Journal of the American Statistical Association}, 106\penalty0
  (495):\penalty0 1125--1138, 2011.

\bibitem[Oelker et~al.(2015)Oelker, P{\"o}{\ss}necker, and
  Tutz]{oelker2015selection}
M.-R. Oelker, W.~P{\"o}{\ss}necker, and G.~Tutz.
\newblock Selection and fusion of categorical predictors with l 0-type
  penalties.
\newblock \emph{Statistical Modelling}, 15\penalty0 (5):\penalty0 389--410,
  2015.

\bibitem[Pauger and Wagner(2019)]{pauger2019bayesian}
D.~Pauger and H.~Wagner.
\newblock Bayesian effect fusion for categorical predictors.
\newblock \emph{Bayesian Analysis}, 14\penalty0 (2):\penalty0 341--369, 2019.

\bibitem[Pauger et~al.(2019)Pauger, Leitner, Wagner, and
  Malsiner-Walli]{effectFusion}
D.~Pauger, M.~Leitner, H.~Wagner, and G.~Malsiner-Walli.
\newblock \emph{effectFusion: Bayesian Effect Fusion for Categorical
  Predictors}, 2019.
\newblock R package version 1.1.1.



\bibitem[Prochenka-Sołtys and Pokarowski(2018)]{DMRnet}
A.~Prochenka-Sołtys and P.~Pokarowski.
\newblock \emph{DMRnet: Delete or Merge Regressors Algorithms for Linear and
  Logistic Model Selection and High-Dimensional Data}, 2018.
\newblock R package version 0.2.0.



\bibitem[R Core Team(2020)]{rlang}
  R Core Team.
  \newblock R: A language and environment for statistical
  computing. R Foundation for Statistical Computing, Vienna, Austria.
  \newblock 2020.


\bibitem[Scott and Knott(1974)]{scott1974cluster}
A.~J. Scott and M.~Knott.
\newblock A cluster analysis method for grouping means in the analysis of
  variance.
\newblock \emph{Biometrics}, pages 507--512, 1974.

\bibitem[Stokell(2021)]{CatReg}
B.~Stokell.
\newblock CatReg: Solution Paths for Linear and Logistic Regression Models with Categorical Predictors, with SCOPE Penalty
\newblock \url{https://CRAN.R-project.org/package=CatReg}, 2021.


\bibitem[Therneau and Atkinson(2019)]{rpart}
T.~Therneau and B.~Atkinson.
\newblock \emph{rpart: Recursive Partitioning and Regression Trees}, 2019.
\newblock R package version 4.1-15.


\bibitem[Tibshirani(1996)]{tibshirani1996regression}
R.~Tibshirani.
\newblock Regression shrinkage and selection via the lasso.
\newblock \emph{Journal of the Royal Statistical Society: Series B
  (Methodological)}, 58\penalty0 (1):\penalty0 267--288, 1996.

\bibitem[Tibshirani et~al.(2005)Tibshirani, Saunders, Rosset, Zhu, and
  Knight]{tibshirani2005sparsity}
R.~Tibshirani, M.~Saunders, S.~Rosset, J.~Zhu, and K.~Knight.
\newblock Sparsity and smoothness via the fused lasso.
\newblock \emph{Journal of the Royal Statistical Society: Series B (Statistical
  Methodology)}, 67\penalty0 (1):\penalty0 91--108, 2005.
  
  

\bibitem[Tseng(2001)]{tseng2001convergence}
P.~Tseng.
\newblock Convergence of a block coordinate descent method for
  nondifferentiable minimization.
\newblock \emph{Journal of Optimization Theory and Applications}, 109\penalty0
  (3):\penalty0 475--494, 2001.

\bibitem[Tukey(1949)]{tukey1949comparing}
J.~W. Tukey.
\newblock Comparing individual means in the analysis of variance.
\newblock \emph{Biometrics}, pages 99--114, 1949.

\bibitem[Tutz and Gertheiss(2016)]{tutz2016regularized}
G.~Tutz and J.~Gertheiss.
\newblock Regularized regression for categorical data.
\newblock \emph{Statistical Modelling}, 16\penalty0 (3):\penalty0 161--200,
  2016.

\bibitem[Wang et~al.(2014)Wang, Liu, and Zhang]{wang2014optimal}
Z.~Wang, H.~Liu, and T.~Zhang.
\newblock Optimal computational and statistical rates of convergence for sparse
  nonconvex learning problems.
\newblock \emph{The Annals of Statistics}, 42\penalty0 (6):\penalty0 2164,
  2014.

\bibitem[Yuan and Lin(2006)]{yuan2006model}
M.~Yuan and Y.~Lin.
\newblock Model selection and estimation in regression with grouped variables.
\newblock \emph{Journal of the Royal Statistical Society: Series B (Statistical
  Methodology)}, 68\penalty0 (1):\penalty0 49--67, 2006.

\bibitem[Zhang(2010)]{zhang2010nearly}
C.-H. Zhang.
\newblock Nearly unbiased variable selection under minimax concave penalty.
\newblock \emph{The Annals of Statistics}, 38\penalty0 (2):\penalty0 894--942,
  2010.
  
\bibitem[Zhao et~al.(2018)Zhao, Liu, and Zhang]{zhao2018pathwise}
T.~Zhao, H.~Liu, and T.~Zhang.
\newblock Pathwise coordinate optimization for sparse learning: Algorithm and theory.
\newblock \emph{The Annals of Statistics}, 46\penalty0 (1):\penalty0 180--218,
  2018.

\bibitem[Zou(2006)]{zou2006adaptive}
H.~Zou.
\newblock The adaptive lasso and its oracle properties.
\newblock \emph{Journal of the American Statistical Association}, 101\penalty0
  (476):\penalty0 1418--1429, 2006.

\end{thebibliography}

\begin{thebibliography}{40}
\providecommand{\natexlab}[1]{#1}
\providecommand{\url}[1]{\texttt{#1}}
\expandafter\ifx\csname urlstyle\endcsname\relax
  \providecommand{\doi}[1]{doi: #1}\else
  \providecommand{\doi}{doi: \begingroup \urlstyle{rm}\Url}\fi

\bibitem[Anderson(2018)]{rosqp}
E.~Anderson.
\newblock \emph{rosqp: Quadratic Programming Solver using the 'OSQP' Library},
  2018.
\newblock R package version 0.1.0.

\bibitem[Bondell and Reich(2009)]{bondell2009simultaneous}
H.~D. Bondell and B.~J. Reich.
\newblock Simultaneous factor selection and collapsing levels in {ANOVA}.
\newblock \emph{Biometrics}, 65\penalty0 (1):\penalty0 169--177, 2009.

\bibitem[Boyd et~al.(2011)Boyd, Parikh, Chu, Peleato, Eckstein,
  et~al.]{boyd2011distributed}
S.~Boyd, N.~Parikh, E.~Chu, B.~Peleato, J.~Eckstein, et~al.
\newblock Distributed optimization and statistical learning via the alternating
  direction method of multipliers.
\newblock \emph{Foundations and Trends{\textregistered} in Machine learning},
  3\penalty0 (1):\penalty0 1--122, 2011.

\bibitem[Breheny and Huang(2011)]{breheny2011coordinate}
P.~Breheny and J.~Huang.
\newblock Coordinate descent algorithms for nonconvex penalized regression,
  with applications to biological feature selection.
\newblock \emph{The Annals of Applied Statistics}, 5\penalty0 (1):\penalty0
  232, 2011.


\bibitem[Breiman(2001)]{breiman2001random}
L.~Breiman.
\newblock Random forests.
\newblock \emph{Machine Learning}, 45\penalty0 (1):\penalty0 5--32, 2001.

\bibitem[Breiman et~al.(1984)Breiman, Friedman, Stone, and
  Olshen]{breiman1984classification}
L.~Breiman, J.~Friedman, C.~Stone, and R.~Olshen.
\newblock \emph{Classification and Regression Trees}.
\newblock The Wadsworth and Brooks-Cole statistics-probability series. Taylor
  \& Francis, 1984.

\bibitem[B{\"u}hlmann and Van De~Geer(2011)]{buhlmann2011statistics}
P.~B{\"u}hlmann and S.~Van De~Geer.
\newblock \emph{Statistics for high-dimensional data: methods, theory and
  applications}.
\newblock Springer Science \& Business Media, 2011.



\bibitem[Eddelbuettel and Fran\c{c}ois(2011)]{rcpp}
D.~Eddelbuettel and R.~Fran\c{c}ois.
\newblock {Rcpp}: Seamless {R} and {C++} integration.
\newblock \emph{Journal of Statistical Software}, 40\penalty0 (8):\penalty0
  1--18, 2011.



\bibitem[Friedman et~al.(2010)Friedman, Hastie, and
  Tibshirani]{friedman2010regularization}
J.~Friedman, T.~Hastie, and R.~Tibshirani.
\newblock Regularization paths for generalized linear models via coordinate
  descent.
\newblock \emph{Journal of Statistical Software}, 33\penalty0 (1):\penalty0
  1--22, 2010.





\bibitem[Liaw and Wiener(2002)]{randomForest}
A.~Liaw and M.~Wiener.
\newblock Classification and regression by randomforest.
\newblock \emph{R News}, 2\penalty0 (3):\penalty0 18--22, 2002.





\bibitem[Maj-Ka{\'n}ska et~al.(2015)Maj-Ka{\'n}ska, Pokarowski, Prochenka,
  et~al.]{maj2015delete}
A.~Maj-Ka{\'n}ska, P.~Pokarowski, A.~Prochenka, et~al.
\newblock Delete or merge regressors for linear model selection.
\newblock \emph{Electronic Journal of Statistics}, 9\penalty0 (2):\penalty0
  1749--1778, 2015.





\bibitem[Pauger and Wagner(2019)]{pauger2019bayesian}
D.~Pauger and H.~Wagner.
\newblock Bayesian effect fusion for categorical predictors.
\newblock \emph{Bayesian Analysis}, 14\penalty0 (2):\penalty0 341--369, 2019.

\bibitem[Pauger et~al.(2019)Pauger, Leitner, Wagner, and
  Malsiner-Walli]{effectFusion}
D.~Pauger, M.~Leitner, H.~Wagner, and G.~Malsiner-Walli.
\newblock \emph{effectFusion: Bayesian Effect Fusion for Categorical
  Predictors}, 2019.
\newblock R package version 1.1.1.

\bibitem[Prochenka-Sołtys and Pokarowski(2018)]{DMRnet}
A.~Prochenka-Sołtys and P.~Pokarowski.
\newblock \emph{DMRnet: Delete or Merge Regressors Algorithms for Linear and
  Logistic Model Selection and High-Dimensional Data}, 2018.
\newblock R package version 0.2.0.

\bibitem[Stokell(2021)]{CatReg}
B.~Stokell.
\newblock CatReg: Solution Paths for Linear and Logistic Regression Models with Categorical Predictors, with SCOPE Penalty
\newblock \url{https://CRAN.R-project.org/package=CatReg}, 2021.


\bibitem[Therneau and Atkinson(2019)]{rpart}
T.~Therneau and B.~Atkinson.
\newblock \emph{rpart: Recursive Partitioning and Regression Trees}, 2019.
\newblock R package version 4.1-15.
 

\bibitem[Tibshirani(1996)]{tibshirani1996regression}
R.~Tibshirani.
\newblock Regression shrinkage and selection via the lasso.
\newblock \emph{Journal of the Royal Statistical Society: Series B
  (Methodological)}, 58\penalty0 (1):\penalty0 267--288, 1996.




\bibitem[Yuan and Lin(2006)]{yuan2006model}
M.~Yuan and Y.~Lin.
\newblock Model selection and estimation in regression with grouped variables.
\newblock \emph{Journal of the Royal Statistical Society: Series B (Statistical
  Methodology)}, 68\penalty0 (1):\penalty0 49--67, 2006.




\bibitem[Zheng and Loh(1995)]{zheng1995consistent}
X.~Zheng and W.-Y. Loh.
\newblock Consistent variable selection in linear models.
\newblock \emph{Journal of the American Statistical Association}, 90\penalty0
  (429):\penalty0 151--156, 1995.

\end{thebibliography}

\end{cbunit}

\end{document}